\theoremstyle{Definition}
\newtheorem{theorem}{Theorem}
\newtheorem{definition}{Definition}
\newtheorem{lemma}{Lemma}
\newtheorem{claim}{Claim}
\newtheorem{corollary}{Corollary}
\newtheorem{remark}{Remark}
\newif\ifcomments
	\newcommand{\ben}[1]{{\textcolor{red}{[\bf Ben: #1]}}}
	\newcommand{\benedikt}[1]{{\textcolor{blue}{[\bf Benedikt: #1]}}}
			\newcommand{\ignore}[1]{}
	\newcommand{\ben}[1]{}
	\newcommand{\benedikt}[1]{}
		\newcommand{\ignore}[1]{}
\newcommand{\val}{\mathsf{val}}
\newcommand{\weight}{\mathsf{weight}}
\newcommand{\margval}{\Delta\val}
\newcommand{\margweight}{\Delta\weight}
\newcommand{\margdens}{\mathsf{density}\Delta}
\newcommand{\densbound}{\mathsf{densityb}}
\newcommand{\dens}{\mathsf{density}}
\newcommand{\cont}{\mathsf{cont}}
\newcommand{\RR}{\mathbb{R}}
\newcommand{\ZZ}{\mathbb{Z}}
\newcommand{\PP}{\mathbb{P}}
\newcommand{\NN}{\mathbb{N}}
\newcommand{\vecb}[1]{\mathbf{#1}}
\newcommand{\xvec}{\vecb{x}}
\title{Schwartz-Zippel for multilinear polynomials mod N}
\author{Benedikt B\"unz\footnote{benedikt@cs.stanford.edu}, Ben Fisch\footnote{benafisch@gmail.com}\\ Stanford University\\
}
\begin{document}

	\maketitle
	\abstract{
	We derive a tight upper bound on the probability over $\mathbf{x}=(x_1,\dots,x_\mu) \in \ZZ^\mu$ uniformly distributed in $ [0,m)^\mu$ that $f(\mathbf{x}) = 0 \bmod N$ for any $\mu$-linear polynomial $f \in \ZZ[X_1,\dots,X_\mu]$ co-prime to $N$. We show that for $N=p_1^{r_1},...,p_\ell^{r_\ell}$ this probability is bounded by 
	$\frac{\mu}{m} + \prod_{i=1}^\ell  I_{\frac{1}{p_i}}(r_i,\mu)$
	where $I$ is the regularized beta function. 
 Furthermore, we provide an inverse result that for any target parameter $\lambda$ bounds the minimum size of $N$ for which the probability that $f(\mathbf{x}) \equiv 0 \bmod N$ is at most $2^{-\lambda} + \frac{\mu}{m}$. For $\mu =1$ this is simply $N \geq 2^\lambda$. 
 For $\mu \geq 2$, $\log_2(N) \geq 8 \mu^{2}+ \log_2(2 \mu)\cdot \lambda$ the probability that $f(\xvec) \equiv 0 \bmod N$ is bounded by $2^{-\lambda} +\frac{\mu}{m}$. 
 We also present a computational method that derives tighter bounds for specific values of $\mu$ and $\lambda$. For example, our analysis shows that for $\mu=20$, $\lambda = 120$ (values typical in cryptography applications), and $\log_2(N)\geq 416$ the probability is bounded by $ 2^{-120}+\frac{20}{m}$. We provide a table of computational bounds for a large set of $\mu$ and $\lambda$ values.
}

\section{Introduction}
The famous DeMillo-Lipton-Schwartz–Zippel (DLSZ) lemma\cite{demillo1977probabilistic,zippel1979probabilistic,schwartz1980fast}\footnote{See this blog post for a detailed history of the lemma: \url{https://rjlipton.wpcomstaging.com/2009/11/30/the-curious-history-of-the-schwartz-zippel-lemma/}} states that for any field $\mathbb{F}$, non-empty finite subset $S \subseteq \mathbb{F}$, and non-zero $\mu$-variate polynomial $f$ over $\mathbb{F}$ of total degree $d$, the number of zeros of $f$ contained in $S^\mu$ is bounded by $d \cdot |S|^{\mu-1}$ (or equivalently, the probability that $f(\mathbf{x}) = 0$ for $\mathbf{x}$ sampled uniformly from $S^n$ is bounded by $\frac{d}{|S|}$). For $\mu = 1$ this simply follows from the Fundamental Theorem of Algebra, but for multivariate polynomials, the number of zeros over the whole field could be unbounded. The computational significance of this lemma is that sampling an element from $S$ only takes $n \cdot \log_2(|S|)$ random bits but the probability of randomly sampling a zero of $f$ from $S^n$ is exponentially small in $|S|$. One of its original motivations was an efficient randomized algorithm for polynomial identity testing, but it has since found widespread application in computer science. Polynomial identity testing is the canonical randomized algorithm and is famously hard to derandomize\cite{DBLP:journals/cc/KabanetsI04}.
 
The classical lemma applies more broadly to integral domains, but not to arbitrary commutative rings. As a simple counterexample, over the ring of integers modulo $N = 2p$ the polynomial $f(X) = pX \bmod N$ vanishes on half of the points in $[0,N)$. The lemma has been extended to commutative rings by restricting the set $S$ to special subsets in which the difference of any two elements is not a zero divisor~\cite{BishnoiCPS}. For the case of $\ZZ_N$, this means that the difference of any two elements in $S$ must be co-prime to $N$. Our present work explores the setting where $S$ is the contiguous interval $[0, m)$ and thus does not satisfy this special condition. Instead, we will restrict the polynomial $f$ to be co-prime with $N$, thus ruling out polynomials of the form $f(X) = u \cdot g(X)$ where $u$ is a zero-divisor as in the counterexample above. 

As a warmup, it is easy to see that any univariate linear polynomial $f(X)=c\cdot X+b$ co-prime to $N$ has at most one root modulo $N$. If there were two such roots $x_1 \not\equiv x_2 \bmod N$ then $c (x_1 - x_2) \equiv 0 \mod N$ implies $c$ is a zero divisor (i.e., $gcd(c, N) \neq 1$). Furthermore, $c \cdot x_1 \equiv -b \bmod N$ implies $-b = c \cdot x_1 + q \cdot  N$ for some $q \in \ZZ$, and thus, $gcd(c, N)$ also divides $b$. This would contradict the co-primality of $f$ and $N$. So for $x$ uniformly distributed in $ S= [0,m)$ the probability of $f(x)\equiv 0 \bmod N$ in this case is indeed at most $\frac{1}{|S|}$. Unfortunately, this does not appear to generalize nicely to polynomials of arbitrary degree. As an example, for $N = 2^\lambda$ the polynomial $f(X) = X^\lambda \bmod N$ vanishes on half of the points in $[0, N)$. 

On the other hand, we are able to generalize the lemma in a meaningful way to multivariate \emph{linear} polynomials (i.e, at most degree 1 in each variable). It turns out that the probability of sampling a zero from $S^\mu = [0,m)^\mu$ of a $\mu$-linear polynomial can be tightly bounded by $\frac{\mu}{|S|} + \epsilon$, where the error term $\epsilon$ is bounded by a product of regularized beta functions. 
We also formulate an inverse Schwartz-Zippel Lemma for composite showing that for all sufficiently large $N$ this error is negligibly small. In particular, for $\log N \geq 8\mu^2 + (1 + \log \mu) \lambda$ the error term is at most $2^{-\lambda}$, showing that the error decays exponentially. Our technique for deriving this threshold lower bound $t(\lambda, \mu)$ on $N$ for a target $\lambda$ formulates $t(\lambda, \mu)$ as the objective function of a knapsack problem. We derive an analytical solution by deriving bounds on the regularized beta function. We also apply a knapsack approximation algorithm to find tighter values of $t(\lambda, \mu)$ for specific values of $\mu$ and $\lambda$. 
 
\subsection{Applications and Open Problems}
We remark on one application of the Multi-Linear Composite Schwartz-Zippel lemma (LCSZ) and its inverse to a problem in cryptography, and postulate other potential cryptographic applications as open problems. Consider a rational multi-linear polynomial $g(X)=\frac{f(X)}{N}$ for $f\in \ZZ[X]$ and $N \in \NN$ coprime with $f$. The LCSZ gives a bound on the probability that $f(\xvec) \in \ZZ$ for a random point $\xvec$ as in this case $f(\xvec)\equiv 0 \bmod N$. Furthermore we can use the inverse lemma to show that if $f(\xvec)\in \ZZ$ for random point $\xvec$ then the probability that $N>B$ can be bounded.
This application of the inverse LCSZ is an essential component of a new\footnote{There was a gap in the original analysis~\cite{BlockHRRS21}.} security proof for the polynomial commitment DARK\cite{bunz2020transparent}. Additionally, it is conceivable that the inverse LCSZ could be used to strengthen the recent result called Dew\cite{ArunGLMS22}, a zero-knowledge proof system with constant proof size and efficient verifier. In particular, the current security proof requires parameters that result in quadratic prover time. There is hope that the security proof can be made to work with tighter parameters and a quasi-linear prover, using the inverse LCSZ.   

Further, the LCSZ could potentially also be used to improve other zero-knowledge proof systems for modular arithmetic over rings instead of prime fields. This can have benefits, such as enabling the use of more machine-friendly moduli (e.g., powers-of-two). For example, recent work \cite{AttemaCCDE22} generalized Bulletproofs and Compressed $\Sigma$-Protocols to work for cryptographic commitments to vectors over $\mathbb{Z}_n$ for general $n$ (not strictly prime). Their analysis also deals with multilinear polynomials over rings and uses the generalization of DLSZ~\cite{BishnoiCPS} that restricts the sampling domain $S$ to \emph{exceptional sets} (sets whose element differences are not zero-divisors). Since $\mathbb{Z}_n$ does not have large enough exceptional sets, their protocol works over finite extension rings resulting in significantly worse performance in both proof size and computation time than the original versions of these protocols over $\mathbb{Z}_p$ (e.g., by a factor $\lambda \approx 128$ for $128$-bit security). Generalizations of Bulletproofs and Compressed $\Sigma$-Protocols for lattice-based commitments (e.g., based on the hardness of the Integer SIS problem) encounter similar challenges~\cite{cryptoeprint:2021:307}. The GKR sumcheck protocol \cite{GoldwasserRK17} also partially relies on the DLSZ for multilinear polynomials. Sumcheck based protocols have similarly been extended to work over rings instead of prime fields using DLSZ with exceptional sets~\cite{chenverifiable,BootleCS21}. Our new variant (LCSZ) could possibly\footnote{A key challenge for applying LCSZ to these applications would be showing that the polynomials are co-prime with the modulus.} be used to improve these results (by eliminating the need for exceptional sets) as well as other GKR instantiations\cite{CormodeMT12,Setty20,SettyL20} to integer rings. 

\subsection{Technical Overview}
The regular DLSZ is relatively simple to prove. Consider the special case of a multilinear polynomial over a field. As a base case, a univariate linear polynomial has at most one root over the field. For the induction step, express $f(X_1,...,X_{n+1}) =  g(X_1,...,X_n) + X_{n+1} h(X_1,...,X_n)$ for random variables $X_1,\dots,X_n$. The probability that $h(x_1,...,x_n) = 0$ over random $x_i$ sampled from $S$ is at most $n/|S|$ by the inductive hypothesis, and if $h(x_1,...,x_n) = w \neq 0$ and $g(x_1,\dots,x_n)=u$, then $ u + X_{n+1} w$ has at most one root (base case). By the union bound, the overall probability is at most $n/|S| + 1/|S| = (n+1)/|S|$. 

This simple proof does not work for multilinear polynomials modulo a composite integer. The base case is the same for $f$ coprime to N, which has at most one root. However, in the induction step, it isn't enough that $h(x_1,...,x_n) \neq 0$ as it still may be a zero divisor, in which case the polynomial $ u + X_{n+1} w$ is not necessarily coprime to $N$ and the base case no longer applies. The number of roots depends on $\gcd( u +X_{n+1} w, N)$. 

We prove the LCSZ using a modified inductive proof over $n$. Our analysis takes into account the distribution of $\gcd( u +X_{n+1} w,N)$. For each prime divisor $p_i$ of $N$, the highest power of $p_i$ that divides $ u + X_{n+1} w$ follows a geometric distribution. 
Using an inductive analysis, we are able to show that the probability $f(x_1,\dots,x_n)\equiv 0 \bmod p^r$ is bounded by the probability that $\sum_{i=1}^n Z_i \geq r$ for $i.i.d.$ geometric variables with success parameter $1-\frac{1}{p}$. This probability is equal to a $I_{\frac{1}{p}}(n,r)$ where $I$ is the regularized beta function. Furthermore, by the Chinese Remainder Theorem this probability is independent for each prime factor of $N$, and thus, the overall probability can be bounded by a product of regularized beta functions. 

\paragraph{Inverse LCSZ}
While the LCSZ gives a tight bound on the probability for particular values of $N, \mu$, and $m$, cryptographic applications require finding concrete parameters such that the probability is exponentially small in a security parameter $\lambda$. 
This is easy for the standard DLSZ, as it simply states that when sampling from sets of size $d \cdot 2^{\lambda}$ the probability is at most $\frac{d}{2^\lambda}$. The LCSZ does not have such a simple form so we need to explicitly find an inverse formulation. Concretely, we want to find a value $N^*$ such that for all $N\geq N^*$ the probability that $f(X_1,\dots,X_n)\equiv 0 \bmod N$ is bounded by $2^{-\lambda}$. 

To do this we first derive simple and useful bounds for the regularized beta function. 

\paragraph{Bounds on regularized beta function}
We show the following three useful facts about $I_{\frac{1}{p}}(n,r)$
\begin{itemize}
	\item $I_{\frac{1}{p}}(r,\mu)\leq \left(\frac{n}{p}\right)^r$ for $p \geq 2\mu$
	\item $I_{\frac{1}{p}}(r,\mu)\leq \frac{r^n}{p^r}$ for $r\geq 2\mu$
	\item $\log(I_{1/p}(r-1,\mu))-\log(I_{1/p}(r,\mu))$ is non-increasing in $r$ for any $p>\mu$ and for $r=1$ in $p$.
\end{itemize}

\paragraph{Knapsack Formulation}
We then formulate finding $N^*$ as an optimization problem. $N^*$ is the maximum value of $N$ such that the probability of $f(\xvec)\equiv 0 \bmod N$ is greater than $2^{-\lambda}$. For any $N$ let $S(N)$ denote the set of pairs $(p, r)$ where $p$ is a prime divisor of $N$ with multiplicity $r$. Taking the logarithm of both the objective and the constraint yields a knapsack-like constraint maximization  [roblem where the objective is $\log(N^*)=\sum_{(p_i,r_i) \in S(N^*)} r_i \cdot \log(p_i)$ and the constraint is $\sum_{(p_i,r_i) \in S(N^*)} -\log(I_{\frac{1}{p_i}}(r,\mu))\leq \lambda$.
Using the bounds on $I_{\frac{1}{p_i}}$ and several transformations of the problem we show that any optimal solution to this problem must be bounded by $t=8\mu^2 + \log_2(2\mu)\lambda$, which in turn implies that $N^*\leq 2^t$. 

\paragraph{Tighter computational solution}
We further show that a simple greedy knapsack algorithm computes an upper bound to the knapsack problem. The algorithm uses the fact that $\frac{\log(p)}{\log(I_{1/p}(r-1,\mu))-\log(I_{1/p}(r,\mu))}$ the so-called marginal density of each item is non increasing over certain regions. Adding the densest items to the knapsack computes an upper bound to the objective. We run the algorithm on a large number of values for $\mu$ and $\lambda$ and report the result.


\section{Main theorem statement}
\begin{theorem}[Multilinear Composite Schwartz-Zippel (LCSZ)]
\label{thm:csz} Let $N=\prod_{i=1}^\ell p_i^{r_i}$ for distinct primes $p_1,...,p_\ell$.
	Let $f$ be any $\mu$-linear integer polynomial co-prime to $N$. For any integer $m > 1$ and $\xvec$ sampled uniformly from $[0,m)^\mu$: $$\mathbb{P}_{\xvec \gets [0,m)^\mu} [f(\xvec)\equiv 0 \bmod N]\leq \frac{\mu}{m} + \prod_{i=1}^\ell 
	I_{\frac{1}{p_i}}(r_i, \mu)$$
	 where $I_{\frac{1}{p}}(r, \mu)= (1-\frac{1}{p})^\mu \sum_{j = r}^\infty \binom{\mu+r -1}{r} \left(\frac{1}{p}\right)^j$ is the regularized beta function.
\end{theorem}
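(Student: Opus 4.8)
\textit{Proof proposal.} The plan is to prove the bound in three stages: (1) replace the uniform distribution on the box $[0,m)^\mu$ by the uniform distribution on $[0,N)^\mu$ at the cost of the additive error $\frac{\mu}{m}$; (2) factor the resulting probability over the prime powers $p_i^{r_i}$ using the Chinese Remainder Theorem; and (3) bound each prime-power factor by the corresponding regularized beta value by induction on the number of variables.

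For stage (1) I would process the $\mu$ coordinates one at a time. Fixing all coordinates but $x_\mu$, the polynomial $f$ is affine in $x_\mu$, say $f = A + x_\mu B$ with $A,B$ depending on the other coordinates; the set of $x_\mu$ with $f\equiv 0\bmod N$ is either empty or a single residue class modulo $N/\gcd(B,N)$, so it meets $[0,m)$ in at most $m\gcd(B,N)/N + 1$ points and meets $[0,N)$ in exactly $\gcd(B,N)$ points when solvable. Hence the conditional probability over $x_\mu\in[0,m)$ is at most $\frac1m$ plus the conditional probability over $x_\mu\in[0,N)$; averaging over the remaining coordinates and iterating over all $\mu$ of them turns $[0,m)^\mu$ into $[0,N)^\mu$ and accumulates exactly $\frac{\mu}{m}$. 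For stage (2), CRT makes the reductions $\xvec\bmod p_i^{r_i}$ independent and uniform on $(\ZZ/p_i^{r_i}\ZZ)^\mu$ as $\xvec$ ranges uniformly over $[0,N)^\mu$, and since $\{f(\xvec)\equiv 0\bmod p_i^{r_i}\}$ depends only on $\xvec\bmod p_i^{r_i}$, the event $\{f(\xvec)\equiv 0 \bmod N\}$ is the independent intersection of these, so its probability equals $\prod_i \mathbb{P}_{\xvec\in(\ZZ/p_i^{r_i}\ZZ)^\mu}[f(\xvec)\equiv 0\bmod p_i^{r_i}]$. Note that $f$ coprime to $N$ forces $f\not\equiv 0\bmod p_i$ for every $i$, which is what the last stage needs.

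Stage (3) is the heart. Fix a prime $p=p_i$ and pass to the $p$-adics: $\mathbb{P}_{\xvec\in(\ZZ/p^r\ZZ)^\mu}[f(\xvec)\equiv 0\bmod p^r] = \mathbb{P}_{\xvec\in\ZZ_p^\mu}[v_p(f(\xvec))\ge r]$. I claim that for any $\mu$-linear $f$ with $p\nmid f$ the random variable $v_p(f(\xvec))$ is stochastically dominated by $\sum_{j=1}^\mu Z_j$ where the $Z_j$ are i.i.d. geometric with $\mathbb{P}[Z_j=k]=(1-\tfrac1p)(\tfrac1p)^k$; since $\mathbb{P}[\sum_{j=1}^\mu Z_j\ge r]$ is exactly $I_{1/p}(r,\mu)$, this yields the factor. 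The proof is by induction on $\mu$. If every non-constant monomial of $f$ has a coefficient divisible by $p$, then the constant term is a unit, so $v_p(f(\xvec))=0$ identically and the claim is trivial (this also settles $\mu=1$ when $f=cX+b$ with $p\mid c$, since then $p\nmid b$); and if $\mu=1$ with $c$ a unit then $f(x)$ is uniform on $\ZZ_p$ so $v_p(f(x))$ is exactly geometric. Otherwise pick a non-constant monomial with a unit coefficient, choose a variable $X_k$ occurring in it, and write $f=g+X_k h$ with $g,h$ in the other $\mu-1$ variables; the chosen monomial witnesses that $h$ is again coprime to $p$. Conditioning on the remaining coordinates $\xvec''$ and putting $u=g(\xvec'')$, $w=h(\xvec'')$, $s=v_p(w)$, a short case split on $v_p(u)$ versus $s$ shows that over uniform $x_k\in\ZZ_p$ one has $\mathbb{P}_{x_k}[v_p(u+x_k w)\ge r]\le p^{-\max(r-s,0)}$: when $v_p(u)\ge s$ the quantity $u+x_k w=w(x_k+u/w)$ is uniform up to the unit $w/p^s$, giving equality; when $v_p(u)<s$ the valuation is the constant $v_p(u)$, and the bound still holds since $v_p(u)\ge r$ would force $s>r$. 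Writing $p^{-\max(r-s,0)}=\mathbb{P}_Z[Z\ge r-v_p(h(\xvec''))]$ for a fresh geometric $Z$ and averaging over $\xvec''$ gives $\mathbb{P}_{\xvec}[v_p(f(\xvec))\ge r]\le \mathbb{P}[Z+v_p(h(\xvec''))\ge r]$; by the inductive hypothesis applied to the $(\mu-1)$-linear, $p$-coprime polynomial $h$, $v_p(h(\xvec''))\preceq\sum_{j=1}^{\mu-1}Z_j$, hence $Z+v_p(h(\xvec''))\preceq\sum_{j=1}^\mu Z_j$ and the claim follows.

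The hard part is making the induction in stage (3) close without an awkward case split on whether $h$ is coprime to $p$: peeling off $X_\mu$ naively can leave $h$ divisible by $p$, in which case the bound $p^{-\max(r-s,0)}$ over-credits $x_k$ with free valuation and the recursion breaks. The fix above — always peeling a variable out of a non-constant monomial with a unit coefficient, and noting that the only alternative ($f$ has no such monomial) forces $v_p(f(\xvec))\equiv 0$ — is the key point, and I would want to double-check both that this choice is always available in the inductive step and that discarding the solvability indicator $\mathbf{1}[v_p(g(\xvec''))\ge\min(r,s)]$ is genuinely lossless. A secondary thing to verify carefully is that the $p$-adic reformulation correctly matches the probability over $(\ZZ/p^r\ZZ)^\mu$, and that the stochastic-dominance bookkeeping (using $X\preceq Y\Rightarrow X+Z\preceq Y+Z$ for independent $Z$) is applied to genuinely independent copies.
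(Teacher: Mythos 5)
Your proof is correct, but it takes a genuinely different route from the paper's. The paper runs a single induction over the coordinates, working directly over the box $[0,m)^\mu$: for every prime $p_j \mid N$ \emph{simultaneously} it tracks the exact multiplicity $Y_{j,i}$ of $p_j$ in the content of the partially evaluated polynomial, the per-step $\frac{1}{m}$ loss arises inside that induction (a residue class mod $N^*$ meets $[0,m)$ with probability at most $\frac{1}{N^*}+\frac{1}{m}$), and because all primes are handled jointly with a fixed variable order, the possibility that the coefficient of $X_i$ vanishes mod $p_j$ is dealt with by selecting, for each $j$, one monomial index $t_j$ at which the univariate linear polynomial in $X_i$ is nonzero mod $p_j$. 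You instead modularize: a hybrid swap of $[0,m)$ for $[0,N)$ one coordinate at a time (cost exactly $\mu/m$, sound since the solution set in each coordinate is empty or a single residue class mod $N/\gcd(B,N)$), then an exact CRT factorization of the probability over the prime powers, then a per-prime lemma that $v_p(f(\xvec))$ over Haar-uniform $\ZZ_p^\mu$ is stochastically dominated by a sum of $\mu$ i.i.d.\ geometrics, proved by induction with the peeled variable chosen adaptively from a non-constant monomial with unit coefficient (the degenerate case where none exists forces $v_p\equiv 0$, and the only alternative is exactly that case, so the choice is always available). That adaptive, prime-dependent variable order is legitimate precisely because CRT has already decoupled the primes; it is your substitute for the paper's coefficient-selection trick, and the other points you flagged check out (discarding the solvability indicator only weakens an upper bound; adding an independent geometric preserves dominance). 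What each buys: the paper's argument is self-contained over the integers with no limiting or $p$-adic apparatus; yours isolates the $\mu/m$ loss in one place, gives the exact product formula over $[0,N)^\mu$ (which explains the paper's tightness remark), and yields a clean standalone valuation-domination lemma, at the mild price of invoking Haar measure on $\ZZ_p$ (or an equivalent finite formulation mod $p^r$).
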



\begin{remark}
The regularized beta function characterizes the tail distribution of the sum of independent geometric random variables. If $Y = \sum_{i=1}^\mu Z_i$ where each $Z_i$ is an independent geometric random variable with parameter $\epsilon$ then $P[Y \geq r]=I_{1-\epsilon}(r,\mu)$. $Y$ is a negative binomial variable with parameters $\epsilon,\mu$.
\end{remark}
\begin{remark}
For $m \gg \mu$ the theorem is nearly tight for all $N$. Setting $f(\xvec)= \prod_{i=1}^\mu x_i$ and $m = N$ gives $P_{\xvec \gets [0,m)^\mu}[f(\xvec)\equiv 0 \bmod N]= \mathbb{P}_{\xvec\gets [0,N)^\mu}[f(\xvec)\equiv 0 \bmod N] = \prod_{i=1}^\ell I_{\frac{1}{p_i}}(r_i,\mu)$ 
\end{remark}
\begin{remark}$1 - e^{-\mu/p_i} \leq I_{\frac{1}{p_i}}(1,\mu) = 1 - (1 - \frac{1}{p_i})^\mu \leq \frac{\mu}{p_i}$. Hence, for square-free $N$ the probability in \cref{thm:csz} is upper bounded by $\frac{\mu}{m} + \frac{\mu^\ell}{N}$, but for $\ell > 1$ this is a loose upper bound unless $\mu \ll p_i$ for all $p_i |N$. For $\ell = 1$  (i.e., prime $N$), \cref{thm:csz} coincides with the Schwartz-Zippel lemma.
\end{remark}
\begin{remark}
 $I_{\frac{1}{p_i}}(r_i,1)=\left(\frac{1}{p_i}\right)^{r_i}$. Hence, for $\mu=1$,  the bound in \cref{thm:csz} is $\frac{1}{N}+\frac{\mu}{m}$.
\end{remark}

\begin{proof}
We begin by introducing some notations. 

\begin{itemize}
\item For a polynomial $f \in \ZZ[X_1,...,X_\mu]$ let $\vec{f}$ denote the coefficients of $f$ and let $\cont(f)$ denote the greatest integer divisor of $f$, i.e. the \emph{content}. 
\item $\vec{\beta} = (\beta_1,...,\beta_\mu) \in [0,m)^\mu$ is a random variable distributed uniformly over $[0,m)^\mu$. For any $i \geq 1$, let $\vec{\beta}_{i} = (\beta_1,...,\beta_i)$, let $f_0 = f$,  and let $f_i(\vec{\beta}_i):=f(\beta_1,\dots,\beta_{i},X_{i+1},\dots,X_\mu)$. 
\item Given the random variable $\vec{\beta}$ distributed uniformly over $[0,m)^\mu$, define for each $j \in [1,\ell]$ and $i \in [1,\mu]$ the random variable $Y_{j,i}$ (as a function of $\vec{\beta}$) representing the multiplicity of $p_j$ in $\cont(f_i(\vec{\beta}_i)))$. Naturally, we set $Y_{j,0} = 0$ for all $j$ because $\forall_j f_0 \neq 0 \bmod p_j$. (\textbf{Note:} $Y_{j,i}$ are \emph{not} independent). For any $i \in [\mu]$ the event that $\forall_j Y_{j,i} \geq r_j$ is equivalent to the event that $f_i(\vec{\beta}_i) = 0 \bmod N$ and the event that $\forall_j Y_{j,i} = r_i$ is equivalent to $\cont(f_i(\vec{\beta}_i)) = N$. The event $\forall_j Y_{j,\mu} \geq r_j$ is thus equivalent to $f(\vec{\beta}) = 0 \bmod N$. 
\item Let $\{Z_{j,i}\}$ for $i \in [\mu]$ and $j \in [\ell]$ be a set of independent random variables, where $Z_{j,i}$ is geometric with parameter $1-\frac{1}{p_j}$.

\end{itemize}

 From the CCDF (complementary CDF) of geometric random variables we have that $P[Z_{j,i} \geq k] = (\frac{1}{p_j})^{k}$. Setting $Z_j := \sum_{i=1}^\mu Z_{j,i}$, from the CCDF of the negative binomial distribution (i.e., tail distribution of the sum of independent geometric random variables) it follows that $\forall_j P[Z_j \geq r] = I_{\frac{1}{p_j}}(r, \mu)$. 
 
 Next, we establish an important subclaim: 
 
 \begin{claim} For any $i \geq 2$ and $\vec{k},\vec{k}'\in \NN^{\ell}$ where $\forall_j k_j \geq 0$: 
 $$P[\forall_j Y_{j,i}\geq k_j + k'_j | \forall_j Y_{j,i-1} = k'_j ]\leq \frac{1}{m} + P[\forall_j Z_{j,i}\geq k_j]$$
 Furthermore,  for all $i \geq 1$, $P[\forall_j Y_{j,i}  \geq Y_{j,i-1} + k_j] \leq \frac{1}{m} + P[\forall_j Z_{j,i}\geq k_j]$.
 \end{claim}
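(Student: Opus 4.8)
The plan is to prove the claim by conditioning on the entire prefix $\vec{\beta}_{i-1}$ (not merely on the event $\forall_j Y_{j,i-1} = k'_j$) and then studying the randomness of the single coordinate $\beta_i$. Since $f$ is $\mu$-linear, write $f_{i-1}(\vec{\beta}_{i-1}) = g + X_i h$, where $g = g(\vec{\beta}_{i-1};X_{i+1},\dots,X_\mu)$ and $h = h(\vec{\beta}_{i-1};X_{i+1},\dots,X_\mu)$ are integer polynomials in the remaining variables; then $f_i(\vec{\beta}_i) = g + \beta_i h$. Fixing $\vec{\beta}_{i-1}$ fixes $g$ and $h$, hence fixes $s_j := v_{p_j}(\cont g)$ and $t_j := v_{p_j}(\cont h)$ (the $p_j$-adic valuations, with the convention $v_{p_j}(\cont 0) = \infty$), and fixes $Y_{j,i-1} = \min(s_j,t_j)$ since $\cont(f_{i-1}) = \gcd(\cont g,\cont h)$. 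Thus the whole statement reduces to bounding, over $\beta_i \gets [0,m)$ alone, the probability of $\bigcap_j \{\, v_{p_j}(\cont(g + \beta_i h)) \geq Y_{j,i-1} + k_j \,\}$.

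The key structural lemma I would establish is: for any prime $p$, integer polynomials $g,h$ not both zero, and integer $k \geq 0$, the set $E := \{\,\beta \in \ZZ : v_p(\cont(g+\beta h)) \geq \min(v_p(\cont g),v_p(\cont h)) + k\,\}$ is either empty or a single coset of $p^k\ZZ$. I would prove this by induction on $k$. Factoring out $p^{\min(v_p(\cont g),v_p(\cont h))}$, we may assume $\min(v_p(\cont g),v_p(\cont h)) = 0$, and then split into cases according to which of $v_p(\cont g), v_p(\cont h)$ is $0$. If the content of the $X_i$-coefficient has positive $p$-valuation, the set collapses trivially (it is $\ZZ$ for $k = 0$ and $\emptyset$ otherwise); if both contents are coprime to $p$, the condition $v_p(\cont(g+\beta h)) \geq 1$ forces $\beta$ into a unique residue class mod $p$ (unless $\bar g$ and $\bar h$ are not $\mathbb{F}_p$-proportional, in which case $E$ is empty), and after dividing out one power of $p$ one is reduced to the same lemma with $k$ decreased by one; the remaining case is handled similarly after peeling off one more power of $p$. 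The point that makes the induction go through is that $\cont h$ is never divisible by $p$ in the configurations that arise, so at each level either there is no solution or the solutions form exactly one residue class modulo the next power of $p$.

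Granting the lemma, I would combine across the primes of $N$ via the Chinese Remainder Theorem: each $E_j := \{\,\beta : v_{p_j}(\cont(g+\beta h)) \geq Y_{j,i-1} + k_j\,\}$ is empty or a coset of $p_j^{k_j}\ZZ$, the moduli $p_j^{k_j}$ are pairwise coprime, so $\bigcap_j E_j$ is empty or a single coset of $M\ZZ$ with $M = \prod_j p_j^{k_j}$. A coset of $M\ZZ$ meets $[0,m)$ in at most $\lceil m/M \rceil \leq m/M + 1$ integers, so the probability over $\beta_i \gets [0,m)$ is at most $\tfrac1M + \tfrac1m = \prod_j p_j^{-k_j} + \tfrac1m = P[\forall_j Z_{j,i} \geq k_j] + \tfrac1m$, using independence of the $Z_{j,i}$ over $j$ and $P[Z_{j,i} \geq k] = p_j^{-k}$. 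Since this bound holds conditioned on every value of $\vec{\beta}_{i-1}$, it survives averaging over any sub-event determined by $\vec{\beta}_{i-1}$: averaging over $\{\forall_j Y_{j,i-1} = k'_j\}$ (where $Y_{j,i-1}$ then equals the conditioned value $k'_j$) gives the first statement, and averaging over the whole space gives the "furthermore" statement for $i \geq 2$. For $i = 1$ there is nothing to condition on, and one uses instead that $f$ coprime to $N$ means $s_j = 0$ or $t_j = 0$ for every $j$, which is precisely the $Y_{j,0} = 0$ special case of the same analysis.

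I expect the main obstacle to be the structural lemma — specifically, handling cleanly all the ways the $p$-adic valuations of $\cont g$ and $\cont h$ can interact, and organizing the induction on $k$ so each case reduces to a strictly smaller instance. A secondary nuisance is the degenerate case $f_{i-1} = 0$ (equivalently $g = h = 0$, and more generally $f_i = 0$), where the relevant content valuations are $\infty$; this is outside the scope of the conditional statement, which assumes $k'_j \in \NN$, and for the marginal statement one checks that under the natural reading of ``$Y_{j,i} \geq Y_{j,i-1} + k_j$'' as ``the gain $Y_{j,i} - Y_{j,i-1}$ is at least $k_j$'' this event contributes only when $\vec k = \vec 0$, where the claimed bound is trivial.
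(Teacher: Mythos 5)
Your proof is correct, and its skeleton coincides with the paper's: condition on the whole prefix $\vec{\beta}_{i-1}$, split $f_{i-1}$ as $g + X_i h$, analyze each prime separately, combine by CRT, and count how many points of $[0,m)$ can lie in a single residue class modulo $M=\prod_j p_j^{k_j}$ (your $\lceil m/M\rceil \leq m/M+1$ count is a cleaner equivalent of the paper's split of $[0,m)$ into $[0, m - m \bmod N^*)$ and the leftover interval). The one place where you take a genuinely different route is the per-prime step. The paper avoids your structural lemma entirely: since $f_{i-1}(\mathbf{x})/N'$ is nonzero modulo $p_j$, it fixes a single witness monomial index $t_j$ at which the linear polynomial $h_i[t_j] + X_i\, g_i[t_j]$ is nonzero mod $p_j$, and observes that the content event $h_i + \beta_i g_i \equiv 0 \bmod p_j^{k_j}$ is \emph{contained} in the scalar congruence $h_i[t_j] + \beta_i\, g_i[t_j] \equiv 0 \bmod p_j^{k_j}$, which, being a linear congruence with not both coefficients divisible by $p_j$, has at most one root modulo $p_j^{k_j}$. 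Your lemma --- that the exact solution set is empty or a single coset of $p_j^{k_j}\ZZ$, proved by peeling off powers of $p$ --- is a strictly stronger statement, and it is true (in the only configuration where more than one coset of $p^k\ZZ$ could conceivably survive, namely $v_p(\cont(h)) > v_p(\cont(g))$, the set is in fact empty for $k\geq 1$ because a minimal-valuation coefficient of $g$ keeps its valuation); but it costs you an induction on $k$ that the witness-coefficient projection makes unnecessary, since only an upper bound of one residue class is needed. Incidentally, your explicit treatment of the degenerate fibre $f_{i-1}(\mathbf{x})=0$, where the valuations are infinite, is a point the paper glosses over, and your observation that the $1/m$ term survives averaging over any event measurable in $\vec{\beta}_{i-1}$ is exactly how the paper passes from fixed $\mathbf{x}$ to the conditional and marginal statements.
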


\begin{proof}  Since the order of the variables does not matter, w.l.o.g. assume that $\forall_{j \in [1,\ell']} k_j\geq 1$ and $\forall_{j >\ell'}  k_j=0$. 
 Let $N^* = \prod_{j=1}^{\ell'} p_j^{k_j}$ and $N' = \prod_{j=1}^{\ell'} p_j^{k'_j}$. 
 For any $i \geq 2$ and any $\mathbf{x} \in [0,m)^{i-1}$, in the event that $\vec{\beta}_{i-1} = \mathbf{x}$ and $\forall j \ Y_{j, i-1} = k'_j$, then by definition $f_{i-1}(\mathbf{x}) \equiv 0 \bmod N'$ and $\forall_j f_{i-1}(\mathbf{x})/N' \not\equiv 0 \bmod p_j$. In case $i = 1$, we have that $\forall_j Y_{j,0} = 0$, $N' = 1$,  and $\forall_j f_0 = f \neq 0 \bmod p_j$. Thus, for all $i \geq 1$, conditioned on the events that $\vec{\beta}_{i-1} = \mathbf{x}$ and $\forall j \ Y_{j, i-1} = k'_j$, there exist multilinear $\mu-i$ variate polynomials $h_i,g_i$ such that $f_{i-1}(\mathbf{x})/N'=h_i(X_{i+1},\dots,X_\mu)+X_i \cdot g_i(X_{i+1},\dots,X_\mu)$ where for all $j$ at least one of $h_i$ or $g_i$ is nonzero modulo $p_j$. 
 
  Furthermore, for any $i\geq 1$, conditioned on the events $\forall j \ Y_{j, i-1} = k'_j$ and $\vec{\beta}_{i-1} = \mathbf{x}$,  the event that $\forall_j Y_{j,i} \geq k_j + k'_j$  
is equivalent to the event that $h_i + \beta_i g_i \equiv 0 \bmod N^*$.  
 
For each index $t \in [1,2^{\mu-i}]$ let $h_i[t]$ and $g_i[t]$ denote the $t$th coefficients of $h_i$ and $g_i$ respectively (i.e., the coefficients on the $t$th monomial in a canonical ordering of the $2^{\mu-i}$ monomials over the $\mu-i$ variables $X_{i+1},...,X_\mu$). Given that for every $j \in [\ell']$ the polynomial $h_i + X_i \cdot g_i$ is non-zero modulo $p_j$, for each $j \in [\ell']$ there exists at least one index $t_j$ for which the univariate linear polynomial $h_i[t_j] + X_i g_i[t_j]$ is non-zero modulo $p_j$. We now have that: 


\begin{align}
P[\forall_j Y_{j,i}\geq k_j + k'_j | \vec{\beta}_{i-1} = \mathbf{x} \wedge \forall_j Y_{j,i-1} = k'_j ] 
&=P[\forall_j \ h_i + \beta_i \cdot g_i \equiv 0 \bmod p_j^{k_j}] \\
&\leq P[\forall_j \
h_i[t_j] +\beta_i \cdot  g_i[t_j] \equiv 0  \bmod p_j^{k_j}] \label{eqn:CRT}
\end{align}




For each $t_j$ there is \emph{at most} one solution to the equation $h_i[t_j] + X_i \cdot g_i[t_j] \equiv 0 \bmod \; p_j$. Consequently, by CRT, there is at most one integer solution $x^* \in [0,N^*)$ to the system of equations $\forall_j h_i[t_j] + X_i \cdot g_i[t_j] \equiv \bmod p_j$\footnote{This is where the proof falls apart for higher degree polynomials. For example if the polynomial is quadratic in each variable then there could be up to $2^\ell$ solutions where $\ell$ is the number of prime factors.}. Therefore, if $m \leq N^*$, then the probability in line~(\ref{eqn:CRT}) above is bounded by $1/m$. 



However, we must also consider the case that $m > N^*$. Let $E$ denote the event that the system of equations $\forall_j \
h_i[t_j] +\beta_i \cdot  g_i[t_j] \equiv 0  \bmod p_j^{k_j}$ from line~(\ref{eqn:CRT}) is satisfied. There is at most one integer equivalence class modulo $N^*$ satisfying this system of equations and the random variable $\beta_i$ is uniformly distributed over $[0,m)$ for some $m > N^*$. Let $U$ denote the event that $\beta_i \in [0, m - m \bmod N^*)$ and let $\bar{U}$ denote the event that $\beta_i \in [m - m \bmod N^*, m)$. Conditioned on $U$, $\beta_i$ is uniformly distributed modulo $N^*$, and thus $P[E | U] \leq 1/N^*$. Conditioned on $\bar{U}$, $\beta_i$ is uniformly distributed over the set $[m - m \bmod N^*, m)$. As this set is a contiguous interval of less than $N^*$ integers it contains at most one solution to the systems of equations, and thus, $P[E | \bar{U}] \leq 1/(m \bmod N^*)$. Therefore: 

\begin{align}
P[E] &= P[E | U] \cdot P[U] + P[E | \bar{U}] \cdot P[\bar{U}]\\
&\leq \frac{1}{N^*} \cdot (1-\frac{m \bmod N^*}{m})  + \frac{1}{m \bmod N^*} \cdot \frac{m \bmod N^*}{m}\\
&\leq  
\frac{1}{N^*} +\frac{1}{m}
\end{align}


We also have that:
$$P[\forall_{j\in [\ell]} Z_{j,i} \geq k_j] = \prod_{j=1}^{\ell'} P[Z_{j,i} \geq k_j] =  \prod_{j=1}^{\ell'} \left(\frac{1}{p_j}\right)^{k_j} = \frac{1}{N^*} $$

Thus, putting it all together, for any $i \geq 1$ and any $\mathbf{x} \in [0,m)^{i-1}$: 
\begin{align*}
&P[\forall_j Y_{j,i}\geq k_j + k'_j | \vec{\beta}_{i-1} = \mathbf{x} \wedge \forall_j Y_{j,i-1} = k'_j ] 
 \leq \frac{1}{m} + \frac{1}{N^*} = \frac{1}{m} + P[\forall_j Z_{j,i} \geq k_j]
\end{align*}

Since the probability bound is independent of $\mathbf{x}$, this implies:
\begin{align*}
&P[\forall_j Y_{j,i}\geq k_j + k'_j |  \forall_j Y_{j,i-1} = k'_j ] 
 \leq \frac{1}{m} + P[\forall_j Z_{j,i} \geq k_j]
\end{align*}

Consequently, for any $i\geq 2$ and $\vec{k} \in \NN^\ell$ where $\forall_j k_j > 0$:

$$P[\forall_j Y_{j,i}  \geq Y_{j,i-1} + k_j] \leq \max_{\vec{k'}} P[\forall_j Y_{j,i}  \geq k_j + k'_j | \forall_j Y_{j,i-1} = k'_j]\leq \frac{1}{m} + P[\forall_j Z_{j,i} \geq k_j] $$


Similarly, for $i = 1$ and $\vec{k} \in \NN^\ell$ where $\forall_j k_j > 0$: 

$$P[\forall_j Y_{j,i}  \geq k_j] \leq \frac{1}{m} + P[\forall_j Z_{j,i} \geq k_j]$$
\end{proof}

We now prove the full theorem by induction over $i \in [1,\mu]$. Specifically, we will prove the following inductive hypothesis for $i \in [1,\mu]$: 

$$P[f_i(\vec{\beta_i}) = 0 \bmod N] = P[\forall_j Y_{j,i} \geq r_j] \leq \frac{i}{m} + \prod_{j = 1}^\ell  P[\sum_{k = 1}^i Z_{j,k} \geq r_j]$$

Setting $i = \mu$ this is equivalent to the theorem statement. 


\textbf{Base Case:}
The base case follows directly from the subclaim for the case $i =1$.

$$ P[\forall_j Y_{j,1}\geq r_j] \leq \frac{1}{m} + P[\forall_j Z_{j,1} \geq r_j] = \frac{1}{m} + \prod_{j = 1}^\ell P[Z_{j,1} \geq r_j]$$
 
 
\textbf{Induction Step:}
Assume the inductive hypothesis holds for some $1 \leq i < \mu$, i.e.: 

$$P[\forall_j Y_{j,i} \geq r_j] \leq \frac{i}{m} + \prod_{j = 1}^\ell  P[\sum_{k = 1}^i Z_{j,k} \geq r_j]$$


We will show this implies the hypothesis holds for $i+1   $: 
 \begingroup
\allowdisplaybreaks
\begin{align*}
	&P[\forall_j Y_{j,i+1}\geq r_j]= \sum_{\vec{k} \in \NN^\ell} P[\forall_j Y_{j,i+1}-Y_{j,i} \geq r_j-k_j | \forall_j Y_{j,i} = k_j ] \cdot P[\forall_j Y_{j,i}=k_j] \\
&\leq \sum_{\vec{k}} (P[\forall_j Z_{j,i+1} \geq r_j-k_j]+\frac{1}{m}) \cdot P[\forall_j Y_{j,i}=k_j] \quad (\textbf{by subclaim})\\
&= \frac{1}{m}+\sum_{\vec{k}} P[\forall_j Z_{j,i+1} \geq r_j-k_j] \cdot P[\forall_j Y_{j,i}=k_j]&\\
&= \frac{1}{m}+\sum_{\vec{\Delta} \in \ZZ^\ell: \forall_j \Delta_j \leq r_j} P[\forall_j Z_{j,i+1} \geq \Delta_j] \cdot P[\forall_j Y_{j,i}=r_j - \Delta_j] \quad (\textbf{change of variables})\\
&= \frac{1}{m}+\sum_{\vec{\Delta} \in \ZZ^\ell: \forall_j \Delta_j \leq r_j} \sum_{\vec{k} \in \NN^\ell} P[\forall_j Z_{j,i+1} = \Delta_j + k_j] \cdot P[\forall_j Y_{j,i}=r_j - \Delta_j]\\
&=\frac{1}{m}+\sum_{\vec{\Delta'} \in \NN^\ell} P[\forall_j Z_{j,i+1}=\Delta'_j] \cdot \sum_{\vec{k}\in \NN^\ell} P[\forall_j Y_{j,i}= r_j-(\Delta'_j - k_j)] \quad (\textbf{c.o.v.})\\
&=\frac{1}{m}+\sum_{\vec{\Delta'} \in \NN^\ell} P[\forall_j Z_{j,i+1}=\Delta'_j] \cdot P[\forall_j Y_{j,i}\geq r_j-\Delta'_j]\\
&\leq\frac{1}{m}+\sum_{\vec{\Delta'}} P[\forall_j Z_{j,i+1}=\Delta'_j] \cdot (P[\forall_j \sum_{i'=1}^i Z_{j,i' } \geq r_j-\Delta'_j] +\frac{i}{m})  \quad (\textbf{inductive hyp.})\\
&=\frac{i+1}{m}+\sum_{\vec{\Delta'}} P[\forall_j Z_{j,i+1}=\Delta'_j] \cdot P[\forall_j \sum_{k=1}^i Z_{j,k} \geq r_j-\Delta'_j] \\
&=\frac{i+1}{m}+P[\forall_j \sum_{k=1}^i Z_{j,k} \geq r_j-Z_{j,i+1}] \quad(\textbf{independence of variables})\\
&= \frac{i+1}{m}+\prod_{j=1}^\ell P[\sum_{k=1}^{i+1} Z_{j,k}\geq k_j] \quad (\textbf{independence of variables})
\end{align*}

\endgroup

\end{proof}
\section{Bounds on the Regularized Beta Function}

The regularized incomplete beta function is defined for $k, \mu \in \NN$ as:

$$I_\epsilon(k,\mu) = (1-\epsilon)^\mu \sum_{j = k}^\infty \binom{\mu+j -1}{j} \epsilon^j $$

Special values are $I_\epsilon(k, 1) = \epsilon^k$, which matches the tail distribution of a geometric variable with parameter $1-\epsilon$, and $I_\epsilon(1, \mu) = 1 - (1-\epsilon)^\mu$, which is the probability that at least one of $\mu$ geometric variables of parameter $1-\epsilon$ is positive. 

\begin{lemma}
	\label{lem:regbeta1}
 $I_\epsilon(k, \mu) \leq (\epsilon \mu)^k$ for all $\mu, k \in \NN$ and $\epsilon \in (0,1)$ where $\epsilon \mu \leq 1/2$.  
\end{lemma}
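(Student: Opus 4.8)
The plan is to bound the tail series $I_\epsilon(k,\mu) = (1-\epsilon)^\mu \sum_{j=k}^\infty \binom{\mu+j-1}{j}\epsilon^j$ term-by-term and then sum a geometric-type series. First I would recall the probabilistic interpretation from the remark: $I_\epsilon(k,\mu) = P[Y \geq k]$ where $Y = \sum_{i=1}^\mu Z_i$ is a sum of $\mu$ i.i.d.\ geometric random variables with success parameter $1-\epsilon$ (so $P[Z_i \geq t] = \epsilon^t$). The event $\{Y \geq k\}$ means the total number of ``failures'' across $\mu$ independent geometric trials is at least $k$. I would then observe that $\{Y \geq k\}$ is contained in the union, over all compositions, of events forcing failures, but the cleanest route is a direct union bound: $\{Y \geq k\}$ implies that there is some way to distribute $k$ failures among the $\mu$ variables, i.e.\ there exist nonnegative integers $a_1,\dots,a_\mu$ with $\sum a_i = k$ and $Z_i \geq a_i$ for all $i$. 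By a union bound over the $\binom{k+\mu-1}{\mu-1}$ such compositions and independence, $P[Y\geq k] \leq \sum_{a_1+\dots+a_\mu = k} \prod_i \epsilon^{a_i} = \binom{k+\mu-1}{\mu-1}\epsilon^k$. This is not quite $(\epsilon\mu)^k$, so I would instead use the sharper ``stars-and-bars with slack'' bound or work directly with the series.

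The direct approach: in the sum $\sum_{j\geq k}\binom{\mu+j-1}{j}\epsilon^j$, factor out the $j=k$ term and bound the ratio of consecutive terms. The ratio of the $(j+1)$-st term to the $j$-th is $\frac{\mu+j}{j+1}\epsilon \leq \mu\epsilon \leq 1/2$ (using $\mu+j \leq \mu(j+1)$ for $\mu\geq 1$), so the series is dominated by a geometric series with ratio $\leq 1/2$, giving $\sum_{j\geq k}\binom{\mu+j-1}{j}\epsilon^j \leq \binom{\mu+k-1}{k}\epsilon^k \cdot \frac{1}{1-\mu\epsilon} \leq 2\binom{\mu+k-1}{k}\epsilon^k$. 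Combining with $(1-\epsilon)^\mu \leq 1$ this yields $I_\epsilon(k,\mu) \leq 2\binom{\mu+k-1}{k}\epsilon^k$, which again is off by the binomial factor versus $\mu^k$. So the real work is showing $(1-\epsilon)^\mu\binom{\mu+k-1}{k}\epsilon^k \cdot(\text{tail factor}) \leq (\epsilon\mu)^k$, i.e.\ essentially that $\binom{\mu+k-1}{k} \leq \mu^k$ after absorbing constants into the $(1-\epsilon)^\mu$ factor. Indeed $\binom{\mu+k-1}{k} = \binom{\mu+k-1}{\mu-1} \leq \frac{(\mu+k-1)^{\mu-1}}{(\mu-1)!}$, which is too weak directly; the better bound is $\binom{\mu+k-1}{k} \leq \frac{(\mu-1+k)(\mu-1+k-1)\cdots\mu \cdot (\text{stuff})}{k!}$ — more usefully, $\binom{\mu+k-1}{k}\leq \mu^k$ is simply false for small $\mu$ (e.g.\ $\mu=1$ gives equality, $\mu=2,k=2$ gives $3 \leq 4$ OK), but it does hold: $\binom{\mu+k-1}{k} = \prod_{i=1}^k \frac{\mu-1+i}{i} \leq \prod_{i=1}^k \mu = \mu^k$ since $\mu - 1 + i \leq \mu i$ for all $i\geq 1$. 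So in fact $\binom{\mu+k-1}{k}\epsilon^k \leq (\mu\epsilon)^k$ already, and then I must handle the tail sum without the extra factor of $2$, using $(1-\epsilon)^\mu$ to absorb it.

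Putting it together carefully: write $I_\epsilon(k,\mu) = (1-\epsilon)^\mu\sum_{j\geq k}\binom{\mu+j-1}{j}\epsilon^j$ and bound each term $\binom{\mu+j-1}{j}\epsilon^j \leq (\mu\epsilon)^j$ by the product bound above; then $\sum_{j\geq k}(\mu\epsilon)^j = \frac{(\mu\epsilon)^k}{1-\mu\epsilon} \leq 2(\mu\epsilon)^k$ since $\mu\epsilon \leq 1/2$. To remove the factor $2$, I would show $(1-\epsilon)^\mu \leq \frac{1}{2}$ when... no — that fails for small $\mu\epsilon$. The right fix is to be less lossy: $\binom{\mu+j-1}{j}\epsilon^j = \binom{\mu+j-1}{j}\epsilon^j$ and the geometric comparison against ratio $\mu\epsilon$ is already lossy at the first step, so instead compare the tail to its first term with ratio bounded by $\mu\epsilon \le 1/2$: $\sum_{j \geq k}\binom{\mu+j-1}{j}\epsilon^j \leq \binom{\mu+k-1}{k}\epsilon^k\sum_{i\geq 0}(\mu\epsilon)^i \leq 2\binom{\mu+k-1}{k}\epsilon^k \leq 2(\mu\epsilon)^k$, and finally use $(1-\epsilon)^\mu \leq e^{-\mu\epsilon}$; since $\mu\epsilon \le 1/2$ this is only $\le e^{-\mu\epsilon}$ which can be close to $1$, so the factor $2$ genuinely needs the first-term bound to be tightened. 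The cleanest resolution, which I expect to be the main obstacle, is to bound the ratio of consecutive terms by $\frac{\mu+j}{j+1}\epsilon$ and note this is $\le \mu\epsilon \le 1/2$ only for the ratio, while for the \emph{leading} term one uses the exact identity and the factor $(1-\epsilon)^\mu$ cancels against a more careful estimate $\binom{\mu+k-1}{k}(1-\epsilon)^\mu \le \mu^k/2^{?}$ — so I would instead prove it by induction on $k$: the base case $k=0$ is $1 \le 1$; for the step, use the recurrence $I_\epsilon(k+1,\mu) = I_\epsilon(k,\mu) - (1-\epsilon)^\mu\binom{\mu+k-1}{k}\epsilon^k$ together with $I_\epsilon(k+1,\mu) \le \epsilon\mu\, I_\epsilon(k,\mu)$ (this last inequality following from the term-ratio bound $\le \mu\epsilon$ applied uniformly to the shifted series), which immediately gives $I_\epsilon(k,\mu) \le (\epsilon\mu)^k I_\epsilon(0,\mu) = (\epsilon\mu)^k$. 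The key inequality $I_\epsilon(k+1,\mu) \le \mu\epsilon\, I_\epsilon(k,\mu)$ is the crux, and it holds because every term of the series for $I_\epsilon(k+1,\mu)$ is at most $\mu\epsilon$ times the corresponding term of $I_\epsilon(k,\mu)$: term $j$ of the former is $\binom{\mu+j-1}{j}\epsilon^j$ for $j \ge k+1$, which equals $\frac{\mu+j-1}{j}\epsilon$ times term $j-1$ of the latter, and $\frac{\mu+j-1}{j} \le \mu$ for $j \ge 1$.
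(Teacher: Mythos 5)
Your final argument is correct, but it reaches the bound by a genuinely different route than the paper's. After several admittedly lossy detours (the union bound over compositions, the geometric-tail estimate with a stray factor of $2$), which you yourself discard, you settle on a clean two-step induction: $I_\epsilon(0,\mu)=1$, and $I_\epsilon(k+1,\mu)\le \mu\epsilon\, I_\epsilon(k,\mu)$ because term $j$ of the series for $I_\epsilon(k+1,\mu)$ equals $\frac{\mu+j-1}{j}\epsilon$ times term $j-1$ of the series for $I_\epsilon(k,\mu)$, and $\frac{\mu+j-1}{j}\le\mu$ for all $j\ge 1$, $\mu\ge 1$; summing the paired terms (the series converge for $\epsilon\in(0,1)$, so this is legitimate) and iterating gives $I_\epsilon(k,\mu)\le(\mu\epsilon)^k$. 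This is sound, and it is both more elementary and strictly more general than the paper's proof: it never uses the hypothesis $\epsilon\mu\le 1/2$. The paper instead studies $\Phi_{\epsilon,\mu}(k)=(\epsilon\mu)^k-I_\epsilon(k,\mu)$, verifies $\Phi_{\epsilon,\mu}(1)\ge 0$ and $\Phi_{\epsilon,\mu}(k)\to 0$, and shows $\Phi_{\epsilon,\mu}$ is non-increasing on $k\ge 2$ via the per-term inequality $(1-\epsilon)^\mu\binom{\mu+j-1}{j}\le(1-\epsilon\mu)\mu^j$, which is exactly where the constraint $\epsilon\mu\le 1/2$ and a calculus estimate on $(1-\epsilon)^\mu/(1-\epsilon\mu)$ enter; your multiplicative recursion sidesteps all of that, at no cost since the bound is vacuous anyway once $\mu\epsilon\ge 1$. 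If you write this up, cut the exploratory dead ends, keep only the induction, and state explicitly that $I_\epsilon(0,\mu)=1$ (from the generating-function identity or the probabilistic interpretation), since that is your base case.
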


\begin{proof} For $k = 0$ the statement holds because $I_\epsilon(0,\mu) = 1$. For $\mu = 1$ we have $I_\epsilon(k,1) = \epsilon^k$. It remains to prove the inequality for $\mu \geq 2$ and $k \geq 1$. 
We will use the following ordinary generating function identity for binomial coefficients: 

$$\sum_{j=0}^\infty \binom{a+j}{a} x^j = \frac{1}{(1-x)^{a+1}}$$

This allows us to write $I_\epsilon(k,\mu)$ as: 

$$I_\epsilon(k,\mu) = (1-\epsilon)^\mu \cdot \left(\sum_{j=0}^\infty \binom{\mu+j-1}{j}\epsilon^j - \sum_{j=0}^{k-1} \binom{\mu+j-1}{j}\epsilon^j \right) $$ 

$$ = (1-\epsilon)^\mu \cdot (\frac{1}{(1-\epsilon)^\mu} - \sum_{j=0}^{k-1} \binom{\mu+j-1}{j}\epsilon^j) = 1 - (1-\epsilon)^\mu \sum_{j=0}^{k-1} \binom{\mu+j-1}{j} \epsilon^j$$

Using the geometric series identity $(\epsilon \mu)^k = 1 - (1-\epsilon \mu) \cdot \sum_{j = 0}^{k-1} (\epsilon \mu)^j$, we obtain:

$$(\epsilon \mu)^k - I_\epsilon(k,\mu) = \sum_{j = 0}^{k-1} (1-\epsilon)^\mu \cdot \binom{\mu+j-1}{j} \epsilon^j - (1-\epsilon \mu) (\epsilon \mu)^j $$

$$= \sum_{j = 0}^{k-1} ((1-\epsilon)^\mu \cdot \binom{\mu+j-1}{j} - (1-\epsilon \mu) \mu^j) \epsilon^j  $$

Let $\Phi_{\epsilon,\mu}(k) = (\epsilon \mu)^k - I_\epsilon(k,\mu)$ so that the goal is to show $\Phi_{\epsilon, \mu}(k) \geq 0$ for $k, \mu \in \NN$ where $\mu \geq 2$ and $\epsilon \mu \leq 1/2$. Observe that:

$$\Phi_{\epsilon,\mu}(1) = \epsilon \mu - 1 + (1-\epsilon)^\mu = (1- \epsilon)^\mu - (1 - \epsilon \mu) \geq 0$$ 

$$\lim_{k \rightarrow \infty} \Phi_{\epsilon, \mu}(k) = 0$$

Thus, it suffices to show that $\Phi_{\epsilon, \mu}(k)$ is non-increasing on the interval $k \in [2, \infty)$ when $\epsilon \mu \leq 1/2$ as this implies that $\Phi_{\epsilon, \mu}(k) \geq 0$ for all $k \geq 1$, $\mu \geq 2$, and $\epsilon \mu \leq 1/2$. Moreover, we can easily show this by showing that for all $\mu,j \geq 2$ and $\epsilon \mu \leq 1/2$: 

$$(1-\epsilon)^\mu \cdot \binom{\mu+j-1}{j} \leq (1- \epsilon \mu) \cdot \mu^j $$ 

Letting $R(\mu,j) = \frac{\binom{\mu + j -1}{j}}{\mu^j}$, observe that:

$$R(\mu, j) = \frac{\binom{\mu + j -1}{j}}{\mu^j} = \frac{\prod_{i = 0}^{j-1} (\mu + i)}{j!\ \mu^j} = \prod_{i = 0}^{j-1} \frac{\mu+i}{\mu \cdot (i+1)} $$ 

Since $\mu+i \leq \mu \cdot (i+1)$ for all $\mu \geq 2$ and $i \geq 0$, it follows that $R(\mu, j) \leq R(\mu, 2) = \frac{1}{2} \cdot (1 + \frac{1}{\mu})$ for $j \geq 2$. Furthermore, for $\epsilon \in (0, 1/\mu)$: 

$$\frac{d}{d\epsilon} \frac{(1 - \epsilon)^\mu}{1 - \epsilon \mu} = \frac{(\mu - 1) \epsilon \mu (1-\epsilon)^{\mu-1}}{(1- \epsilon \mu)^2} \geq 0 $$ 

Thus, for $\epsilon \in (0, \frac{1}{2\mu}]$ and $\mu \geq 2$:  

$$\frac{(1-\epsilon)^\mu}{(1-\epsilon \mu)} \cdot R(\mu,j) \leq \frac{(1-\frac{1}{2\mu})^\mu}{1/2} \cdot R(\mu,2)  \leq \frac{(1 + \frac{1}{\mu})}{\sqrt{e}} \leq e^{\frac{1}{\mu} - {1/2}} \leq 1  $$

This completes the proof. 
\end{proof}

\begin{lemma}
\label{lem:regbeta2}
$I_\epsilon(k,\mu) \leq \epsilon^k \cdot k^\mu$ for $k \geq 2\mu$ and $\epsilon \leq 1/2$. 
\end{lemma}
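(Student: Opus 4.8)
The plan is to recognise $I_\epsilon(k,\mu)$ as a binomial tail probability and then bound that tail crudely, using the hypothesis $k\ge 2\mu$ to control the combinatorial factors. Concretely, as observed in the Remark following \cref{thm:csz}, $I_\epsilon(k,\mu)=P[Z_1+\dots+Z_\mu\ge k]$ where the $Z_i$ are i.i.d.\ geometric with $P[Z_i\ge t]=\epsilon^t$; viewing $Z_i$ as the number of failures before the $i$-th success in a stream of independent trials, each succeeding with probability $1-\epsilon$, the event $\{Z_1+\dots+Z_\mu\ge k\}$ is exactly the event that the first $k+\mu-1$ trials contain at most $\mu-1$ successes, i.e.\ at least $k$ failures. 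Hence, writing $n:=k+\mu-1$,
$$I_\epsilon(k,\mu)=P[\mathrm{Bin}(n,\epsilon)\ge k]=\sum_{i=k}^{n}\binom{n}{i}\epsilon^i(1-\epsilon)^{n-i}.$$
This is the classical negative-binomial/binomial CDF correspondence; alternatively it can be extracted directly from the generating-function identity $\sum_{j\ge 0}\binom{a+j}{a}x^j=(1-x)^{-a-1}$ already used in \cref{lem:regbeta1}.

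The next step is to estimate this finite sum. Since $k\ge 2\mu$ we have $\mu-1<k/2$, so $n=k+\mu-1<\tfrac32 k$ and therefore $k>\tfrac23 n>\tfrac12 n$; consequently $\binom{n}{i}$ is non-increasing for $i\ge k$, and dropping the factors $(1-\epsilon)^{n-i}\le 1$ gives
$$I_\epsilon(k,\mu)\le\binom{n}{k}\sum_{i=k}^{n}\epsilon^i\le\binom{n}{k}\frac{\epsilon^k}{1-\epsilon}\le 2\binom{n}{k}\epsilon^k,$$
where the last step uses $\epsilon\le\tfrac12$. I would then bound the binomial coefficient using $\mu-1\le k/2$ once more:
$$\binom{n}{k}=\binom{k+\mu-1}{\mu-1}=\frac{(k+1)(k+2)\cdots(k+\mu-1)}{(\mu-1)!}\le\frac{(k+\mu-1)^{\mu-1}}{(\mu-1)!}\le\frac{(3k/2)^{\mu-1}}{(\mu-1)!},$$
which yields $I_\epsilon(k,\mu)\le \epsilon^k k^{\mu-1}\cdot \dfrac{2(3/2)^{\mu-1}}{(\mu-1)!}$.

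Finally it remains to check $\dfrac{2(3/2)^{\mu-1}}{(\mu-1)!}\le k$, for which, since $k\ge 2\mu$, it is enough to show $(3/2)^{\mu-1}\le\mu!$; this holds for every $\mu\ge 1$ by a one-line induction (equality at $\mu=1$, and each step multiplies the left-hand side by $3/2$ and the right-hand side by $\mu+1\ge 2$). I expect the only delicate point to be the first step — stating the binomial-tail reformulation precisely, including the direction of the inequality and the number of trials $k+\mu-1$ — together with verifying that the several factors of $\tfrac32$ and $2$ that accumulate really do fit inside the budget $k\ge 2\mu$; the tightest case is $\mu=2$, where the final requirement reads $2\cdot(3/2)/1!=3\le 2\mu=4$. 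An alternative route would be to imitate the monotonicity-in-$k$ argument of \cref{lem:regbeta1}, but that looks more cumbersome here since both $\epsilon^k k^\mu$ and $I_\epsilon(k,\mu)$ are decreasing on the relevant range.
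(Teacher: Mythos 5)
Your proof is correct, and it takes a genuinely different route from the paper's. The paper argues via the gap function $\Psi_{\epsilon,\mu}(k)=\epsilon^k k^\mu-I_\epsilon(k,\mu)$: it checks $\mu=1$ directly, notes $\Psi_{\epsilon,\mu}(k)\to 0$ as $k\to\infty$, and shows $\Psi_{\epsilon,\mu}$ is non-increasing on $[2\mu,\infty)$ by bounding the increment ratio $\Delta_\epsilon(k,\mu)=(1-\epsilon)^\mu\binom{\mu+k-1}{k}k^{-\mu}+\epsilon(1+\tfrac1k)^\mu$ by $1$ at $k=2\mu$, $\mu=2$ (the same template as \cref{lem:regbeta1}). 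You instead invoke the standard negative-binomial/binomial CDF identity $I_\epsilon(k,\mu)=P[\mathrm{Bin}(k+\mu-1,\epsilon)\ge k]$ — your event-equivalence argument for it is complete and correct — and then bound the finite tail crudely: since $k\ge 2\mu$ forces $k>n/2$ with $n=k+\mu-1$, the largest term is $\binom{n}{k}$, the geometric series costs a factor $(1-\epsilon)^{-1}\le 2$, and $\binom{n}{k}=\binom{k+\mu-1}{\mu-1}\le (3k/2)^{\mu-1}/(\mu-1)!$, after which $(3/2)^{\mu-1}\le\mu!$ and $k\ge 2\mu$ absorb all constants (and the $\mu=1$ case, which the paper treats separately, is covered uniformly). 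Your route avoids the limit argument and the slightly delicate numerical verification $\Delta_\epsilon(2\mu,\mu)<1$, and in fact establishes the stronger intermediate bound $I_\epsilon(k,\mu)\le 2\epsilon^k\binom{k+\mu-1}{\mu-1}$, which is much smaller than $\epsilon^k k^\mu$ for large $\mu$; what the paper's approach buys is methodological uniformity with \cref{lem:regbeta1} and no reliance on the beta/binomial correspondence (which it never introduces), keeping everything inside the series definition of $I_\epsilon$.
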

This is tighter than Bound 1 for larger $k$, i.e. when $k^\mu < \mu^k$.
\begin{proof}
Similar to the analysis in Bound 1, define $\Psi_{\epsilon, \mu}(k) = \epsilon^k k^\mu - I_\epsilon(k,\mu)$ so that:

$$\Psi_{\epsilon, \mu}(k) = \epsilon^k k^\mu - 1 + (1-\epsilon)^\mu \sum_{j=0}^{k-1}\binom{\mu+j-1}{j} \epsilon^j $$

Bound 2 holds iff $\Psi_{\epsilon, \mu}(k) \geq 0$ for all $k \geq 2\mu$ and $\epsilon \leq 1/2$. For $\mu = 1$ we have $I_\epsilon(k, 1) = \epsilon^k$ so 

$$\Psi_{\epsilon, 1}(k) = \epsilon^k \cdot k - \epsilon^k \geq 0$$

Furthermore, $\lim_{k \rightarrow \infty} \Psi_{\epsilon, \mu}(k) = 0$. Thus, it suffices to show that $\Psi_{\epsilon, \mu}$ is non-increasing on the interval $[2\mu,\infty)$ for $\mu \geq 2$ and $\epsilon \leq 1/2$. 

Observe that: 

$$\Psi_{\epsilon, \mu}(k+1) - \Psi_{\epsilon,\mu}(k) = \epsilon^{k+1} (k+1)^\mu - \epsilon^k k^\mu + (1-\epsilon)^\mu \binom{\mu+k-1}{k}\epsilon^k$$

$$= \epsilon^k \left[\epsilon(k+1)^\mu - k^\mu + (1-\epsilon)^\mu \binom{\mu+k-1}{k}\right]$$

Defining: 

$$\Delta_\epsilon(k,\mu) := (1- \epsilon)^\mu \frac{\binom{\mu+k - 1}{k}}{k^\mu} + \epsilon (1 + \frac{1}{k})^\mu $$

$\Psi_{\epsilon, \mu}(k)$ is non-increasing on $k \in [2\mu, \infty]$ iff $\Delta_\epsilon(k,\mu) \leq 1$ for all $k \geq 2\mu$. We will prove this for $\mu \geq 2$ and $\epsilon \leq 1/2$. Using the inequality $(1 + \frac{1}{k})^\mu \leq \sqrt{e}$ for $k \geq 2\mu$: 

$$\Delta_\epsilon(2\mu, \mu) \leq (1- \epsilon)^\mu \frac{\binom{3n - 1}{2\mu}}{(2\mu)^\mu} + \epsilon \sqrt{e}  $$

The right hand side is decreasing as $\mu \rightarrow \infty$ and thus for $\mu \geq 2$ and $\epsilon \leq 1/2$:

$$ \Delta_\epsilon(2\mu, \mu) \leq (1 - \epsilon)^2 \cdot \frac{\binom{5}{4}}{4^2} + \epsilon \sqrt{e} = (1 - \epsilon)^2 \cdot \frac{5}{16} + \epsilon \sqrt{e}\ <  1$$ 

To see why this is less than 1 for $\epsilon \leq 1/2$, note that $\frac{d}{d\epsilon} (1- \epsilon)^2 \cdot c_1 + \epsilon \cdot c_2 = 2c_1 \epsilon + c_2 - 2 c_1$ is positive when $\epsilon \geq 0$ and $c_2 \geq 2 c_1$, and $\sqrt{e} > \frac{5}{8}$. Thus, on the interval $\epsilon \in [0, \frac{1}{2}]$, $(1 - \epsilon)^2 \frac{5}{16} + \epsilon \sqrt{e} \leq \frac{1}{4} \cdot \frac{5}{16} + \frac{1}{2} \sqrt{e} = 0.902...$

Finally, since $\Delta_\epsilon(k,\mu)$ is decreasing as $k \rightarrow \infty$:

$$\forall_{k \geq 2\mu, \mu \geq 2, \epsilon \leq 1/2} \ \Delta_\epsilon(k,\mu) \leq \Delta_\epsilon(2\mu, \mu) < 1 $$
\end{proof}

\begin{corollary}\label{cor:regbeta} 
For any prime $p$ and any positive integer $\mu$
	$$P[\sum_{i=1}^\mu X_i \geq r] = I_{\frac{1}{p}}(r, \mu) \leq \begin{cases}
			\frac{r^\mu}{p^r} \text{ if } r\geq 2\mu\\
		(\frac{\mu}{p})^r \text{ if } p\geq 2\mu \\
		1 \ \text{otherwise} \\ 
	\end{cases}$$, where $X_i$ are independent geometric variables with parameter $(\frac{1}{p})$ and $P[X_i\geq r]=\left(\frac{1}{p}\right)^r$
\end{corollary}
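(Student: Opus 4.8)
The plan is to derive \cref{cor:regbeta} as an immediate consequence of \cref{lem:regbeta1}, \cref{lem:regbeta2}, and the trivial fact that any probability is at most $1$. First I would recall the distributional identity already invoked in the proof of \cref{thm:csz}: since each $X_i$ is a geometric variable with $P[X_i \geq k] = (1/p)^k$, the sum $Y := \sum_{i=1}^\mu X_i$ is negative binomial, and the tail distribution of a sum of independent geometric variables is exactly the regularized incomplete beta function evaluated at $\epsilon = 1/p$, i.e. $P[Y \geq r] = I_{1/p}(r,\mu)$. This justifies the equality in the corollary statement and reduces the problem to bounding $I_{1/p}(r,\mu)$ in each of the three regimes.

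For the regime $r \geq 2\mu$, I would apply \cref{lem:regbeta2} with $\epsilon = 1/p$ and $k = r$. Its two hypotheses are $k \geq 2\mu$ and $\epsilon \leq 1/2$: the first is exactly the case assumption $r \geq 2\mu$, and the second holds automatically because $p$ is prime, so $p \geq 2$ and hence $1/p \leq 1/2$. This yields $I_{1/p}(r,\mu) \leq \epsilon^r r^\mu = r^\mu/p^r$.

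For the regime $p \geq 2\mu$, I would apply \cref{lem:regbeta1} with $\epsilon = 1/p$. Its hypothesis is $\epsilon\mu \leq 1/2$, which is equivalent to $p \geq 2\mu$, precisely the case assumption. This gives $I_{1/p}(r,\mu) \leq (\epsilon\mu)^r = (\mu/p)^r$. The remaining (``otherwise'') case, namely $r < 2\mu$ and $p < 2\mu$, is covered by the bound $I_{1/p}(r,\mu) = P[Y \geq r] \leq 1$, and these three cases are plainly exhaustive.

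I do not anticipate any real obstacle: the corollary is a bookkeeping repackaging of the two preceding lemmas together with the trivial bound. The only steps requiring (minor) care are checking that the analytic side conditions of the lemmas ($\epsilon \leq 1/2$ for \cref{lem:regbeta2} and $\epsilon\mu \leq 1/2$ for \cref{lem:regbeta1}) translate correctly into the stated conditions on the prime $p$, and confirming that the case split is complete so that some bound always applies.
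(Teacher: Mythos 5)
Your proposal is correct and matches the paper's (implicit) argument exactly: the paper offers no separate proof of \cref{cor:regbeta}, treating it as an immediate consequence of \cref{lem:regbeta1} and \cref{lem:regbeta2} with $\epsilon = 1/p$, together with the negative-binomial tail identity and the trivial bound by $1$. Your verification of the side conditions ($1/p \leq 1/2$ since $p \geq 2$, and $\mu/p \leq 1/2$ iff $p \geq 2\mu$) is precisely the bookkeeping the paper leaves to the reader.
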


\section{Inverse LCSZ (for cryptographers)}
\cref{thm:csz} (LCSZ) bounds the probability $\mathbb{P}_{\mathbf{x} \leftarrow [0,m)^\mu} [f(\mathbf{x} \equiv 0 \bmod N]$ for given values of $\mu, N,$ and $m$, which has the form $\frac{\mu}{m} + \delta_{N,\mu}$. In the case that $N$ is prime, $\delta_{N, \mu} = \frac{\mu}{N}$, which agrees with the standard Schwartz-Zippel lemma applied to $\mu$-linear polynomials. The term $\delta_{N,\mu}$ for composite $N$, which is dependent on both $\mu$ and the factorization of $N$, has a complicated closed form expression in terms of a product of regularized beta functions.

Motivated by applications to cryptography, this section analyzes the inverse: for a given $\mu, \lambda \in \NN$ what size threshold $t(\lambda, \mu) \in \NN$ is sufficient such that $\delta_{N,\mu} \leq 2^{-\lambda}$ for all $N \geq t(\lambda, \mu)$?  
(Cryptographers often need to know how to choose parameters in order to achieve a target probability bound). In other words: 

\begin{equation}
 \label{eq:optimization}
 	t(\lambda, \mu) :=\sup \{N \in \NN: \prod_{(p, r) \in S(N)}  I_{\frac{1}{p}}(r,\mu) \geq 2^{-\lambda} \} \tag{$t(\lambda, \mu)$ def}
 \end{equation}

For $\mu=1$, since $I_{1/p}(r,1) = \frac{1}{p^r}$ and $\prod_{(p, r) \in S(N)}  I_{\frac{1}{p}}(r,\mu) = \frac{1}{N}$, it is easy to see that $t(\lambda, \mu) = 2^{\lambda}$. For $\mu\geq 2$, the value of $t(\lambda, \mu)$ (or even an upper bound) is not nearly as easy to derive. For the rest of this section we will focus on this $\mu\geq 2$ case. We will analytically derive an upper bound to $t(\lambda, \mu)$, showing that $\log t(\lambda, \mu) \in O(\mu^{2+\epsilon}+\frac{\lambda}{\epsilon})$ for any $\epsilon \geq \log_\mu(2)$.  

\begin{theorem}[Inverse LCSZ]
\label{thm:cryptothm}
For all $\mu \geq 2$, $\epsilon \geq \log_\mu(2)$, and all $N$ such that $$\log N\geq 4 \mu^{2+\epsilon}+ (1+\frac{1}{\epsilon})\cdot \lambda$$
we have that for any $\mu$-linear polynomial $f$ that is coprime with $N$
	$$\mathbb{P}_{x\gets [0,m)^\mu} [f(x)\equiv 0 \bmod N]\leq 2^{-\lambda} +\frac{\mu}{m}$$
	
\end{theorem}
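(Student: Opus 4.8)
The plan is to show that whenever $\log_2 N$ is at least the stated threshold, the error term $\delta_{N,\mu}=\prod_{(p,r)\in S(N)} I_{1/p}(r,\mu)$ appearing in \cref{thm:csz} is at most $2^{-\lambda}$; the claimed probability bound is then immediate from \cref{thm:csz}. (All logarithms here are base $2$, as in the abstract.) Taking logarithms and recalling the definition of $t(\lambda,\mu)$, it suffices to bound the optimum of the knapsack-type problem
\[
\max\Big\{\,\textstyle\sum_{(p,r)\in S} r\log p \ :\ \textstyle\sum_{(p,r)\in S}\big({-\log I_{1/p}(r,\mu)}\big)\le \lambda\,\Big\}
\]
by $4\mu^{2+\epsilon}+(1+\tfrac1\epsilon)\lambda$, where $S$ ranges over all finite sets of pairs (distinct prime, positive multiplicity). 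I would view a prime power $p^r$ as an item of value $v(p,r)=r\log p$ and weight $w(p,r)=-\log I_{1/p}(r,\mu)$, and relax each weight downward using \cref{cor:regbeta}: $w(p,r)\ge r\log(p/\mu)$ when $p\ge 2\mu$, and $w(p,r)\ge r\log p-\mu\log r$ when $r\ge 2\mu$ (and $w\ge 0$ always).

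Next I would partition the prime powers in $S$ into a \emph{large-prime} part ($p\ge\mu^{1+\epsilon}$) and a \emph{small-prime} part ($p<\mu^{1+\epsilon}$); the hypothesis $\epsilon\ge\log_\mu 2$ guarantees $\mu^{1+\epsilon}\ge 2\mu$, so \cref{cor:regbeta} applies to every large prime. A one-line computation gives $\frac{\log p}{\log(p/\mu)}\le 1+\tfrac1\epsilon$ precisely when $p\ge\mu^{1+\epsilon}$, hence $v(p,r)\le(1+\tfrac1\epsilon)w(p,r)$ for every large-prime item; since all weights are non-negative and sum to at most $\lambda$, the large-prime part contributes at most $(1+\tfrac1\epsilon)\lambda$ to $\log N$.

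For the small-prime part the goal is a bound of $4\mu^{2+\epsilon}$ (any further weight-dependent slack being absorbed into the $(1+\tfrac1\epsilon)\lambda$ term, using $1+\tfrac1\epsilon\ge 2$ for $\epsilon\le 1$). I would split each multiplicity $r_p$ with $p<\mu^{1+\epsilon}$ at the threshold $2\mu$. The part with $r_p\le 2\mu$ contributes at most $2\mu\sum_{p<\mu^{1+\epsilon}}\log p=O(\mu^{2+\epsilon})$ by Chebyshev's bound $\sum_{p\le x}\log p=O(x)$. For the part with $r_p\ge 2\mu$, the estimate $w(p,r)\ge r\log p-\mu\log r$ together with the elementary inequality $\mu\log r\le\tfrac12 r\log p+O(\mu\log\mu)$ (valid for all $r\ge 2\mu$ and $p\ge2$) gives $v(p,r)\le 2w(p,r)+O(\mu\log\mu)$ for each such prime; summing the additive $O(\mu\log\mu)$ terms over the $O(\mu^{1+\epsilon}/\log\mu)$ primes below $\mu^{1+\epsilon}$ (Chebyshev again) yields $O(\mu^{2+\epsilon})$ — the $\log$ factors cancel — and tracking the explicit constants gives the coefficient $4$.

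I expect the small-prime analysis to be the crux, and within it the primes $p\ll\mu$ (especially $p=2$) carrying large multiplicity: for these \cref{cor:regbeta} only supplies the weak estimate $I_{1/p}(r,\mu)\le r^\mu/p^r$, so a whole range of low powers is essentially free in the weight constraint, and one must still bound the free value by $O(\mu^{2+\epsilon})$. The delicate points are picking the split threshold $\Theta(\mu)$ so that the free part and the weight-charged part both land at the right order, verifying $\mu\log r\le\tfrac12 r\log p+O(\mu\log\mu)$ uniformly in $r\ge 2\mu$ and $p\ge 2$, and keeping the numerical constants in Chebyshev's estimates tight enough to reach $4\mu^{2+\epsilon}$ rather than a larger absolute constant. (Separately, a computational route — relaxing to a fractional knapsack over prime powers and running a greedy algorithm, which produces a valid upper bound because the marginal density $\log p/\big(\log I_{1/p}(r-1,\mu)-\log I_{1/p}(r,\mu)\big)$ is eventually monotone in $r$ — yields the sharper thresholds tabulated for concrete $\mu,\lambda$.)
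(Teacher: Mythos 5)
Your overall strategy is the paper's: reduce to the knapsack problem over prime powers with value $r\log p$ and weight $-\log I_{1/p}(r,\mu)$, relax the weights via \cref{cor:regbeta}, charge the large primes ($p\ge\mu^{1+\epsilon}$) at density $1+\tfrac1\epsilon$, and bound the remaining ``free'' value with a prime-counting estimate. But your treatment of the small primes has a genuine gap. You split at the fixed threshold $r=2\mu$ and charge the part $r\ge 2\mu$ at density $2$, explicitly relying on $1+\tfrac1\epsilon\ge 2$, i.e.\ on $\epsilon\le 1$. The theorem is asserted for every $\epsilon\ge\log_\mu 2$, and for $\epsilon>1$ a factor-$2$ charge gives $2\lambda$, which exceeds the claimed $(1+\tfrac1\epsilon)\lambda$; your argument cannot recover this case, because at $r$ just above $2\mu$ and small $p$ (e.g.\ $p=2$, where $2\mu\log p-\mu\log(2\mu)=\mu(1-\log_2\mu)<0$) the relaxed weight is worthless, and to reach density $1+\tfrac1\epsilon$ one needs $r\log p\gtrsim(1+\epsilon)\mu\log r$. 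The paper resolves exactly this by making the cut $\epsilon$-dependent: the weight is set to $0$ for $p<\mu^{1+\epsilon}$ and $r\le 2(1+\epsilon)\frac{\mu\ln\mu}{\ln p}$ (\cref{def:valweight}), and beyond that threshold the marginal density is shown to be at most $1+\tfrac1\epsilon$ for \emph{all} admissible $\epsilon$ (\cref{claim:densitybound2}), with the free region's total value bounded by $\pi(\mu^{1+\epsilon})\cdot 2(1+\epsilon)\mu\log_2\mu\le 4\mu^{2+\epsilon}$ using $\pi(x)\le 1.3x/\ln x$ (\cref{claim:valuebound}).

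A second, related problem is the constant $4$. With your decomposition the free part $r\le 2\mu$ already costs about $2\mu\sum_{p<\mu^{1+\epsilon}}\log_2 p\approx 2.9\,\mu^{2+\epsilon}$ (Chebyshev), and the additive $O(\mu\log\mu)$ slack from the inequality $\mu\log r\le\tfrac12 r\log p+O(\mu\log\mu)$, summed over the $\approx 1.3\,\mu^{1+\epsilon}/((1+\epsilon)\ln\mu)$ small primes, contributes another $\approx 2.6\,\mu^{2+\epsilon}/((1+\epsilon)\ln 2)$; together this lands near $6$--$7\,\mu^{2+\epsilon}$, not $4\mu^{2+\epsilon}$. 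In the paper's accounting there is no per-prime additive slack at all: everything below the $\epsilon$-dependent threshold is free and totals $\le 4\mu^{2+\epsilon}$ because the $(1+\epsilon)\ln\mu$ factors cancel, and everything above it is charged at exactly $1+\tfrac1\epsilon$. So the fix for both issues is the same single change --- replace the $r\ge 2\mu$ cut and factor-$2$ charge by the threshold $r>2(1+\epsilon)\frac{\mu\ln\mu}{\ln p}$ and the density bound $1+\tfrac1\epsilon$ --- after which your outline matches the paper's proof. (You should also verify, as the paper does in \cref{claim:nondecreasingweight}, that the relaxed weight is nonnegative past that threshold, so the charging is legitimate.)
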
 
By setting $\epsilon=\log_\mu(2)$ we get:
\begin{corollary}
	For all $N$ such that $$\log N\geq 8 \mu^{2}+\log_2(2\mu)\cdot \lambda$$
we have that for any $n$-linear polynomial $f$ that is coprime with $N$
	$$P_{x\gets [0,m)^\mu} [f(x)\equiv 0 \bmod N]\leq 2^{-\lambda} +\frac{\mu}{m}$$

\end{corollary}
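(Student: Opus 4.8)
The plan is to reduce the corollary to \cref{thm:cryptothm} by plugging in the specific value $\epsilon = \log_\mu(2)$, and to check that this choice is admissible, i.e. that $\epsilon \geq \log_\mu(2)$ holds (with equality) so that the hypotheses of \cref{thm:cryptothm} apply verbatim. Concretely, with $\epsilon = \log_\mu 2$ we have $\mu^\epsilon = 2$, so the first term of the threshold becomes $4\mu^{2+\epsilon} = 4\mu^2 \cdot \mu^\epsilon = 8\mu^2$; and the coefficient on $\lambda$ becomes $1 + \tfrac{1}{\epsilon} = 1 + \log_2 \mu = \log_2 2 + \log_2 \mu = \log_2(2\mu)$. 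Hence the hypothesis $\log N \geq 4\mu^{2+\epsilon} + (1+\tfrac1\epsilon)\lambda$ of \cref{thm:cryptothm} becomes exactly $\log N \geq 8\mu^2 + \log_2(2\mu)\cdot\lambda$, which is the corollary's hypothesis, and the conclusion $\mathbb{P}[f(x)\equiv 0 \bmod N]\leq 2^{-\lambda} + \tfrac{\mu}{m}$ is identical.

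The steps, in order, are: (1) observe $\mu \geq 2$, so $\log_\mu 2 > 0$ is well defined and the choice $\epsilon = \log_\mu 2$ satisfies the constraint $\epsilon \geq \log_\mu 2$ with equality, making \cref{thm:cryptothm} applicable; (2) compute $\mu^{\epsilon} = \mu^{\log_\mu 2} = 2$ and substitute into $4\mu^{2+\epsilon} = 4\mu^2\mu^\epsilon$ to get $8\mu^2$; (3) rewrite $\tfrac1\epsilon = \tfrac{1}{\log_\mu 2} = \log_2\mu$ using the change-of-base identity, so $1 + \tfrac1\epsilon = \log_2(2\mu)$; (4) conclude that the corollary's hypothesis on $\log N$ implies the hypothesis of \cref{thm:cryptothm} for this $\epsilon$, and invoke the theorem. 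A minor cosmetic point: the corollary statement writes ``$n$-linear polynomial'' and ``$P_{x\gets[0,m)^\mu}$'' whereas the theorem uses $\mu$ and $\mathbb{P}$; these are the same object and no argument is needed, but one might silently normalize the notation.

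There is essentially no mathematical obstacle here — the corollary is a direct specialization, and the only thing to be careful about is the algebra of the change of base for logarithms (ensuring $1/\log_\mu 2 = \log_2 \mu$ and $\mu^{\log_\mu 2} = 2$), which is routine. If one wanted to be thorough, one could also note that the family of bounds in \cref{thm:cryptothm} is increasing in the ``cost'' as $\epsilon$ decreases toward $\log_\mu 2$ in the $\mu^{2+\epsilon}$ term but decreasing in the $(1+1/\epsilon)\lambda$ term, so $\epsilon=\log_\mu 2$ is simply the endpoint of the allowed range and no optimization over $\epsilon$ is claimed — the corollary is just the ``clean'' instance obtained by killing the fractional exponent on $\mu$.

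\begin{proof}
Since $\mu \geq 2$, the quantity $\epsilon := \log_\mu(2)$ is a well-defined positive real number, and it trivially satisfies $\epsilon \geq \log_\mu(2)$ with equality, so \cref{thm:cryptothm} applies with this choice of $\epsilon$. For this $\epsilon$ we have $\mu^{\epsilon} = \mu^{\log_\mu(2)} = 2$, hence
$$4\mu^{2+\epsilon} = 4\mu^2 \cdot \mu^{\epsilon} = 8\mu^2.$$
Moreover, by change of base, $\frac{1}{\epsilon} = \frac{1}{\log_\mu(2)} = \log_2(\mu)$, so
$$1 + \frac{1}{\epsilon} = 1 + \log_2(\mu) = \log_2(2) + \log_2(\mu) = \log_2(2\mu).$$
Therefore the hypothesis $\log N \geq 8\mu^2 + \log_2(2\mu)\cdot\lambda$ is exactly the hypothesis $\log N \geq 4\mu^{2+\epsilon} + (1+\frac{1}{\epsilon})\cdot\lambda$ of \cref{thm:cryptothm} for $\epsilon = \log_\mu(2)$. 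Invoking \cref{thm:cryptothm} with this $\epsilon$ yields, for any $\mu$-linear polynomial $f$ coprime with $N$,
$$\mathbb{P}_{x\gets [0,m)^\mu} [f(x)\equiv 0 \bmod N]\leq 2^{-\lambda} +\frac{\mu}{m},$$
which is the claimed bound.
\end{proof}
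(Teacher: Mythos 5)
Your proof is correct and is exactly the paper's route: the corollary is obtained by setting $\epsilon = \log_\mu(2)$ in \cref{thm:cryptothm}, whence $4\mu^{2+\epsilon} = 8\mu^2$ and $1 + \tfrac{1}{\epsilon} = \log_2(2\mu)$. The paper states this specialization without spelling out the algebra, and your write-up simply makes those routine steps explicit.
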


\subsection{Proof of Inverse LCSZ (\cref{thm:cryptothm})} 
\label{sec:proofinverselcsz}
 By \cref{thm:csz} (CSZ) we have that for $N=\prod_i p_i^{r_i}:$

$$\mathbb{P}_{\xvec\gets [0,m)^\mu}[f(\xvec)\equiv 0 \bmod N]\leq \prod_{i} I_{\frac{1}{p_i}}(r_i,\mu) + \frac{\mu}{m}.$$

For $\mu=1$ and $\log_2(N)\geq \lambda$, \cref{thm:csz} shows that $\mathbb{P}_{x\gets [0,m)} [f(x)\equiv 0 \bmod N]\leq 2^{-\lambda} +\frac{1}{m}$. This is derived by substituting $I_{1/p}(r, 1) = \frac{1}{p^r}$, which gives $\mathbb{P}_{x\gets [0,m)} [f(x)\equiv 0 \bmod N]\leq \frac{1}{N}+\frac{1}{m}$. The case $\mu\geq 2$ is more complicated. This is the focus of the rest of the proof.. 

For a given $N \in \NN$, let $S(N)$ denote the set of pairs $(p, r)$ where $p$ is a prime divisor of $N$ and $r$ is its multiplicity, i.e. $N = \prod_{(p, r) \in S} p^r$. Define: 

 \begin{equation}
 \label{eq:optimization}
 	t(\lambda, \mu) :=\sup \{N \in \NN: \prod_{(p, r) \in S(N)}  I_{\frac{1}{p}}(r,\mu) \geq 2^{-\lambda} \} \tag{$t(\lambda, \mu)$ def}
 \end{equation}
 
 It follows from \cref{thm:csz} (CSZ) that if $N \geq t(\lambda, \mu)$ then 

$$\mathbb{P}_{\xvec\gets [0,m)^\mu}[f(\xvec)\equiv 0 \bmod N]\leq 2^{-\lambda}  + \frac{\mu}{m}.$$

 
 
Assuming $t(\lambda, \mu) < \infty$, we obtain the following constrained maximization problem:
\begin{equation}
\label{eq:maximization}
\tag{Constrained Maximization 1}
	\begin{split}
	 \log_2 t(\lambda, \mu) := &\max_{N \in \NN}   \log_2 N \  \text{subject to } \ \sum_{(p,r) \in S(N)} -\log_2 I_{\frac{1}{p}}(r,\mu) \leq  \lambda. 
\end{split}
\end{equation}

In order to derive an upper bound on $\log_2 t(\lambda, \mu)$, we construct a sequence of modified maximization problems, each of which is an upper bound to the prior. The last in this sequence is a knapsack problem for which we analytically derive an upper bound. 

\begin{definition}\label{def:valweight} 
Let $\val(p,r) := r \log_2 p$ and let $\weight_\mu(p,r) = - \log_2 I_{\frac{1}{p}}(r,\mu)$. Additionally, for all $\epsilon\geq \log_\mu(2)$, let: 
	$$\weight_{\mu,\epsilon}(p,r):=\begin{cases}
r \cdot (\log_2(p)-\log_2(\mu)) \text{ if } p\geq \mu^{1+\epsilon}\\
r \cdot \log_2(p)-\mu\cdot \log_2(r) \text{ if } p< \mu^{1+\epsilon} \wedge r > 2(1+\epsilon) \frac{\mu  \ln \mu}{\ln p}  \\
0 \text{ otherwise}
\end{cases}$$
\end{definition}

\begin{claim}\label{claim:weightbound} 
For any prime $p$ and $r \in \NN$,  if $\epsilon \geq \log_\mu(2)$ then $\weight_\mu(p,r) \leq \weight_{\mu,\epsilon}(p,r)$. 
\end{claim}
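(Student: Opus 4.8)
The plan is to carry out a case analysis that mirrors the three-way definition of $\weight_{\mu,\epsilon}(p,r)$, reducing each branch to the corresponding line of \cref{cor:regbeta}. The underlying observation is purely structural: since $\weight_\mu(p,r) = -\log_2 I_{1/p}(r,\mu)$ and the three quantities on the right-hand side of \cref{cor:regbeta} are $\frac{r^\mu}{p^r}$, $(\frac{\mu}{p})^r$, and $1$, taking $-\log_2$ of \cref{cor:regbeta} produces exactly the expressions $r\log_2 p - \mu\log_2 r$, $r(\log_2 p - \log_2\mu)$, and $0$ --- which are precisely the three expressions defining $\weight_{\mu,\epsilon}(p,r)$. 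So the real content of the claim is the matching-up: whenever a given branch of $\weight_{\mu,\epsilon}$ is active, the side condition required to invoke the corresponding branch of \cref{cor:regbeta} is satisfied.

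Concretely I would argue as follows. In the branch $p \ge \mu^{1+\epsilon}$ the relevant bound is the $(\frac{\mu}{p})^r$ bound, valid when $p \ge 2\mu$; this holds because $\epsilon \ge \log_\mu 2$ gives $\mu^\epsilon \ge 2$, hence $\mu^{1+\epsilon} \ge 2\mu$ and so $p \ge \mu^{1+\epsilon} \ge 2\mu$. In the branch $p < \mu^{1+\epsilon}$ and $r > 2(1+\epsilon)\frac{\mu\ln\mu}{\ln p}$ the relevant bound is the $\frac{r^\mu}{p^r}$ bound, valid when $r \ge 2\mu$; this holds because $p < \mu^{1+\epsilon}$ means $\ln p < (1+\epsilon)\ln\mu$, so $\frac{(1+\epsilon)\ln\mu}{\ln p} > 1$ and therefore $r > 2(1+\epsilon)\frac{\mu\ln\mu}{\ln p} > 2\mu$. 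In the remaining (``otherwise'') branch, $\weight_{\mu,\epsilon}(p,r) = 0$, corresponding to the trivial bound $I_{1/p}(r,\mu) \le 1$. In each of the three branches the claimed inequality $\weight_\mu(p,r) \le \weight_{\mu,\epsilon}(p,r)$ then drops out by taking negative base-2 logarithms of the appropriate line of \cref{cor:regbeta} (which itself rests on \cref{lem:regbeta1,lem:regbeta2}).

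The step I expect to be the main obstacle is not any single arithmetic inequality but the bookkeeping that glues the three branches together: one must check that the cut-off $2(1+\epsilon)\frac{\mu\ln\mu}{\ln p}$ in the middle branch (written with natural logarithms, whereas $\weight_{\mu,\epsilon}$ and \cref{cor:regbeta} are stated with base-2 logarithms) is exactly what is needed so that the middle case both dominates the trivial ``otherwise'' case and meshes correctly with the $p \ge \mu^{1+\epsilon}$ case, leaving no pair $(p,r)$ stranded in a regime where none of the three bounds of \cref{cor:regbeta} applies. Once one confirms that the branch conditions of $\weight_{\mu,\epsilon}$ imply the hypotheses of \cref{cor:regbeta} in this way, the claim follows with essentially no further computation.
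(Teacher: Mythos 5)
Your case-matching is exactly the paper's argument: the paper's entire proof is the observation that $\epsilon \ge \log_\mu 2$ gives $\mu^{1+\epsilon} \ge 2\mu$, after which each branch of $\weight_{\mu,\epsilon}$ falls under the corresponding line of \cref{cor:regbeta}; your explicit check that $p<\mu^{1+\epsilon}$ together with $r > 2(1+\epsilon)\frac{\mu\ln\mu}{\ln p}$ forces $r>2\mu$ is precisely the bookkeeping left implicit in that one-liner, and it is correct.

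However, your final step asserts the inequality in the wrong direction, and this is a genuine flaw in the write-up. Since $-\log_2$ is decreasing, taking negative logarithms of $I_{1/p}(r,\mu) \le \frac{r^\mu}{p^r}$, of $I_{1/p}(r,\mu)\le\left(\frac{\mu}{p}\right)^r$, or of $I_{1/p}(r,\mu)\le 1$ yields $\weight_\mu(p,r) \ge \weight_{\mu,\epsilon}(p,r)$ in each branch, not $\le$. In fact the claim as printed is false as stated: in the ``otherwise'' branch $\weight_{\mu,\epsilon}(p,r)=0$, while $\weight_\mu(p,r) = -\log_2 I_{1/p}(r,\mu) > 0$ whenever $I_{1/p}(r,\mu)<1$ (e.g.\ $p=2$, $r=1$, $\mu=2$ gives $I_{1/2}(1,2)=3/4$, so $\weight_\mu(2,1)\approx 0.415>0$). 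The inequality the paper actually needs downstream --- so that the constraint $\sum\weight_{\mu,\epsilon}(p,r)\le\lambda$ in \cref{eq:maximization2} is a relaxation of the constraint $\sum\weight_{\mu}(p,r)\le\lambda$ in \cref{eq:maximization} --- is $\weight_{\mu,\epsilon}(p,r)\le\weight_\mu(p,r)$, and that is exactly what your branch analysis (and the paper's) delivers; the statement of the claim evidently has the inequality reversed, and your proposal inherits that reversal and then incorrectly asserts that this direction ``drops out'' of the logarithm step, when the logarithm step in fact flips it. Flip the final inequality (noting explicitly that $-\log_2$ reverses the bounds of \cref{cor:regbeta}) and your proof is complete and identical in substance to the paper's.
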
 
\begin{proof} 
If $\epsilon \geq \log_\mu(2)$ then $\mu^{1 + \epsilon} \geq 2\mu$ and the claim follows from Corollary~\ref{cor:regbeta} (to \cref{lem:regbeta1} and \cref{lem:regbeta2}).
\end{proof} 

\begin{claim}  \label{claim:nondecreasingweight} 
For any prime $p$, $\weight_{\mu,\epsilon}(p,r)$ is non-decreasing over $r \geq 1$ and increasing for $r >2 (1+\epsilon) \frac{\mu \log \mu}{\log p}$. 
\end{claim}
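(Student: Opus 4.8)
The plan is to verify monotonicity of $\weight_{\mu,\epsilon}(p,r)$ in $r$ by examining the three pieces of its piecewise definition separately, and then checking the ``seams'' where the definition switches from one case to another. Throughout, $p$ and $\mu$ are fixed and we only vary $r \in \NN$. Write $r^* := 2(1+\epsilon)\frac{\mu \ln \mu}{\ln p}$ for the threshold that separates the middle case from the ``otherwise'' case when $p < \mu^{1+\epsilon}$.

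First I would handle the case $p \geq \mu^{1+\epsilon}$. Here $\weight_{\mu,\epsilon}(p,r) = r(\log_2 p - \log_2 \mu)$, which is linear in $r$ with slope $\log_2(p/\mu)$. Since $p \geq \mu^{1+\epsilon} > \mu$ (using $\epsilon \geq \log_\mu 2 > 0$ and $\mu \geq 2$), this slope is strictly positive, so $\weight_{\mu,\epsilon}(p,r)$ is strictly increasing in $r \geq 1$ — and in particular non-decreasing, and since the threshold $r^*$ is vacuous in this regime, the ``increasing for $r > r^*$'' claim holds trivially. Second, I would handle $p < \mu^{1+\epsilon}$. For $r \leq r^*$ the weight is identically $0$, so it is (weakly) non-decreasing on that range. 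For $r > r^*$ the weight equals $r\log_2 p - \mu \log_2 r$; differentiating with respect to the continuous variable $r$ gives $\log_2 p - \frac{\mu}{r \ln 2}$, which is positive precisely when $r > \frac{\mu}{\ln 2 \cdot \log_2 p} = \frac{\mu}{\ln p}$. So it remains to check that $r^* \geq \frac{\mu}{\ln p}$, i.e. $2(1+\epsilon)\frac{\mu \ln\mu}{\ln p} \geq \frac{\mu}{\ln p}$, which reduces to $2(1+\epsilon)\ln\mu \geq 1$; this holds since $\mu \geq 2$ gives $\ln \mu \geq \ln 2 > 1/2$ and $1+\epsilon \geq 1$. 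Hence $r\log_2 p - \mu\log_2 r$ is strictly increasing for real $r > r^*$, and a fortiori over integers $r > r^*$.

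Third — and this is the step that needs the most care — I would check the seam at the transition $r = \lceil r^*\rceil$ in the case $p < \mu^{1+\epsilon}$: we must confirm that jumping from the ``$0$'' piece to the ``$r\log_2 p - \mu \log_2 r$'' piece does not decrease the value, i.e. that $r\log_2 p - \mu\log_2 r \geq 0$ for the smallest integer $r$ exceeding $r^*$. Equivalently, $\frac{r}{\log_2 r} \geq \frac{\mu}{\log_2 p}$ for $r = \lceil r^*\rceil$. Since $r \mapsto r/\log_2 r$ is increasing for $r \geq 3$ (and we can check $r=1,2$ by hand, noting $r^* \geq 1$ so $\lceil r^* \rceil \geq 2$, and $\log_2 r = 0$ at $r=1$ makes the piece-value $r\log_2 p \geq 0$ trivially), it suffices to verify the inequality at $r = r^*$ itself (continuous relaxation), where it becomes $\frac{2(1+\epsilon)\mu\ln\mu/\ln p}{\log_2(2(1+\epsilon)\mu\ln\mu/\ln p)} \geq \frac{\mu}{\log_2 p}$; plugging in and simplifying, using $\log_2 p > 0$ and $\ln p \le \ln(\mu^{1+\epsilon}) = (1+\epsilon)\ln\mu$, one gets that it reduces to showing $2(1+\epsilon)\ln\mu \cdot \log_2 p \geq \ln p \cdot \log_2\!\big(2(1+\epsilon)\mu\ln\mu/\ln p\big)$, which follows from the bound $\ln p \le (1+\epsilon)\ln\mu$ together with $2\ln\mu \ge 1$ after elementary estimation. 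The main obstacle is precisely this seam inequality — ensuring the ``otherwise'' piece's terminal value $0$ is genuinely $\le$ the starting value of the nontrivial piece at the first integer past $r^*$ — since the other pieces are individually monotone by a one-line derivative computation; I would isolate this as a short auxiliary estimate and dispatch it using $\ln p \le (1+\epsilon)\ln\mu$ and $\mu \ge 2$.

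I should also note the degenerate possibility $p \ge \mu^{1+\epsilon}$ yet $p$ small relative to nothing — but that case is fully covered by the first paragraph — and the possibility that for $p < \mu^{1+\epsilon}$ the threshold $r^*$ is less than $1$, in which case the ``$0$'' piece never applies to any $r \in \NN$ and monotonicity is immediate from the second paragraph's derivative computation alone. Collecting these cases gives that $\weight_{\mu,\epsilon}(p,r)$ is non-decreasing over all $r \geq 1$ and strictly increasing for $r > r^*$, completing the proof.
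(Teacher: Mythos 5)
Your proof is correct and takes essentially the same route as the paper: you verify monotonicity piece by piece and then reduce the only delicate point to showing the middle piece $r\log_2 p-\mu\log_2 r$ is nonnegative just past the threshold $r^*=2(1+\epsilon)\frac{\mu\ln\mu}{\ln p}$, which is exactly the paper's "weight is positive on $S_B$" step (the paper argues it by contradiction via monotonicity of $r/\log_2 r$, you by direct evaluation at $r=r^*$ --- the same key fact). Your caveats about small $r$ and $r^*<1$ are vacuous, since $p<\mu^{1+\epsilon}$ forces $r^*>2\mu\ge 4$, which also places $\lceil r^*\rceil$ safely in the range where $r/\log_2 r$ is increasing.
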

\begin{proof} 
We first show that the function $\weight_{\mu,\epsilon}(p,r)$ is non-decreasing in $r$ in each of the three cases, which comprise three subdivisions of the plane $\textsf{Primes} \times \NN$, which we denote $S_A$, $S_B$, and $S_C$ respectively. $S_A$ contains all $(p,r)$ where $p \geq \mu^{1+\epsilon}$, in which case $\frac{d}{dr} \weight_{\mu,\epsilon}(p,r)= \log_2(p) - \log_2(\mu) > 0$. $S_B$ contains all $(p,r)$ where $r > 2(1+\epsilon) \frac{\mu  \ln \mu}{\ln p} $ and $p < \mu^{1+\epsilon}$, in which case $\frac{d}{dr} \weight_{\mu,\epsilon}(p,r) = \log_2(p) -\frac{\mu}{r \ln 2}$, and $\log_2(p) -\frac{\mu}{r \ln 2} \geq \log_2 p - \frac{1}{2 \ln 2} > 0$. The weight function is constant at $0$ for all remaining pairs, which comprise set $S_C$. 

It remains to show that $\weight_{\mu,\epsilon}(p,r)$ increases in $r$ across the boundary between $S_B$ and $S_C$, for which it suffices to show that the weight is positive for all $(p,r) \in S_B$. Suppose, towards contradiction, that $r \log_2 p \leq \mu \log_2 r$ and $r > 2(1+\epsilon) \frac{\mu  \log_2 \mu}{\log_2 p}$. This would imply that both $\frac{r}{\log_2 r} \leq \frac{\mu}{\log_2 p}$ and $\frac{r}{2(1+\epsilon)\log \mu} > \frac{\mu}{\log_2 p}$, which implies that $\log_2 r > 2(1+\epsilon) \log \mu$. Since $\frac{r}{\log_2 r}$ is monotonic increasing in $r$, this in turn implies that $\frac{r}{\log_2 r} > \frac{\mu^2}{4 \log_2 \mu} \geq \mu$ for all $\mu \geq 1$. Finally, the implication that $\frac{r}{\log_2 r} > \mu$ contradicts the assumption that $r \log_2 p  \leq  \mu \log_2 r$. 

\if 0 
Plugging in $r=2(1+\epsilon) \frac{\mu  \log_2 \mu}{\log_2 p}$ to the expression $r\cdot \log_2(p)-\mu \log_2(r)$ we get that
\begin{equation}
	2(1+\epsilon)   \log_2 \mu\geq \log_2(2(1+\epsilon) \frac{\mu  \log_2 \mu}{\log_2 p})
	\end{equation}

The right hand side is monotonically decreasing in $p$ so the inequality is tightest at $p=2$. This gives us:
\begin{equation}
\label{eq:ineq2}
2(1+\epsilon)   \log_2 \mu\geq \log_2(2(1+\epsilon) \mu  \log_2 \mu)
\end{equation}
This has the form $x \log_2(\mu)-\log_2(x \cdot \mu)$ for $x=2\cdot (1+\epsilon)$ which is non decreasing in $\mu$ whenever $x\geq 1$ which is the case as $\epsilon\geq 0$. This implies that the \cref{eq:ineq2} is tightest for $\mu=2$ which gives us: 
\begin{equation}
	2\epsilon  \geq  \log_2(1+\epsilon) 
	\end{equation} 
	This holds for all $\epsilon\geq 0$ which proves that at $r=2(1+\epsilon) \frac{\mu  \ln \mu}{\ln p}$, $\weight_{\mu,\epsilon}$ is non negative.
	This shows that $\weight$ is non decreasing in $r$. 
\fi 
\end{proof} 

The first modified maximization problem is: 

\begin{equation}
\label{eq:maximization2}
\tag{Constrained Maximization 2}
	\begin{split}
	\max_{N \in \NN}   \sum_{(p,r) \in S(N)} \val(p,r) \  \text{subject to } \ \sum_{(p,r) \in S(N)} \weight_{\mu, \epsilon}(p,r) \leq \lambda 
\end{split}
\end{equation}

\begin{claim}\label{claim:hybrid1} 
\cref{eq:maximization2} is an upper bound to \cref{eq:maximization} for any $\mu$ and $\epsilon \geq \log_\mu(2)$: 
\end{claim}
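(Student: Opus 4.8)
The goal is to show that \eqref{eq:maximization2} upper-bounds \eqref{eq:maximization}. The cleanest route is to argue that \emph{any feasible solution} to \eqref{eq:maximization} is also feasible for \eqref{eq:maximization2} with an objective value that only goes up (or stays the same). First I would observe that the two problems share the same decision variable $N \in \NN$ and hence the same ground set $S(N)$ of prime-power pairs; only the objective and the constraint differ. The objective of \eqref{eq:maximization} is $\log_2 N = \sum_{(p,r)\in S(N)} r\log_2 p = \sum_{(p,r)\in S(N)} \val(p,r)$ by Definition \ref{def:valweight}, so in fact the two objectives are \emph{identical}, not merely comparable. Thus the claim reduces entirely to a statement about the constraints: the feasible region of \eqref{eq:maximization} is contained in the feasible region of \eqref{eq:maximization2}.

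For the constraint containment, take any $N$ feasible for \eqref{eq:maximization}, i.e.\ $\sum_{(p,r)\in S(N)} \weight_\mu(p,r) \le \lambda$ where $\weight_\mu(p,r) = -\log_2 I_{1/p}(r,\mu)$. By Claim \ref{claim:weightbound}, since $\epsilon \ge \log_\mu(2)$ we have $\weight_{\mu,\epsilon}(p,r) \le \weight_\mu(p,r)$ for every pair $(p,r)$. Wait --- I need to double-check the direction here: Claim \ref{claim:weightbound} states $\weight_\mu(p,r) \le \weight_{\mu,\epsilon}(p,r)$, i.e.\ the $\epsilon$-weight is the \emph{larger} one. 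That is the right direction: replacing each term $\weight_\mu(p,r)$ by the larger $\weight_{\mu,\epsilon}(p,r)$ can only increase the left-hand side of the constraint, so feasibility for the $\weight_{\mu,\epsilon}$-constraint is a \emph{stronger} requirement, and a solution feasible for \eqref{eq:maximization} need not be feasible for \eqref{eq:maximization2}. So the naive argument runs backwards.

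The correct framing, then, is: the maximum of a fixed objective over a \emph{smaller} feasible set is at most the maximum over a \emph{larger} feasible set. Since $\weight_{\mu,\epsilon} \ge \weight_\mu$ pointwise, the constraint $\sum \weight_{\mu,\epsilon}(p,r)\le\lambda$ \emph{implies} $\sum \weight_\mu(p,r)\le\lambda$, so the feasible set of \eqref{eq:maximization2} is \emph{contained in} that of \eqref{eq:maximization} --- which gives the inequality in the wrong direction. The resolution must be that $\weight_{\mu,\epsilon}$ is being used as a weight that is small enough to keep a superset: I would instead recall that $\weight_{\mu,\epsilon}$ takes the value $0$ on the region $S_C$ where neither Corollary \ref{cor:regbeta} bound beats the trivial bound, so on that region it is actually \emph{below} $\weight_\mu$. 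Hence $\weight_{\mu,\epsilon}$ is neither uniformly above nor below $\weight_\mu$. The honest argument is therefore: for a feasible $N$ of \eqref{eq:maximization}, decompose $\log_2 N = \sum_{S_A\cup S_B}\val + \sum_{S_C}\val$; bound the $S_C$ contribution separately using that on $S_C$ the factor $I_{1/p}(r,\mu)$ is close to $1$ so those primes contribute negligibly to $N$, and bound the $S_A\cup S_B$ contribution by the \eqref{eq:maximization2}-optimum since there $\weight_{\mu,\epsilon}\le\weight_\mu$. The main obstacle is handling this $S_C$ region cleanly --- controlling how much the ``trivial-bound'' primes can inflate $N$ without paying weight --- and I expect the authors dispose of it either by folding a small additive slack into the threshold or by showing $S_C$ is empty for the relevant ranges; I would follow whichever of these the surrounding text sets up.
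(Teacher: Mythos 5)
Your opening is exactly the paper's argument: the objectives coincide ($\sum_{(p,r)\in S(N)}\val(p,r)=\log_2 N$), so the claim reduces to showing that every $N$ feasible for \cref{eq:maximization} is also feasible for \cref{eq:maximization2}, i.e.\ that $\weight_{\mu,\epsilon}(p,r)\le\weight_\mu(p,r)$ pointwise. The gap is that you then abandon this route on the basis of the literal direction of the inequality printed in Claim~\ref{claim:weightbound}, and replace it with a fallback that rests on a false premise and is not carried out. The pointwise inequality $\weight_{\mu,\epsilon}\le\weight_\mu$ does hold, in each of the three cases of Definition~\ref{def:valweight}: if $p\ge\mu^{1+\epsilon}\ge 2\mu$, Corollary~\ref{cor:regbeta} gives $I_{1/p}(r,\mu)\le(\mu/p)^r$, hence $-\log_2 I_{1/p}(r,\mu)\ge r\,(\log_2 p-\log_2\mu)$; if $p<\mu^{1+\epsilon}$ and $r>2(1+\epsilon)\mu\ln\mu/\ln p$, then $r>2\mu$ and the corollary gives $I_{1/p}(r,\mu)\le r^\mu/p^r$, hence $-\log_2 I_{1/p}(r,\mu)\ge r\log_2 p-\mu\log_2 r$; and on the remaining region $\weight_{\mu,\epsilon}=0\le-\log_2 I_{1/p}(r,\mu)$. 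The inequality as printed in Claim~\ref{claim:weightbound} is indeed reversed relative to what its own proof via Corollary~\ref{cor:regbeta} yields and to what the present claim needs --- your suspicion detected a real typo --- but the resolution is to recheck the direction against the corollary, not to conclude that the relaxation argument fails. With $\weight_{\mu,\epsilon}\le\weight_\mu$ everywhere, \cref{eq:maximization2} has the same objective and a weaker constraint than \cref{eq:maximization}, and that is the entire proof, exactly as the paper gives it.

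Your substitute argument does not repair this. The assertion that $\weight_{\mu,\epsilon}$ is ``neither uniformly above nor below $\weight_\mu$'' is false, as just shown; the zero-weight region $S_C$ is not empty (it contains all small primes with small exponents), so it cannot be disposed of that way; and ``folding a small additive slack into the threshold'' would prove something weaker than the claim, which asserts a clean inequality between the two optima with no slack. Moreover, no separate treatment of $S_C$ is needed at all: pairs in $S_C$ contribute zero to the left-hand side of the \cref{eq:maximization2} constraint, so they only make that constraint easier to satisfy --- which is precisely why the relaxation direction, the one you first wrote down and then discarded, is the correct one.
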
 
\begin{proof} 
For any $N \in \NN$, by definition $\sum_{(p,r) \in S(N)} \val(p,r) = \log_2 N$. Thus the only difference between the two maximization problems are the constraints. Furthermore, if $\epsilon \geq \log_\mu(2)$ then, by Claim~\ref{claim:weightbound}, \cref{eq:maximization2} is simply a relaxation of the constraints in \cref{eq:maximization}. 
\end{proof}

Let $\textsf{Primes}$ denote the infinite set of prime numbers.
\begin{definition}[Marginal Value/Weight] \label{def:margvalweight}
 Using $\val(p,r)$ and $\weight(p,r)$ as defined in Definition~\ref{def:valweight}, $r \geq 1$ and $p \in \textsf{Primes}$, let $\margval(p,r) := \val(p, r) - \val(p, r-1) = \log_2 p$ and $\margval(p, 0) := \val(p, 0) = 0$. Similarly, for $r \geq 1$ let $\margweight_{\mu,\epsilon}(p,r) := \weight_{\mu,\epsilon}(p,r) - \weight_{\mu,\epsilon}(p, r-1)$ and $\margweight_{\mu,\epsilon}(p,0) := \weight_{\mu,\epsilon}(p, 0) = 1$. 
 \end{definition} 
 
  By Claim~\ref{claim:nondecreasingweight}, $\margweight_{\mu, \epsilon}(p,r)$ is non-negative for all prime $p$ and $r \in \NN$, and positive for $r >2 (1+\epsilon) \frac{\mu \log \mu}{\log p}$. The following knapsack problem gives an upper bound to \cref{eq:maximization2} : 

\begin{equation}
\label{eq:knapsackproblem}
\tag{Knapsack Problem}
 \max_{S \subseteq  \textsf{Primes} \times \NN } \sum_{(p,r) \in S} \margval(p,r)
\ \text{subject to } \sum_{(p,r) \in S} \margweight_{\mu,\epsilon}(p,r) \leq \lambda. 
\end{equation}

\begin{claim}\label{claim:hybrid2}
 \cref{eq:knapsackproblem} is an upper bound to \cref{eq:maximization2}. 
\end{claim}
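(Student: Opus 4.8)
The plan is to exhibit, for every feasible $N$ in \cref{eq:maximization2}, a feasible set $S \subseteq \textsf{Primes} \times \NN$ in the \cref{eq:knapsackproblem} whose total marginal value equals $\sum_{(p,r)\in S(N)} \val(p,r)$ and whose total marginal weight is at most $\sum_{(p,r)\in S(N)} \weight_{\mu,\epsilon}(p,r)$. The natural candidate is the ``staircase'' set $S := \{ (p, s) : (p,r) \in S(N), \ 1 \le s \le r \}$, i.e.\ for each prime power $p^r \mid\mid N$ we include all the rungs $(p,1),(p,2),\dots,(p,r)$. The point of passing to marginal quantities is precisely that they telescope along these rungs.

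The key steps, in order, are: (i) verify the telescoping identity for value, $\sum_{s=1}^{r} \margval(p,s) = \val(p,r) - \val(p,0) = \val(p,r)$, which is immediate from Definition~\ref{def:margvalweight} since $\val(p,0)=0$; hence $\sum_{(p,s)\in S}\margval(p,s) = \sum_{(p,r)\in S(N)}\val(p,r) = \log_2 N$, so the objectives agree. (ii) Verify the analogous telescoping for weight: $\sum_{s=1}^{r} \margweight_{\mu,\epsilon}(p,s) = \weight_{\mu,\epsilon}(p,r) - \weight_{\mu,\epsilon}(p,0)$. Here one must be careful about the convention $\margweight_{\mu,\epsilon}(p,0) := \weight_{\mu,\epsilon}(p,0) = 1$ versus $\weight_{\mu,\epsilon}(p,0)=0$ from Definition~\ref{def:valweight} (the third case, $r=0 \le 2(1+\epsilon)\mu\ln\mu/\ln p$, gives weight $0$). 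So in fact $\sum_{s=1}^{r}\margweight_{\mu,\epsilon}(p,s) = \weight_{\mu,\epsilon}(p,r)$, using that $\weight_{\mu,\epsilon}(p,0)=0$, and the ``$+1$'' offset only appears if one were to also include the rung $(p,0)$, which we do not. (iii) Conclude that $\sum_{(p,s)\in S}\margweight_{\mu,\epsilon}(p,s) = \sum_{(p,r)\in S(N)}\weight_{\mu,\epsilon}(p,r) \le \lambda$, so $S$ is feasible for the knapsack; therefore the knapsack optimum is at least $\log_2 N$ for every feasible $N$, i.e.\ it upper bounds \cref{eq:maximization2}. One should also note that the knapsack is a genuine relaxation in another sense: it ranges over \emph{arbitrary} subsets $S$ of $\textsf{Primes}\times\NN$, not just downward-closed staircases, so there is no constraint forcing $(p,s)$ to be present whenever $(p,s')$ with $s' > s$ is — this can only increase the optimum, which is fine for an upper bound.

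The main obstacle I anticipate is purely bookkeeping around the value of $\weight_{\mu,\epsilon}(p,0)$ and the definition of $\margweight_{\mu,\epsilon}(p,0)$: the paper sets $\margweight_{\mu,\epsilon}(p,0):=\weight_{\mu,\epsilon}(p,0)=1$, which is in slight tension with the ``otherwise'' branch of Definition~\ref{def:valweight} giving $0$. I would resolve this by simply not using the rung $s=0$ in the staircase construction (since $\val(p,0)=0$ contributes nothing to the objective anyway), so that the telescoping sum runs $s=1,\dots,r$ and lands exactly on $\weight_{\mu,\epsilon}(p,r)$, with $\weight_{\mu,\epsilon}(p,r)\ge 0$ for all $r$ guaranteed by Claim~\ref{claim:nondecreasingweight}. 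Everything else is a one-line telescoping argument plus the observation that enlarging the feasible region (from prime-factorizations of integers to arbitrary subsets of $\textsf{Primes}\times\NN$) only raises the maximum.
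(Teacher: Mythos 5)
Your proposal is correct and is essentially the paper's own argument: you build the ``staircase'' set of rungs $(p,1),\dots,(p,r)$ for each $(p,r)\in S(N)$ and telescope the marginal values and weights to show it is knapsack-feasible with the same objective value, exactly as the paper does (the paper merely phrases it as a contradiction and includes the $j=0$ rung). Your choice to omit the $s=0$ rung actually sidesteps the paper's inconsistent convention $\margweight_{\mu,\epsilon}(p,0)=1$ versus $\weight_{\mu,\epsilon}(p,0)=0$, so the bookkeeping in your version is cleaner but the substance is identical.
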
 
\begin{proof} 
Le $S^*$ denote argmax of \cref{eq:knapsackproblem} with $v^* =  \sum_{(p,r) \in S^*} \margval(p,r)$ and suppose (towards contradiction) that there exists $N \in \NN$ 
such that $\sum_{(p,r) \in S(N)} \weight_{\mu,\epsilon}(p,r) \leq \lambda$ and $\sum_{(p,r) \in S(N)} \val(p,r) > v^*$.  Consider the set $S'$, which includes $S(N)$ and adds for each $(p,r) \in S(N)$ the pairs $(p, j)$ for each $j \leq r$, i.e: 

$$S' = \bigcup_{(p,r) \in S(N)} \bigcup_{j = 0}^r \{(p, j)\}$$

Observe that: 

$$\sum_{(p,r) \in S'} \margval(p,r) = \sum_{(p,r) \in S(N)} \sum_{j = 0}^r \margval(p, j) = \sum_{(p,r) \in S(N)} \val(p,r)   > v^*$$ 
$$ \sum_{(p,r) \in S'} \margweight_{\mu,\epsilon}(p,r) = \sum_{(p,r) \in S(N)} \sum_{j = 0}^r \margweight_{\mu,\epsilon}(p, j) = \sum_{(p,r) \in S(N)} \weight_{\mu,\epsilon}(p,r)  \leq \lambda  $$ 

This is a contradiction to the assumption that $v^*$ is the solution to \cref{eq:knapsackproblem}.

\end{proof}

Finally, we prove a series of claims that will enable us to derive an upper bound on \cref{eq:knapsackproblem}. First, we define: 

\begin{definition}[Density] \label{def:density} 
 $\dens_{\mu, \epsilon}(p,r) = \frac{\margval(p,r)}{\margweight_{\mu,\epsilon}(p,r)}$ where $\margval(p,r)$ and $\margweight(p,r)$ are defined in Definition~\ref{def:margvalweight}.
 \end{definition} 

\if 0 
\begin{claim} \label{claim:maxdensitybound} 
 For all $\mu \geq 1$ and $N \in \NN$: 
 $$\log_2 N \leq w_\mu(N) \cdot \max_{(p,r) \in S(N)} \{ \dens_{\mu,\epsilon}(p,r)\}$$
\end{claim} 
\begin{proof} 
\begin{align*}
w_\mu(N) \cdot \max_{(p,r) \in S(N)} \{ \dens_{\mu,\epsilon}(p,r)\} &\geq \sum_{(p,r) \in S(N)} \weight_\mu(p,r) \cdot \dens_{\mu,\epsilon}(p,r) \\
&= \sum_{(p,r) \in S(N)} \val(p,r) = \log_2 N
\end{align*} 
\end{proof} 
\fi 

\begin{claim} \label{claim:maxdensitybound} 
 For all $S \subseteq \textsf{Primes} \times \NN$: 
 $$\sum_{(p,r) \in S} \margval(p,r)  \leq \sum_{(p,r) \in S} \margweight(p,r) \cdot \max_{(p,r) \in S} \{ \dens_{\mu,\epsilon}(p,r)\}$$
\end{claim}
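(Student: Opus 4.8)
The plan is a direct term-by-term estimate that simply unwinds the definition of density. Write $D := \max_{(p,r) \in S}\dens_{\mu,\epsilon}(p,r)$ for the factor appearing on the right-hand side. By Definition~\ref{def:density}, for every pair $(p,r)$ with $\margweight_{\mu,\epsilon}(p,r) > 0$ we have the identity $\margval(p,r) = \dens_{\mu,\epsilon}(p,r)\cdot\margweight_{\mu,\epsilon}(p,r)$, and by Claim~\ref{claim:nondecreasingweight} the marginal weight $\margweight_{\mu,\epsilon}(p,r)$ is non-negative for every pair. Hence for each such pair $\margval(p,r) \leq D\cdot\margweight_{\mu,\epsilon}(p,r)$, and summing over $(p,r)\in S$ with the constant $D$ pulled out of the sum gives exactly the asserted inequality. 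In essence the claim is the elementary fact that $\sum_i a_i \leq (\max_i a_i/b_i)\sum_i b_i$ for non-negative weights $b_i$, specialised to $a_i = \margval$ and $b_i = \margweight_{\mu,\epsilon}$, so I expect no genuine difficulty in the argument itself.

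The only point that needs care, and the main place to be careful in the write-up, is the treatment of pairs $(p,r)\in S$ with $\margweight_{\mu,\epsilon}(p,r)=0$ — precisely the pairs with $p<\mu^{1+\epsilon}$ such that both $r-1$ and $r$ lie at or below $2(1+\epsilon)\frac{\mu\ln\mu}{\ln p}$, on which $\weight_{\mu,\epsilon}$ is identically zero — since then $\dens_{\mu,\epsilon}(p,r)=+\infty$ (a positive marginal value $\log_2 p$ divided by zero weight). If $S$ contains no such pair the argument above applies verbatim; if $S$ contains one, then $D = +\infty$ and the right-hand side is $+\infty$ whenever $S$ also contains a pair of positive marginal weight (in particular any pair $(p,0)$, which has $\margweight_{\mu,\epsilon}(p,0)=1$), so the inequality is trivial. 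The only residual case — $S$ consisting solely of positive-value, zero-weight pairs — does not arise for the sets $S$ this claim is actually applied to, since an optimal solution of \cref{eq:knapsackproblem} may always be taken to include at least the pairs $(p,0)$; so one may assume without loss of generality that the displayed argument goes through.
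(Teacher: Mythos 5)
Your argument is correct and is essentially the paper's own proof: both bound each term via $\margval(p,r)=\dens_{\mu,\epsilon}(p,r)\cdot\margweight_{\mu,\epsilon}(p,r)\leq \bigl(\max_{(p,r)\in S}\dens_{\mu,\epsilon}(p,r)\bigr)\cdot\margweight_{\mu,\epsilon}(p,r)$ and then sum over $S$. Your additional care with the zero-weight, infinite-density pairs addresses an edge case that the paper's two-line proof silently glosses over, but it does not change the substance of the argument.
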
 
\begin{proof} 
\begin{align*}
\sum_{(p,r) \in S} \margweight(p,r) \cdot \max_{(p,r) \in S} \{ \dens_{\mu,\epsilon}(p,r)\} &\geq \sum_{(p,r) \in S} \weight_\mu(p,r) \cdot \dens_{\mu,\epsilon}(p,r) \\
&= \sum_{(p,r) \in S} \val(p,r) 
\end{align*} 
\end{proof}

\begin{claim} \label{claim:densitybound1} 
If $\mu \geq 2$, $\epsilon \geq \log_\mu(2)$, $r \geq 1$, and $p \geq \mu^{1+\epsilon}$ then $\dens_{\mu,\epsilon}(p, r) \leq 1 + \frac{1}{\epsilon}$
\end{claim}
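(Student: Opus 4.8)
\textbf{Proof proposal for Claim~\ref{claim:densitybound1}.}
The plan is to unfold the definitions of $\margval$ and $\margweight_{\mu,\epsilon}$ in the regime $p\geq\mu^{1+\epsilon}$, obtain a closed form for $\dens_{\mu,\epsilon}(p,r)$, and then check that form directly against $1+\tfrac1\epsilon$. Since $p\geq\mu^{1+\epsilon}$, every pair $(p,j)$ with $j\geq 0$ falls in the first case of Definition~\ref{def:valweight}, so $\weight_{\mu,\epsilon}(p,j)=j\cdot(\log_2 p-\log_2\mu)$; in particular $\weight_{\mu,\epsilon}(p,0)=0$. Hence for every $r\geq 1$ the marginal weight telescopes to
$$\margweight_{\mu,\epsilon}(p,r)=\weight_{\mu,\epsilon}(p,r)-\weight_{\mu,\epsilon}(p,r-1)=\log_2 p-\log_2\mu,$$
which is strictly positive because $p\geq\mu^{1+\epsilon}>\mu$. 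Together with $\margval(p,r)=\log_2 p$ (Definition~\ref{def:margvalweight}) this yields the $r$-independent expression
$$\dens_{\mu,\epsilon}(p,r)=\frac{\log_2 p}{\log_2 p-\log_2\mu}.$$

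I would then finish in either of two equivalent ways. The slick way: the right-hand side is a decreasing function of $p$ on $p>\mu$ (its derivative with respect to $\log_2 p$ equals $-\log_2\mu/(\log_2 p-\log_2\mu)^2<0$), so over the admissible range $p\geq\mu^{1+\epsilon}$ it is maximized in the limit $p\to\mu^{1+\epsilon}$, where $\log_2 p=(1+\epsilon)\log_2\mu$ and the ratio equals $\tfrac{(1+\epsilon)\log_2\mu}{\epsilon\log_2\mu}=1+\tfrac1\epsilon$. The direct way: clearing the positive denominator, $\frac{\log_2 p}{\log_2 p-\log_2\mu}\leq 1+\tfrac1\epsilon$ is equivalent to $(1+\epsilon)\log_2\mu\leq\log_2 p$, i.e. to $\mu^{1+\epsilon}\leq p$, which is exactly the hypothesis. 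Either route immediately gives the claim.

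I do not expect a real obstacle here: once the weight function is expanded the claim is a one-line algebraic identity, and it is tight (attained at $p=\mu^{1+\epsilon}$). The only point deserving a sentence of care is the boundary case $r=1$, where $\margweight_{\mu,\epsilon}(p,1)$ refers to $\weight_{\mu,\epsilon}(p,0)$; I would simply record that Definition~\ref{def:valweight} evaluates to $0$ at $r=0$ in every case, so the formula $\margweight_{\mu,\epsilon}(p,r)=\log_2 p-\log_2\mu$ holds uniformly for all $r\geq 1$. The hypotheses $\mu\geq 2$ and $\epsilon\geq\log_\mu 2$ enter only through ensuring $\mu^{1+\epsilon}>\mu$, so that all denominators above are positive.
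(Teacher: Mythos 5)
Your proposal is correct and follows essentially the same route as the paper: unfold the definitions to get $\dens_{\mu,\epsilon}(p,r)=\frac{\log_2 p}{\log_2 p-\log_2\mu}$ (independent of $r$), then use $\log_2\mu\leq\frac{1}{1+\epsilon}\log_2 p$ to conclude the bound $1+\frac{1}{\epsilon}$. The paper phrases the last step as the lower bound $\log_2 p-\log_2\mu\geq\frac{\epsilon}{1+\epsilon}\log_2 p$, which is the same algebra as your "direct way," so there is nothing substantive to add.
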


\begin{proof}
If $\epsilon \geq \log_\mu(2)$ and $p \geq \mu^{1+\epsilon}$  then $p \geq 2\mu$, and thus:

$$ \dens_{\mu,\epsilon}(p, r) = \frac{\margval(p,r)}{\margweight_{\mu,\epsilon}(p, r)}  = \frac{\log p}{\log_2(p) - \log_2(\mu)}$$

Furthermore, $p \geq \mu^{1+\epsilon}$ implies that: 
$$\log_2(p) - \log_2(\mu) \geq \log_2(p) - \frac{1}{1+\epsilon} \log_2(p) = \log_2(p) \cdot \frac{\epsilon}{1+\epsilon}$$ 

Thus $\dens_{\mu,\epsilon}(p, r)  \leq \frac{\log_2(p)}{\log_2(p) \cdot \frac{\epsilon}{1+\epsilon}} = 1 + \frac{1}{\epsilon}$
\end{proof}

\begin{claim} \label{claim:densitybound2}
If $\mu \geq 2$, $\epsilon \geq \log_\mu(2)$, $r > 2(1+\epsilon) \cdot \frac{\mu \cdot  \ln \mu}{\ln p}$, and $p < \mu^{1+\epsilon}$ then $\dens_{\mu,\epsilon}(p, r) \leq 1 + \frac{1}{\epsilon}$
\end{claim}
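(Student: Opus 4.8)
The plan is to unfold the definitions and reduce the claim to an elementary one-variable inequality. By hypothesis $(p,r)$ lies in the region $S_B$ from the proof of \cref{claim:nondecreasingweight} (that is, $p < \mu^{1+\epsilon}$ and $r > r_0$ where $r_0 := 2(1+\epsilon)\frac{\mu\ln\mu}{\ln p}$), so $\weight_{\mu,\epsilon}(p,r) = r\log_2 p - \mu\log_2 r$ and $\margval(p,r) = \log_2 p$; hence $\dens_{\mu,\epsilon}(p,r) = \log_2 p / \margweight_{\mu,\epsilon}(p,r)$, and it suffices to prove $\margweight_{\mu,\epsilon}(p,r) \ge \frac{\epsilon}{1+\epsilon}\log_2 p$. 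A preliminary observation I will use throughout is that $\ln p < (1+\epsilon)\ln\mu$ forces $r > r_0 > 2\mu \ge 4$, so $r \ge 5$. Since $p < \mu^{1+\epsilon}$, the pair $(p,r-1)$ cannot lie in $S_A$, so it lies in $S_B$ or in $S_C$, and I split accordingly.

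If $(p,r-1) \in S_B$ (i.e.\ $r - 1 > r_0$), the weight formula telescopes: $\margweight_{\mu,\epsilon}(p,r) = \log_2 p - \mu\log_2\big(1 + \frac{1}{r-1}\big)$. Using $\log_2(1+x) \le x/\ln 2$ together with $r - 1 > r_0 = 2(1+\epsilon)\frac{\mu\ln\mu}{\ln p} > (1+\epsilon)\frac{\mu}{\ln p}$ — where $2\ln\mu > 1$ because $\mu \ge 2$ — I get $\mu\log_2(1+\frac{1}{r-1}) < \frac{\log_2 p}{1+\epsilon}$, and therefore $\margweight_{\mu,\epsilon}(p,r) > \frac{\epsilon}{1+\epsilon}\log_2 p$. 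This case is a one-line estimate.

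If instead $(p,r-1) \in S_C$ (i.e.\ $r - 1 \le r_0$), then $\weight_{\mu,\epsilon}(p,r-1) = 0$, so $\margweight_{\mu,\epsilon}(p,r) = \weight_{\mu,\epsilon}(p,r) = r\log_2 p - \mu\log_2 r$, and I will in fact prove the stronger bound $\margweight_{\mu,\epsilon}(p,r) \ge \log_2 p$, i.e.\ $(r-1)\log_2 p \ge \mu\log_2 r$, equivalently $\frac{r-1}{\log_2 r} \ge \frac{\mu}{\log_2 p}$. From $r \ge 5$ we have $r - 1 \ge \frac{4}{5}r$; from $r > r_0 > e$ and the monotonicity of $x \mapsto x/\log_2 x$ on $(e,\infty)$ we have $\frac{r}{\log_2 r} \ge \frac{r_0}{\log_2 r_0}$; combining these and substituting $r_0$ reduces the goal to $\log_2 r_0 \le \frac{8}{5}(1+\epsilon)\log_2\mu$. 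Bounding $\log_2 p \ge 1$ gives $\log_2 r_0 \le 1 + \log_2(1+\epsilon) + \log_2\mu + \log_2(\log_2\mu)$, so it remains to check the inequality $1 + \log_2(1+\epsilon) + \log_2\mu + \log_2(\log_2\mu) \le \frac{8}{5}(1+\epsilon)\log_2\mu$ for $\mu \ge 2$ and $\epsilon \ge \log_\mu 2$. The right minus left side has $\epsilon$-derivative $\frac{8}{5}\log_2\mu - \frac{1}{(1+\epsilon)\ln 2} > 0$ (since $\log_2\mu \ge 1$ and $\frac{8}{5} > \frac{1}{\ln 2}$), so it suffices to verify it at $\epsilon = \log_\mu 2$; writing $u := \log_2\mu + 1 \ge 2$, the inequality there collapses to $\log_2 u \le \frac{3}{5}u$, which holds on $[2,\infty)$ because $\frac{3}{5}u - \log_2 u$ is convex and attains its minimum at $u = \frac{5}{3\ln 2} \approx 2.40$, where it is positive.

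I expect the boundary case $(p,r-1) \in S_C$ to be the main obstacle: there the telescoping fails — $\weight_{\mu,\epsilon}$ jumps from $0$ to a positive value across the $S_B$/$S_C$ interface established in \cref{claim:nondecreasingweight} — so the marginal weight is the entire weight $r\log_2 p - \mu\log_2 r$ rather than a small increment, and pinning it down needs the chain of monotonicity estimates above together with the final elementary inequality. Everything else is routine bookkeeping.
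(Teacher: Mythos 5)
Your proof is correct, and it differs from the paper's in one substantive way. The paper works with the telescoped expression $\margweight_{\mu,\epsilon}(p,r)=\log_2 p-\mu\log_2\frac{r}{r-1}$, shows the resulting density is decreasing in $r$, evaluates it just above the threshold $r_0=2(1+\epsilon)\frac{\mu\ln\mu}{\ln p}$ using $\ln\frac{r}{r-1}\le\frac{1}{r-1}$, and finishes with $\frac{1}{1-\frac{1}{2(1+\epsilon)\ln\mu}}\le 1+\frac1\epsilon$ (valid since $2\ln\mu\ge 1$); your Case $(p,r-1)\in S_B$ is exactly this estimate, just run directly on the marginal weight rather than through a monotonicity-in-$r$ argument. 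What you add is the boundary case $r-1\le r_0$, where $\weight_{\mu,\epsilon}(p,r-1)=0$ and the marginal weight is the \emph{full} weight $r\log_2 p-\mu\log_2 r$ rather than the telescoped increment — a case the paper's displayed formula silently assumes away. Your separate treatment of it (reducing to $\frac{r-1}{\log_2 r}\ge\frac{\mu}{\log_2 p}$ via $r\ge 5$, monotonicity of $x/\log_2 x$ on $(e,\infty)$, and the elementary inequality $\log_2 u\le\frac35 u$ for $u=\log_2\mu+1\ge 2$, whose minimum at $u=\frac{5}{3\ln 2}$ is indeed positive) checks out and even yields the stronger conclusion $\dens_{\mu,\epsilon}(p,r)\le 1$ there. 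So your argument is a more careful version of the paper's: same core estimate in the interior of $S_B$, plus a rigorous patch at the $S_B$/$S_C$ interface of \cref{claim:nondecreasingweight} that the published proof glosses over, at the cost of a somewhat longer elementary computation.
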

\begin{proof}
Since $p <\mu^{1+\epsilon}$, the conditions on $r$ imply $r > 2(1+\epsilon) \frac{\mu  \ln \mu}{\ln p} > 2\mu$ and thus:

 $$\dens_{\mu,\epsilon}(p, r) = \frac{\margval(p,r)}{\margweight_{\mu,\epsilon}(p, r)} = \frac{\ln p}{\ln p + \mu \ln \frac{r-1}{r}} = \frac{1}{1 - \frac{\mu}{\ln p}\ln\frac{r}{r-1}}$$
 
$\dens_{\mu,\epsilon}(p, r)$ is non-negative because $\margval(p,r)$ is non-negative and $\margweight(p,r)$ is positive over $r >2 (1+\epsilon) \frac{\mu \log \mu}{\log p}$. Thus, for $p$ and $r$ satisfying these conditions, it must be the case that $0 \leq \frac{\mu}{\ln p} \ln \frac{r}{r-1} < 1$. Furthermore, this term is decreasing (approaching zero) as $r$ increases, which shows that for $r$ and $p$ subject to these conditions $\dens_{\mu,\epsilon}(p, r)$ is also decreasing in $r$. 
Combining this with the fact that $\ln \frac{r}{r-1} = \ln (1 + \frac{1}{r-1}) \leq \frac{1}{r-1}$:

$$\dens_{\mu,\epsilon}(p,r) \leq \dens_{\mu, \epsilon}(p, 2(1+\epsilon) \frac{\mu \ln \mu}{\ln p} + 1) \leq \frac{1}{1- \frac{\mu}{\ln p}\frac{\ln p}{2(1+\epsilon)\mu \ln \mu}} = \frac{1}{1-\frac{1}{2(1+\epsilon)\ln \mu}}\leq  1+\frac{1}{\epsilon}$$ 
\end{proof}

\begin{claim} \label{claim:valuebound} 
For $\alpha \in \mathbb{R}$ let $\textsf{Primes}(\alpha)$ denote the set of prime numbers strictly less than $\alpha$.  For $\mu \in \NN$ and $\epsilon \in (0,1)$ define:

$$B_{\mu, \epsilon} := \{(p, r): p \in \textsf{Primes}(\mu^{1+\epsilon}), r \leq 2(1+\epsilon)\frac{\mu \ln \mu}{\ln p} \}$$

Then : 
$$\sum_{(p,r) \in B_{\mu,\epsilon}} \margval(p,r)  \leq 4 \mu ^{2+\epsilon}$$
\end{claim}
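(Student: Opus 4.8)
The plan is to evaluate the double sum over $B_{\mu,\epsilon}$ by grouping terms according to the prime $p$, to observe that each prime $p < \mu^{1+\epsilon}$ contributes the \emph{same} upper bound independent of $p$, and then to multiply that common bound by a Chebyshev-type estimate for the number of primes below $\mu^{1+\epsilon}$. Since $\margval(p,r) = \log_2 p$ for $r \geq 1$ and $\margval(p,0) = 0$, I would first write
$$\sum_{(p,r) \in B_{\mu,\epsilon}} \margval(p,r) \;=\; \sum_{p \in \textsf{Primes}(\mu^{1+\epsilon})} \ \sum_{r=1}^{\left\lfloor 2(1+\epsilon)\frac{\mu \ln \mu}{\ln p}\right\rfloor} \log_2 p \;=\; \sum_{p \in \textsf{Primes}(\mu^{1+\epsilon})} \left\lfloor 2(1+\epsilon)\tfrac{\mu \ln \mu}{\ln p}\right\rfloor \log_2 p.$$

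The key step is the per-prime bound. Dropping the floor, each inner term is at most $2(1+\epsilon)\mu \ln\mu \cdot \frac{\log_2 p}{\ln p}$, and here the crucial cancellation is $\frac{\log_2 p}{\ln p} = \frac{1}{\ln 2}$, which does not depend on $p$. Hence every prime $p < \mu^{1+\epsilon}$ contributes at most $\frac{2(1+\epsilon)\mu\ln\mu}{\ln 2} = 2(1+\epsilon)\mu\log_2\mu$, so the whole sum is bounded by $2(1+\epsilon)\mu\log_2\mu \cdot \lvert\textsf{Primes}(\mu^{1+\epsilon})\rvert$. (One should note in passing that each inner sum is nonempty — $2(1+\epsilon)\mu\ln\mu/\ln p \ge 2\mu \ge 2$ for every such $p$ — but this is not needed for the inequality, only reassuring.)

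It remains to bound $\lvert\textsf{Primes}(\mu^{1+\epsilon})\rvert$. For $\mu \geq 2$ we have $\mu^{1+\epsilon} > \mu \geq 2 > 1$ (the cases $\mu \le 1$ are trivial since then $\textsf{Primes}(\mu^{1+\epsilon})$ is empty), so I would invoke the Rosser--Schoenfeld bound $\pi(x) < 1.25506\,\frac{x}{\ln x}$ valid for all $x > 1$, where $\pi$ is the prime-counting function. Since $1.25506 < 2\ln 2$, this yields $\lvert\textsf{Primes}(\mu^{1+\epsilon})\rvert \le \pi(\mu^{1+\epsilon}) \le \frac{2\mu^{1+\epsilon}}{\log_2(\mu^{1+\epsilon})} = \frac{2\mu^{1+\epsilon}}{(1+\epsilon)\log_2\mu}$. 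Substituting this into the previous bound, the product $2(1+\epsilon)\mu\log_2\mu \cdot \frac{2\mu^{1+\epsilon}}{(1+\epsilon)\log_2\mu}$ simplifies exactly to $4\mu^{2+\epsilon}$, which is the claim.

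The only genuinely non-elementary ingredient is the prime-counting estimate; the rest is bookkeeping and the $\log_2 p/\ln p$ cancellation, so I expect no real obstacle. If one prefers a fully self-contained argument, the Rosser--Schoenfeld bound can be replaced by an elementary Chebyshev estimate $\pi(x) \le c\,x/\ln x$ with any constant $c < 2\ln 2 \approx 1.386$ (obtainable from binomial-coefficient bounds on $\prod_{p \le x} p$), at the cost of a slightly longer derivation but with no change to the final inequality.
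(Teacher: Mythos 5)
Your proof is correct and follows essentially the same route as the paper: group the sum by primes, use the cancellation $\log_2 p/\ln p = 1/\ln 2$ to bound each prime's contribution by $2(1+\epsilon)\mu\log_2\mu$, and then apply a Rosser-type bound $\pi(x) \leq c\,x/\ln x$ (you use $c = 1.25506$, the paper uses $c = 1.3$), after which the $(1+\epsilon)\log_2\mu$ factors cancel to give $4\mu^{2+\epsilon}$. Your write-up is in fact slightly more careful than the paper's (which writes an equality where a floor-dropping inequality is meant), so no issues.
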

\begin{proof} 
Let $\pi(X)$ denote the prime counting function. We use the fact that $\pi(x)\leq 1.3 \cdot \frac{x}{\ln(x)}$ for all $x>1$ \cite{rosser1962approximate}. 
\begin{align*} 
\sum_{(p, r) \in B_{\mu,\epsilon}} \margval(p,r) &= \sum_{p \in \textsf{Primes}(\mu^{1+\epsilon})} \sum_{r = 0}^{\lfloor 2(1+\epsilon) \frac{\mu \ln \mu}{\ln p} \rfloor} \log p  = \sum_{\textsf{Primes}(\mu^{1+\epsilon})} 2(1+\epsilon) \mu \log_2 \mu \\
& \leq 1.3 \cdot \frac{\mu^{1+\epsilon}}{\ln(\mu^{1+\epsilon})}\cdot  \frac{2 \cdot (1+\epsilon) \mu \ln(\mu)}{\ln(2)} \leq 4 \mu^{2+\epsilon} 
 \end{align*} 
\end{proof} 

Putting together these claims, we obtain the bound: 

\begin{claim} \label{claim:analyticalbound} 
For all $\mu \geq 2$, $\epsilon \geq \log_\mu(2)$ and $S \subseteq \textsf{Primes} \times \NN$: 
$$ \sum_{(p,r) \in S} \margval(p,r) \leq 4n^{2 + \epsilon} +  \sum_{(p,r) \in S} \margweight_{\mu,\epsilon}(p,r) \cdot (1 + \frac{1}{\epsilon}) $$
\end{claim}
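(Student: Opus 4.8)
The plan is to split the sum over $S$ into the items of zero marginal weight and those of positive marginal weight, bound the zero-weight part by the fixed ``combinatorial budget'' $4\mu^{2+\epsilon}$ via \cref{claim:valuebound}, and bound the positive-weight part by the density estimates of \cref{claim:densitybound1} and \cref{claim:densitybound2} together with \cref{claim:maxdensitybound}. Throughout I write $S = S_0 \sqcup S_+$ where $S_+ = \{(p,r)\in S : \margweight_{\mu,\epsilon}(p,r) > 0\}$ and $S_0 = S \setminus S_+$.

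The first step is to show that $S_0$ is essentially contained in the set $B_{\mu,\epsilon}$ of \cref{claim:valuebound}. By \cref{claim:nondecreasingweight} together with the case split in \cref{def:valweight}: if $p \ge \mu^{1+\epsilon}$ then $\margweight_{\mu,\epsilon}(p,r) = \log_2 p - \log_2 \mu > 0$ for every $r \ge 1$ (since $\mu\ge 2$ and $\epsilon \ge \log_\mu 2 >0$ give $\mu^{1+\epsilon} > \mu$), while if $p < \mu^{1+\epsilon}$ then $\margweight_{\mu,\epsilon}(p,r) > 0$ precisely when $r > 2(1+\epsilon)\tfrac{\mu\ln\mu}{\ln p}$. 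Hence an item of $S_0$ with $r \ge 1$ must satisfy $p < \mu^{1+\epsilon}$ and $r \le 2(1+\epsilon)\tfrac{\mu\ln\mu}{\ln p}$, i.e. it lies in $B_{\mu,\epsilon}$; items with $r = 0$ have $\margval(p,0) = 0$ and contribute nothing. Therefore $\sum_{(p,r)\in S_0}\margval(p,r) \le \sum_{(p,r)\in B_{\mu,\epsilon}}\margval(p,r) \le 4\mu^{2+\epsilon}$ by \cref{claim:valuebound}.

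For the positive-weight part, every $(p,r)\in S_+$ meets the hypotheses of exactly one of the two density claims: if $p \ge \mu^{1+\epsilon}$ then \cref{claim:densitybound1} gives $\dens_{\mu,\epsilon}(p,r) \le 1 + \tfrac1\epsilon$, and if $p < \mu^{1+\epsilon}$ then positivity of $\margweight_{\mu,\epsilon}(p,r)$ forces $r > 2(1+\epsilon)\tfrac{\mu\ln\mu}{\ln p}$, so \cref{claim:densitybound2} again gives $\dens_{\mu,\epsilon}(p,r) \le 1+\tfrac1\epsilon$. Since $\margval(p,r) = \margweight_{\mu,\epsilon}(p,r)\cdot\dens_{\mu,\epsilon}(p,r)$ and $\margweight_{\mu,\epsilon} \ge 0$ by \cref{claim:nondecreasingweight}, summing termwise (equivalently, applying \cref{claim:maxdensitybound} to the set $S_+$) yields $\sum_{(p,r)\in S_+}\margval(p,r) \le (1+\tfrac1\epsilon)\sum_{(p,r)\in S_+}\margweight_{\mu,\epsilon}(p,r)$. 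Because every item of $S_0$ has zero marginal weight, $\sum_{(p,r)\in S_+}\margweight_{\mu,\epsilon}(p,r) = \sum_{(p,r)\in S}\margweight_{\mu,\epsilon}(p,r)$, and adding the two estimates gives the claim.

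I expect the only delicate point to be bookkeeping rather than any new inequality: one must pin down the correspondence between the \emph{sign} of $\margweight_{\mu,\epsilon}(p,r)$ and the regions of \cref{def:valweight}, in particular at the boundary between the ``$0$'' region and the ``$r\log_2 p - \mu\log_2 r$'' region. There $\margweight_{\mu,\epsilon}(p,r) = \weight_{\mu,\epsilon}(p,r) - 0$ can exceed the ``analytic'' marginal weight $\log_2 p + \mu\log_2\tfrac{r-1}{r}$ nominally used in \cref{claim:densitybound2}; since enlarging the denominator only shrinks the density, the bound of \cref{claim:densitybound2} still applies, and I would state this monotonicity remark explicitly so that the case split above is airtight.
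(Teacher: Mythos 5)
Your proof is correct and essentially the same as the paper's: your split of $S$ by the sign of $\margweight_{\mu,\epsilon}$ coincides with the paper's partition into $S_1 \subseteq B_{\mu,\epsilon}$ and $S_2$ (the pairs with $p \geq \mu^{1+\epsilon}$ or $r > 2(1+\epsilon)\frac{\mu\ln\mu}{\ln p}$), and both arguments then combine \cref{claim:valuebound}, \cref{claim:densitybound1}, \cref{claim:densitybound2} and \cref{claim:maxdensitybound} in the same way (you even cite \cref{claim:densitybound2} explicitly, which the paper omits but clearly intends). Your closing caveat about the boundary between the zero region and the middle region is a concern about the internal proof of \cref{claim:densitybound2}, not about this claim; invoking that claim as stated, for the true density of \cref{def:density}, is exactly what the paper does and is all that is needed here.
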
 
\begin{proof} 
Partition $S$ into disjoint sets $S_1$ and $S_2$ such that $S_2$ contains all the pairs $(p, r) \in S$ for which either $p \geq \mu^{1+\epsilon}$ or $r > 2(1+\epsilon) \frac{\mu \ln \mu}{\ln p}$, and $S_1$ contains the remaining pairs. $S_1 \subseteq B_{\mu, \epsilon}$ from Claim~\ref{claim:valuebound} and thus $\sum_{(p,r) \in S_1} \margval(p,r)  \leq 4 \mu ^{2+\epsilon}$. Furthermore, by Claim~\ref{claim:densitybound1}, if $(p,r) \in S_2$ then $\dens_{\mu,\epsilon}(p,r) \leq 1 + \frac{1}{\epsilon}$ and by Claim~\ref{claim:maxdensitybound}: 

$$\sum_{(p,r) \in S_2} \margval(p,r) \leq \sum_{(p,r) \in S_2} \margweight_{\mu, \epsilon}(p,r) \cdot (1 + \frac{1}{\epsilon})$$

Putting everything together: 

$$  \sum_{(p,r) \in S} \margval(p,r) = \sum_{(p,r) \in S_1}  \margval(p,r)  + \sum_{(p,r) \in S_2}  \margval(p,r) \leq 4n^{2 + \epsilon} +  \sum_{(p,r) \in S} \margweight_{\mu,\epsilon}(p,r) \cdot (1 + \frac{1}{\epsilon}) $$
\end{proof} 

Finally, we can conclude from Claim~\ref{claim:analyticalbound} that for any $S \in \textsf{Primes} \times \NN$, $\mu \geq 2$, and $\epsilon \geq \log_\mu(2)$, if $\sum_{(p,r) \in S} \margweight_{\mu,\epsilon}(p,r) \leq \lambda$, i.e., if $S$ satisfies the constraints of \cref{eq:knapsackproblem} then: 

$$\sum_{(p,r) \in S} \margweight_{\mu,\epsilon}(p,r) \leq 4n^{2 + \epsilon} + \lambda \cdot (1 + \frac{1}{\epsilon}) $$

The right hand side of the equation is therefore an upper bound for the solution to \cref{eq:knapsackproblem}, and consequently (by Claim~\ref{claim:hybrid1} and Claim~\ref{claim:hybrid2}), an upper bound for the solution $t(\lambda,\mu)$ to \cref{eq:maximization} when $\mu \geq 2$. In conclusion, for any $N \in \NN$, $\mu \geq 2$, and $\epsilon \geq \log_\mu(2)$, if $\log_2 N \geq 4n^{2 + \epsilon} + \lambda \cdot (1 + \frac{1}{\epsilon})$ then $\log_2 N \geq t(\lambda, \mu)$ and: 

$$\mathbb{P}_{\xvec\gets [0,m)^\mu}[f(\xvec)\equiv 0 \bmod N]\leq 2^{-\lambda}  + \frac{\mu}{m}$$

\subsection{Computational Inverse LCSZ} 
 \cref{thm:cryptothm} provides an analytical upper bound on $t(\lambda, \mu)$ for any $\mu, \lambda \in \NN$. However, the analytical bound does not appear to be tight for $\mu \geq 2$ (for $\mu =1$ it is tight). This next section provides an algorithm to derive an upper bound on $t(\lambda, \mu)$ for any specific values of $\lambda, \mu$. The algorithm gives tighter bounds than \cref{thm:cryptothm} for a table of tested values (Table~\ref{tab:expresults}).  This is useful in practice, e.g. for deriving concrete cryptographic security parameters in cryptographic protocols that rely on LCSZ.

As shown in the prior section, $t(\lambda, \mu)$ is upper bounded by a solution to a knapsack problem, \cref{eq:knapsackproblem}, over the infinite set of items $\textsf{Primes}\times \NN$. There is a simple well-known greedy algorithm that returns an upper bound to the optimal value for the knapsack problem over a \textit{finite} set of items. This algorithm greedily adds items to the knapsack in order of decreasing density until the knapsack overflows the weight bound, and returns the total value of the added items at this point. Over an infinite set of items, it is not generally possible to sort by decreasing density. However, by leveraging monotonicity properties of the density function in our case, we are able to adapt the greedy approximation algorithm to work for \cref{eq:knapsackproblem}. In particular, we are able to enumerate over pairs in $\textsf{Primes}\times \NN$ in an order of non-increasing density. 

\begin{claim} \label{claim:enumeration} 
Let $A$ and $B$ be any pair of discrete strictly ordered sets which contain minimum elements $a_0 \in A$ and $b_0 \in B$. Let $a+$ denote the next element of $A$ after $a$ and likewise $b+$ the next element of $B$ after $b$. If $f: A\times B \rightarrow \mathbb{R}^+$ is monotonically non-increasing over pairs $(a, b_0)$ as $a \in A$ increases, and for any fixed $a$, monotonically non-increasing over pairs $(a, b)$ as $b \in B$ increases, then the following algorithm enumerates the pairs $(a,b) \in A \times B$ in order of decreasing $f(a,b)$. The algorithm initializes the set $C = \{(a_0, b_0)\}$ and also a variable $\textsf{max}_A$ to keep track of the highest order element in $A$ seen so far.  At each iteration, it removes a pair $(a, b) \in C$ of lowest $f(a, b)$ value and appends $(a,b)$ to the output enumeration list. Before proceeding to the next iteration, it adds $(a, b+)$ to $C$, and if $a = \textsf{max}_A$ then it also adds $(a+, b_0)$ to $C$ and updates $\textsf{max}_A := a+$.
\end{claim}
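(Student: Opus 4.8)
The plan is to verify correctness by an induction that tracks the state of the algorithm. Enumerate $A=\{a_0<a_1<\cdots\}$ and $B=\{b_0<b_1<\cdots\}$ in increasing order from their minima (so $a_{i+1}$ is the element denoted $a_i{+}$ in the claim, and likewise for $B$). Let $O_t$ be the list of pairs output after $t$ iterations (viewed also as a set), $C_t$ the contents of $C$ after $t$ iterations, $D_t:=O_t\cup C_t$, and let $M_t$ be the index with $a_{M_t}$ equal to the value of $\textsf{max}_A$ after $t$ iterations. First I would state, and prove by induction on $t$, the following structural invariants of $D_t$ — informally, it is a \emph{staircase}: (1)~$C_t$ is finite and disjoint from $O_t$; (2)~for every $a$ occurring in $D_t$, the set $\{b:(a,b)\in O_t\}$ is a (possibly empty) prefix of $B$, and $\{b:(a,b)\in D_t\}$ equals that prefix together with at most one further element, namely its successor in $B$, which then lies in $C_t$; (3)~the first coordinates occurring in $D_t$ are exactly $a_0,\ldots,a_{M_t}$, every column $a_k$ with $k<M_t$ contains an output pair (so $(a_k,b_0)\in O_t$), and if $a_{M_t}$ is not the largest element of $A$ then the column of $a_{M_t}$ in $D_t$ is exactly $\{(a_{M_t},b_0)\}\subseteq C_t$. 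These hold trivially at $t=0$, and the update rule preserves them: a pair $(a,b_j)$ with $j\ge 1$ enters $C$ only on the line executed right after $(a,b_{j-1})$ is output; $(a_{k+1},b_0)$ enters $C$, together with the assignment $\textsf{max}_A:=a_{k+1}$, exactly when the first pair with first coordinate $a_k$ leaves $C$; and by~(2) the pair removed from $C$ at each step is the unique $C_t$-entry of its column (the successor of that column's output prefix).

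The crux is the claim that after every iteration $t$, $\max_{(a,b)\in C_t}f(a,b)=\max_{(a,b)\notin O_t}f(a,b)$. Since $C_t\cap O_t=\emptyset$, the direction $\le$ is immediate, so it suffices to exhibit for every $(a,b)\notin O_t$ a pair of $C_t$ of at least as large an $f$-value; write $a=a_k$. If $k>M_t$, then by~(3) $a_{M_t}$ is not the largest element of $A$, so $(a_{M_t},b_0)\in C_t$, and monotonicity of $f$ in the second coordinate followed by monotonicity along the row $b_0$ in the first coordinate gives $f(a_k,b)\le f(a_k,b_0)\le f(a_{M_t},b_0)$. If $k\le M_t$, then $a_k$ occurs in $D_t$; by~(2) the output part of column $a_k$ is a prefix $\{b_0,\ldots,b_{m-1}\}$ of $B$ (empty if $m=0$), and $(a_k,b)\notin O_t$ forces $b\ge b_m$. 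If $(a_k,b_m)\in C_t$ then $f(a_k,b)\le f(a_k,b_m)$ by monotonicity in the second coordinate; otherwise column $a_k$ has no $C_t$-entry, which by~(2) and the update rule means $b_{m-1}$ is the largest element of $B$, so that $\{b_0,\ldots,b_{m-1}\}=B$ and no $b$ with $(a_k,b)\notin O_t$ exists — a vacuous case. This proves the claim.

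By the claim, the pair removed at step $t+1$, a pair of $C_t$ of maximum $f$-value, is a pair of maximum $f$-value among all pairs not yet output; hence the $f$-values along the output list are non-increasing and, for each $t$, $O_t$ consists of $t$ pairs realizing the $t$ largest $f$-values, with ties broken by the algorithm's internal choices — precisely the assertion that $A\times B$ is enumerated in order of non-increasing $f$. (To also see that every pair is eventually output, assume — as holds in the intended application — that $\{(a',b'):f(a',b')\ge f(a,b)\}$ is finite for each $(a,b)$; then by~(2)--(3) every column-start $(a_k,b_0)$ enters $C$ after finitely many steps and the entries of each column leave $C$ in $B$-order, so nothing is skipped forever.) I expect the only genuine difficulty to be formulating invariants~(2) and~(3) tightly enough that the crux goes through while still keeping them stable under the somewhat unusual $\textsf{max}_A$ bookkeeping; once they are correctly stated, both their inductive maintenance and the crux are routine case analyses.
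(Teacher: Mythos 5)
Your proof is correct, and it is organized differently from the paper's. The paper argues by contradiction: assuming a pair $(a,b)$ is output while some un-output $(a',b')$ has $f(a',b')>f(a,b)$, it chases a chain of intermediate pairs (up column $a'$ if $(a',b_0)$ was already output, otherwise along the row $b_0$ from $a$ to $a'$), each of which ``would have been added to $C$ and removed before $(a,b)$'' because its $f$-value exceeds $f(a,b)$ — an implicit use of exactly the staircase structure you formalize. You instead maintain explicit invariants (columns of $O_t\cup C_t$ are prefixes of $B$ plus at most their successor in $C_t$; the active first coordinates are $a_0,\dots,a_{M_t}$ with the top column reduced to $(a_{M_t},b_0)$) and derive the single key lemma that $C_t$ always contains a maximizer of $f$ over the un-output pairs; correctness of the enumeration then falls out immediately. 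Both proofs rest on the same monotonicity facts; yours makes the frontier property and the $\textsf{max}_A$ bookkeeping explicit and inductive, which is more systematic and easier to audit, while the paper's contradiction argument is shorter but leaves the ``would have been removed earlier'' principle and the column-prefix structure implicit. Two side remarks: like the paper's own proof (and Algorithm~1's max-heap), you read the removal step as popping a pair of \emph{highest} $f$-value, silently correcting the claim statement's ``lowest''; and your parenthetical on completeness (every pair is eventually output only under a finiteness condition such as $\{(a',b'):f(a',b')\geq f(a,b)\}$ being finite, or a suitable tie-breaking rule) addresses a point the paper's proof does not discuss at all, which is a genuine, if minor, improvement.
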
 
\begin{proof} 
Suppose, towards contradiction, that the algorithm appends $(a, b)$ to the output list, and there exists at least one pair $(a', b')$ not yet in the list at this iteration such that $f(a', b') > f(a, b)$. If $b' \neq b_0$ and $(a', b_0)$ appeared in the output list already, then each $(a', b^*)$ for $b^* \in [b_0, b']$ would also have been added to $C$ and removed before $(a, b)$ because $f(a', b^*)\geq f(a', b') > f(a, b)$. This would be a contradiction, so it remains to consider the case that $b' = b_0$, i..e. that $(a', b_0)$ did not appear in the list and $f(a', b_0) > f(a, b)$. 

First, this implies that $a' > a$. Otherwise, If $a' \leq a$, then $(a', b_0)$ would have been added to $C$ before $(a, b)$ and thus removed before $(a, b)$. 
Second, $b \neq b_0$ and $f(a, b) < f(a, b_0)$, as otherwise this would imply that $f(a', b_0) > f(a, b) \geq f(a, b_0)$, contradicting the monotonicity property. Thus $(a, b_0)$ must already appear in the output list because it is added to $C$ before $(a, b)$ and removed before $(a, b)$. Furthermore, for all $a^* \in [a, a']$, $f(a^*, b_0) \geq f(a', b_0)  > f(a, b)$, thus each such pair $(a^*, b_0)$ would have been added and removed before $(a, b)$. This is a contradiction. 
\end{proof} 


\if 0 
\benedikt{
Now consider the following adapted greedy algorithm that works over the infinite set of prime powers: Instead of sorting all items by their density we iteratively proceed in steps. 
At each step we consider a limited subset $C$ and take the densest item from $C$ and then update $C$. Let $I=\{(p_i,r_i)\}$ such that $N=\prod_{(p_i,r_i) \in I } p_i^{r_i}$ represents the currently added items at any step of the algorithm. 
The candidate set $C=\{(p_i,{r_i+1}) | (p_i,r_i) \in I \} \cup \{ ( p_{|I|+1},1)\}$ where $p_{|I|+1}$ is the next largest prime that isn't included in $I$. The candidate set represents the next larger prime powers that haven't been added yet. If $\dens$ is indeed monotonic in $(p,r)$ then the remaining prime power with the largest density will be in $C$. }
\fi 

We use the enumeration algorithm of Claim~\ref{claim:enumeration} to implement the greedy algorithm that obtains an upper bound to a generic knapsack problem over the infinite set of items $\textsf{Primes}\times \NN$: 

\begin{equation}
\label{eq:genericknapsackproblem}
\tag{Generic Knapsack Problem}
 \max_{S \subseteq  \textsf{Primes} \times \NN } \sum_{(p,r) \in S} \val(p,r)
\ \text{subject to } \sum_{(p,r) \in S} \weight(p,r) \leq \lambda. 
\end{equation}
 
for any $\val, \weight: \textsf{Primes}\times \NN \rightarrow \mathbb{R}^+$ 
where $\dens(p, r) = \frac{\val(p,r)}{\weight(p,r)}$ is monotonic non-increasing over $r$ for any fixed $p$, and also over $p$ for fixed $r = 1$.  This is presented below as \cref{alg:greedyalg2}. 

\begin{algorithm}[hbt]
Input $\mu \in \NN ,\lambda \in \NN$

\begin{enumerate}  
\item Initialize a max heap $H$ that stores tuples $(p,r,d) \in \PP\times \ZZ \times \RR$ and sorts them by the third value.
 \item Initialize $w\gets 0$ and $v \gets 0$.
\item Push $(\dens(2,1),2,1)$ onto the heap and set $\textsf{pmax} = 2$. 
\item While $w<\lambda$
\item \begin{enumerate}
\item Pop $(p,r,d)$ from $H$.
\item Push $(p,r+1, (\dens(p,{r+1}))$ onto $H$
\item Set $v\gets v+ \val(p,r)$
\item Set $w \gets w +\weight(p,r)  $
\item If $p=\textsf{pmax}$ then set $\textsf{pmax} \gets \mathsf{next\_prime}(p)$ and push $(\dens(\textsf{pmax},1),\textsf{pmax},1)$ onto $H$
\end{enumerate}
\item Output $v$ 
\end{enumerate}

\caption{Greedy algorithm that returns an upper bound to \cref{eq:maximization2} }
\label{alg:greedyalg2}

\end{algorithm}

Moreover, we will show that the density function defined in terms of $\val(p, r) = \log p$, $\weight(p, 1) = -\log I_{1/p}(1,\mu)$, and $\weight(p,r) = \log I_{1/p}(r-1,) - \log I_{1/p}(r,\mu)$ for $r > 1$ satisfies this monotonicity property. The density function in \cref{eq:knapsackproblem} also has this monotonicity property, but we are able to obtain a tighter bound on \cref{eq:maximization} by defining regularized beta function directly instead of the simpler form upper bounds on the regularized beta function that were more useful for deriving the analytical result in \cref{thm:cryptothm}. 

\begin{claim}\label{claim:hybrid1}
 \cref{eq:genericknapsackproblem} with $\dens(p,r) =
\frac{\val(p,r)}{\weight(p,r)}  $, $\val(p, r) = \log p$, and $\weight(p,r) = \log I_{1/p}(r-1,\mu) - \log I_{1/p}(r,\mu)$ is an upper bound to \cref{eq:maximization}. 
\end{claim}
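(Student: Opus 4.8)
The plan is to re-use, essentially verbatim, the "unrolling" argument from the proof of Claim~\ref{claim:hybrid2}: show that any $N$ that is feasible for \cref{eq:maximization} can be converted into a feasible solution of the generic knapsack instance (with the stated $\val$ and $\weight$) whose total value is exactly $\log_2 N$, so that the knapsack optimum dominates $\log_2 t(\lambda,\mu)$, which is the optimum of \cref{eq:maximization}.

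Concretely, I would fix an arbitrary $N$ with $\sum_{(p,r)\in S(N)} -\log_2 I_{\frac{1}{p}}(r,\mu) \le \lambda$ and set $S' := \bigcup_{(p,r)\in S(N)} \{(p,1),(p,2),\dots,(p,r)\}$. Since the primes occurring in $S(N)$ are distinct, $S'$ is a well-defined finite subset of $\textsf{Primes}\times\NN$. I would then compute the value and weight of $S'$. For the value, $\sum_{(p,r)\in S'}\val(p,r) = \sum_{(p,r)\in S(N)} r\log_2 p = \log_2 N$, since $\val(p,j)=\log_2 p$ is constant in $j$. For the weight, I would exploit that $\weight(p,j)=\log_2 I_{\frac{1}{p}}(j-1,\mu)-\log_2 I_{\frac{1}{p}}(j,\mu)$ telescopes in $j$: $\sum_{j=1}^{r}\weight(p,j) = \log_2 I_{\frac{1}{p}}(0,\mu) - \log_2 I_{\frac{1}{p}}(r,\mu) = -\log_2 I_{\frac{1}{p}}(r,\mu)$, using the special value $I_{\frac{1}{p}}(0,\mu)=1$. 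Summing over $S(N)$ gives $\sum_{(p,r)\in S'}\weight(p,r) = \sum_{(p,r)\in S(N)} -\log_2 I_{\frac{1}{p}}(r,\mu) \le \lambda$, so $S'$ is feasible for \cref{eq:genericknapsackproblem}. Hence the knapsack optimum is at least $\log_2 N$; since $N$ was an arbitrary feasible point of \cref{eq:maximization}, the knapsack optimum is at least $\sup_N \log_2 N = \log_2 t(\lambda,\mu)$, which is what "upper bound to \cref{eq:maximization}" means.

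I would also record the two sanity checks needed so that \cref{eq:genericknapsackproblem} is even well-posed with these data: $\val(p,r)=\log_2 p>0$ for every prime $p\ge 2$, and $\weight(p,r)>0$, which holds because $I_{\frac{1}{p}}(r-1,\mu)-I_{\frac{1}{p}}(r,\mu)$ is the probability that a sum of $\mu$ independent geometric variables equals exactly $r-1$, a strictly positive quantity (and in particular $I_{\frac1p}(0,\mu)-I_{\frac1p}(1,\mu)=(1-\tfrac1p)^\mu>0$). The main — and really only — obstacle is bookkeeping: making sure the logarithm base used in $\weight$ matches the one in the constraint of \cref{eq:maximization} so the same bound $\lambda$ transfers, that item indices start at $r=1$ so that the unrolled set $S'$ contains no repeated or spurious $r=0$ pairs, and that $S'$ has no cross-prime collisions. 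There is no deeper difficulty: the statement is the exact analogue of Claim~\ref{claim:hybrid2}, with the cruder weights $\weight_{\mu,\epsilon}$ replaced by the exact marginal weights of the regularized beta function, which only tightens the resulting bound on \cref{eq:maximization}.
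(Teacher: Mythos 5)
Your proposal is correct and follows essentially the same route as the paper, whose proof simply invokes the unrolling argument of Claim~\ref{claim:hybrid2} with the exact marginal weights $\weight(p,r)=\log I_{1/p}(r-1,\mu)-\log I_{1/p}(r,\mu)$; your telescoping computation is exactly that argument spelled out, with the correct observation that here the weights telescope to the constraint of \cref{eq:maximization} directly, so no relaxation step analogous to Claim~\ref{claim:weightbound} is needed.
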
 
\begin{proof} The analysis is the same as in Claim~\ref{claim:hybrid2}, replacing $\margweight_{\mu, \epsilon}$ with the weights defined here.  
\end{proof} 
\begin{claim} For $p\in \PP, r \in \NN$ and $\mu\geq 2 \in \NN$ let $\dens(p,r) =
\frac{\val(p,r)}{\weight(p,r)}$ where $\val(p, r) = \log p$, and $\weight(p,r) = \log I_{1/p}(r-1,\mu) - \log I_{1/p}(r,\mu)$ for $r \geq 1$. Then $\dens(p,r)$ is decreasing in $r$ and $\dens(p, 1)$ is non-increasing in $p$.
\label{claim:nonincreasingdens} 
\end{claim}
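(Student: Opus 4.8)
The plan is to prove the two assertions separately, in each case first passing from $\dens$ to $\weight$: since the numerator $\log p$ of $\dens(p,r)=\tfrac{\log p}{\weight(p,r)}$ does not depend on $r$, "$\dens(p,r)$ decreasing in $r$'' is equivalent to "$\weight(p,r)=\log I_{1/p}(r-1,\mu)-\log I_{1/p}(r,\mu)$ strictly increasing over $r\ge 1$'', and "$\dens(p,1)$ non-increasing in $p$'' is equivalent to "$\weight(p,1)/\log p$ non-decreasing in $p$''.

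\emph{Monotonicity in $r$.} Write $a_r:=I_{1/p}(r,\mu)=\PP[Z\ge r]$ for $Z=\sum_{i=1}^{\mu}G_i$ with the $G_i$ i.i.d.\ geometric of parameter $1-\tfrac1p$ (the negative-binomial interpretation of $I_{1/p}(r,\mu)$ recorded in the remarks after \cref{thm:csz}), and let $p_k:=\PP[Z=k]$; note $a_0=1$ and $\weight(p,r)=\log(a_{r-1}/a_r)$, so it suffices to show strict log-concavity of the survival sequence, $a_{r-1}a_{r+1}<a_r^2$ for every $r\ge1$. I would first observe that $\tfrac{p_k}{p_{k-1}}=\tfrac{\mu+k-1}{k}\cdot\tfrac1p=\bigl(1+\tfrac{\mu-1}{k}\bigr)\tfrac1p$ is strictly decreasing in $k$ (here $\mu\ge2$ is used), so the PMF $(p_k)$ is unimodal and strictly log-concave. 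Then, from $a_{r-1}=a_r+p_{r-1}$ and $a_{r+1}=a_r-p_r$, one has the identity $a_{r-1}a_{r+1}-a_r^2=a_r(p_{r-1}-p_r)-p_{r-1}p_r$: if $p_{r-1}\le p_r$ this is $\le -p_{r-1}p_r<0$, and if $p_{r-1}>p_r$ then all $\rho_k:=p_k/p_{k-1}$ with $k\ge r$ lie in $(0,1)$ and are strictly decreasing, so $\tfrac{a_r}{p_r}=\sum_{j\ge0}\prod_{i=1}^{j}\rho_{r+i}<\sum_{j\ge0}\rho_{r+1}^{\,j}=\tfrac1{1-\rho_{r+1}}<\tfrac1{1-\rho_r}=\tfrac{p_{r-1}}{p_{r-1}-p_r}$, which rearranges exactly to $a_r(p_{r-1}-p_r)<p_{r-1}p_r$. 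This half is routine once set up this way.

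\emph{Monotonicity in $p$.} I would treat $p$ as a real variable $\ge2$ (the claim only requires primes, and the real statement is stronger). Using $1-(1-\tfrac1p)^\mu=\tfrac1p\sum_{j=0}^{\mu-1}(1-\tfrac1p)^j$, put $w:=1-\tfrac1p$ and $S=S(w):=\sum_{j=0}^{\mu-1}w^j=\tfrac{1-w^\mu}{1-w}$, so that $\weight(p,1)=\log p-\log S$ and $\dens(p,1)=\bigl(1-\tfrac{\log S}{\log p}\bigr)^{-1}$. Differentiating in $p$ and simplifying (using $\log p=-\log(1-w)$ and the identity $(1-w)\tfrac{S'(w)}{S(w)}=1-\tfrac{\mu w^{\mu-1}}{S(w)}$ with $S'(w)=\sum_{j=1}^{\mu-1}jw^{j-1}$) shows that $\tfrac{d}{dp}\dens(p,1)\le0$ is equivalent to the single-variable inequality
\[ S(w)\log S(w)\ \ge\ (1-w)\bigl(-\log(1-w)\bigr)\,S'(w)\qquad\text{for all }w\in[\tfrac12,1). \]
For $\mu=2$ this reads $(1+w)\log(1+w)+(1-w)\log(1-w)\ge0$, immediate from convexity of $x\mapsto x\log x$ (apply Jensen at the midpoint $\tfrac{(1+w)+(1-w)}2=1$, where this function vanishes).

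\emph{The main obstacle} is the general-$\mu$ case of that inequality, which is essentially tight: for any fixed $w$, as $\mu\to\infty$ one has $S(w)\to\tfrac1{1-w}$ and $S'(w)\to\tfrac1{(1-w)^2}$, so both sides converge to $\tfrac{-\log(1-w)}{1-w}$, and hence any bound that discards the $w^\mu$ correction terms is too weak. Substituting $S=\tfrac{1-w^\mu}{1-w}$ and clearing denominators, the inequality is equivalent to the compact form $(1-w^\mu)\bigl(-\log(1-w^\mu)\bigr)\le\mu w^{\mu-1}(1-w)\bigl(-\log(1-w)\bigr)$, i.e.\ $\chi(1-w^\mu)\le\mu w^{\mu-1}\chi(1-w)$ with $\chi(u):=-u\log u$. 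I would establish this either (i) by showing the smooth function $g(w):=S\log S-(1-w)(-\log(1-w))S'$ has no negative interior minimum on $[\tfrac12,1)$ — note $g(1^-)=\mu\log\mu>0$ — through a sign analysis of $g'(w)=S'\log\tfrac{S}{1-w}-(1-w)(-\log(1-w))S''$, or (ii) by writing $1-w^\mu=\sum_{k=0}^{\mu-1}w^k(1-w)$ and using a second-order sharpening of the (by itself too lossy) subadditivity inequality $\chi\!\bigl(\sum_k x_k\bigr)\le\sum_k\chi(x_k)$ for the concave $\chi$. Proving either version completes the claim, which supplies precisely the density-monotonicity hypothesis needed by the enumeration procedure of \cref{claim:enumeration} and hence by \cref{alg:greedyalg2}.
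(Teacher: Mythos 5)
Your Part~1 is correct and complete: deriving strict log-concavity of the negative-binomial survival sequence directly from the strictly decreasing PMF ratios $p_k/p_{k-1}=\bigl(1+\tfrac{\mu-1}{k}\bigr)\tfrac1p$, via the identity $a_{r-1}a_{r+1}-a_r^2=a_r(p_{r-1}-p_r)-p_{r-1}p_r$ and the geometric-series bound on $a_r/p_r$, is a self-contained replacement for the paper's appeal to a cited log-concavity theorem for the regularized beta function, and every step checks out.

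The genuine gap is in Part~2. Your reduction is right: $\tfrac{d}{dp}\dens(p,1)\le 0$ is equivalent to $\chi(1-w^\mu)\le \mu w^{\mu-1}\chi(1-w)$ with $\chi(u)=-u\ln u$ and $w=1-\tfrac1p$, which is exactly the paper's inequality $\tfrac{\ln(1-x^\mu)(1-x^\mu)}{\ln(1-x)(1-x)x^{\mu-1}}\le\mu$ after rearranging signs. But you prove it only for $\mu=2$ and leave the general case as two unexecuted sketches (a sign analysis of $g'$, or a sharpened subadditivity bound), so the $p$-monotonicity half of the claim is not established for $\mu\ge 3$. Moreover, the ``main obstacle'' you identify is an artifact of the $S$-form, where both sides tend to the same positive limit as $\mu\to\infty$; in the compact form you already derived, the inequality has ample slack and a two-line elementary proof. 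Since $u\mapsto\tfrac{-\ln(1-u)}{u}=\sum_{k\ge1}\tfrac{u^{k-1}}{k}$ is nondecreasing on $(0,1)$ and $w^\mu\le w$, one gets $-\ln(1-w^\mu)\le w^{\mu-1}\bigl(-\ln(1-w)\bigr)$; multiplying this by the positive bound $1-w^\mu\le\mu(1-w)$ yields $(1-w^\mu)\bigl(-\ln(1-w^\mu)\bigr)\le\mu w^{\mu-1}(1-w)\bigl(-\ln(1-w)\bigr)$ for all $w\in(0,1)$, which is precisely what you need and is essentially how the paper closes this case. So neither of your proposed routes is necessary, but as written your proposal does not prove the second assertion of the claim.
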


\begin{proof} 

\noindent\textbf{Part 1: $\dens(p,r)$ is decreasing in $r$} 

$\dens(p,r) = \frac{\log p}{\weight(p,r)}$ is decreasing in $r$ iff $\weight(p,r)$ is increasing in $r$. 

$$\weight(p,r) = - \log I_{1/p}(r,\mu) + \log I_{1/p}(r-1,\mu) = \log \frac{I_{1/p}(r-1,\mu)}{I_{1/p}(r,\mu)} $$

is decreasing in $r$ if $\frac{I_{1/p}(r-1,\mu)}{I_{1/p}(r,\mu)}$ is decreasing in $r$. This is the case if for all $r$ 
$$\frac{I_{1/p}(r,\mu)}{I_{1/p}(r+1,\mu)}-\frac{I_{1/p}(r-1,\mu)}{I_{1/p}(r,\mu)}>0$$. 
Which is equivalent to showing that 
\begin{equation}\label{eq:monotonecond}
	I_{1/p}(r,\mu)\cdot I_{1/p}(r,\mu)-I_{1/p}(r-1,\mu)\cdot I_{1/p}(r-1,\mu)>0
\end{equation}
The regularized beta function $I_{x}(a,b)$ is log convave for all $b>1$ as shown in \cite{KarpLogConcave}. This implies that for $b>1$ and all $\alpha,\beta>0$ $I_{x}(a+\alpha,b)\cdot I_{x}(a+\beta,b)-I_{x}(a,b)\cdot I_{x}(a+\alpha+\beta,b)>0$. Setting $a=r-1,b=\mu,\alpha=1,\beta=1$ this shows that \cref{eq:monotonecond} holds and $\weight$ is incresing in $r$, and $\dens$ is decreasing in $r$ for all $\mu>1$. 
 \medskip 

\noindent \textbf{Part 2: $\dens(p,1)$ is non-increasing in $p$.}

$$\dens(p,1) = \frac{\log p}{\weight(p,1)} = \frac{\ln p}{-\ln(1 - (1-p^{-1})^\mu)}$$

The derivative $\frac{d}{dp} \dens(p,1)$ is non-negative iff: 

$$-p^{-1} \cdot \ln(1- (1-p^{-1})^\mu) - \frac{\ln p \cdot \mu (1-p^{-1})^{\mu-1} \cdot p^{-2} }{1- (1-p^{-1})^\mu} \geq 0  $$ 

Equivalently: 

$$ - \ln (1- (1- p^{-1})^\mu) -\frac{\ln p \cdot \mu (1- p^{-1})^{\mu-1} p^{-1}}{1 - (1-p^{-1})^\mu} \geq 0 $$ 

Set $x = 1 - \frac{1}{p}$, which increases over the range $(1/2, 1)$ as $p$ increases in the range $[2,\infty)$. The requisite inequality for $x \in (1/2, 1)$ becomes: 

$$- \ln( 1 - x^\mu) + \frac{\ln (1-x) \cdot \mu \cdot x^{\mu-1} (1-x)}{1-x^\mu}  \geq 0$$ 

which, rearranging terms, holds true iff for $x \in (1/2,1)$: 

$$ \frac{\ln(1-x^\mu)(1-x^\mu)}{\ln(1-x)(1-x)x^{\mu-1}}  \leq   \mu  $$

In fact we can show this inequality holds true over all $x \in (0,1)$. Using the inequalities $\ln(1-x^\mu) \leq -x^\mu$ and $-\ln(1-x) \geq x$ we obtain: 

$$\frac{\ln(1-x^\mu)(1-x^\mu)}{\ln(1-x)(1-x)x^{\mu-1}} \leq \frac{x^\mu (1-x^\mu)}{x(1-x)x^{\mu-1}} = \frac{1-x^\mu}{1-x} \leq \mu $$

\end{proof} 
\begin{theorem}[Computational bound]
	Let $k$ be the output of algorithm \cref{alg:greedyalg2} on input $\lambda,\mu$. Then for all $m\in \NN$, all $N\geq 2^k$ and all $\mu$-linear polynomials $f$, coprime with $N$, $\log_2 N \geq t(\lambda,\mu)$ and 
	$$P_{\xvec \gets [0,m)^\mu}[f(\xvec) \equiv 0 \bmod N]\leq 2^{-\lambda}+\frac{\mu}{m}$$
	\end{theorem}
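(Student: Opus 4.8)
The plan is to reduce the statement to a single inequality: the value $k$ returned by \cref{alg:greedyalg2} satisfies $k \ge \log_2 t(\lambda,\mu)$. Once this is established, any $N \ge 2^k$ satisfies $N \ge 2^k \ge t(\lambda,\mu)$ (equivalently $\log_2 N \ge \log_2 t(\lambda,\mu)$), and then the bound $\mathbb{P}_{\xvec\gets[0,m)^\mu}[f(\xvec)\equiv 0\bmod N]\le 2^{-\lambda}+\frac{\mu}{m}$ follows from \cref{thm:csz} by exactly the substitution carried out in the proof of \cref{thm:cryptothm}. To prove $k \ge \log_2 t(\lambda,\mu)$ I would chain two facts. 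First, by \cref{claim:hybrid1} in the form stated for the weights $\weight(p,1)=-\log I_{1/p}(1,\mu)$ and $\weight(p,r)=\log I_{1/p}(r-1,\mu)-\log I_{1/p}(r,\mu)$ for $r\ge 2$ (whose proof is the telescoping/downward-$r$-closure argument of \cref{claim:hybrid2}), the optimum $\log_2 t(\lambda,\mu)$ of \cref{eq:maximization} is at most the optimum of the corresponding \cref{eq:genericknapsackproblem}. Second --- the only part not already contained in the excerpt --- I must show that \cref{alg:greedyalg2} returns an upper bound on that knapsack optimum.

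For this second part, I would first observe that \cref{alg:greedyalg2} is precisely the enumeration procedure of \cref{claim:enumeration} instantiated with $A=\textsf{Primes}$ (least element $2$), $B=\{1,2,3,\dots\}$ (least element $1$), and $f=\dens$: the max-heap $H$ plays the role of the candidate set $C$, each pop returns the surviving pair of largest density, pushing $(p,r+1,\dens(p,r+1))$ is the step that advances $r$, and the $\textsf{pmax}$ bookkeeping is the step that advances to the next prime once the popped prime is the largest seen so far. The monotonicity hypotheses of \cref{claim:enumeration} are supplied by \cref{claim:nonincreasingdens}: $\dens(p,r)$ is decreasing in $r$ for each fixed prime $p$, and $\dens(p,1)$ is non-increasing in $p$. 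In particular the global maximum density is $d_{\max}=\dens(2,1)=\frac{\log 2}{-\log(1-2^{-\mu})}<\infty$. Hence the algorithm pops the pairs in an order $e_1,e_2,\dots$ with non-increasing densities $d_1\ge d_2\ge\cdots$. Since each popped item has value $\val(e_i)=\log p_i\ge\log 2$ while also $\val(e_i)=\dens(e_i)\cdot\weight(e_i)\le d_{\max}\cdot\weight(e_i)$, the partial weights $W_\ell=\sum_{i\le\ell}\weight(e_i)$ obey $W_\ell\ge\frac{\ell\log 2}{d_{\max}}\to\infty$, so the \texttt{while} loop terminates at the first index $\ell$ with $W_\ell\ge\lambda$ and returns $k=\sum_{i\le\ell}\log p_i=:V_\ell$.

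It then remains to verify the standard fractional-knapsack upper bound: for every feasible set $S\subseteq\textsf{Primes}\times\NN$, i.e.\ with $\sum_{(p,r)\in S}\weight(p,r)\le\lambda$, one has $\sum_{(p,r)\in S}\val(p,r)\le V_\ell$. Writing each value as $\val(e)=\dens(e)\cdot\weight(e)$, splitting $S$ into the part lying in $\{e_1,\dots,e_{\ell-1}\}$ and the part lying in $\{e_\ell,e_{\ell+1},\dots\}$, and using $\dens(e)\le d_\ell$ on the latter part, one obtains
$$\sum_{e\in S}\val(e)\ \le\ V_{\ell-1}+d_\ell\bigl(\lambda-W_{\ell-1}\bigr)\ \le\ V_{\ell-1}+d_\ell\cdot\weight(e_\ell)\ =\ V_\ell,$$
where the last inequality uses $\lambda\le W_\ell=W_{\ell-1}+\weight(e_\ell)$ and $W_{\ell-1}<\lambda$. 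Combining this with the first fact gives $k=V_\ell\ge\log_2 t(\lambda,\mu)$, which completes the proof.

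I expect the main obstacle to be the handling of the \emph{infinite} item set $\textsf{Primes}\times\NN$: unlike ordinary knapsack one cannot simply sort all items by density, so the delicate points are (a) certifying, via \cref{claim:enumeration} together with \cref{claim:nonincreasingdens}, that the heap-based procedure really does visit items in non-increasing density order, and (b) the termination argument that only finitely many pops occur before the weight budget $\lambda$ is exhausted; once those are in place the remainder is the textbook greedy-knapsack bound.
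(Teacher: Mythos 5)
Your proposal follows essentially the same route as the paper: identify \cref{alg:greedyalg2} with the enumeration procedure of \cref{claim:enumeration}, invoke \cref{claim:nonincreasingdens} for the required monotonicity, pass through the generic knapsack bound and \cref{claim:hybrid1} to \cref{eq:maximization}, and finish with \cref{thm:csz}. The only difference is that you spell out the ``well-known'' greedy knapsack upper bound (the fractional-knapsack estimate $V_{\ell-1}+d_\ell(\lambda-W_{\ell-1})\le V_\ell$) and the termination of the while loop, which the paper leaves implicit; both of these details are correct.
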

	\begin{proof}
		\cref{alg:greedyalg2} is a greedy enumeration algorithm over pairs $(p,r)$ following the enumeration strategy in \cref{claim:enumeration}. By \cref{claim:nonincreasingdens}, $\dens$ satisfies the monotonicity conditions required for \cref{claim:enumeration} and thus the enumeration algorithm enumerates pairs in order of non-increasing density. Thus, the algorithm outputs an upper bound to \cref{eq:genericknapsackproblem} with density $\dens(p,r) =\frac{\val(p,r)}{\weight(p,r)} $, $\val(p, r) = \log p$, and $\weight(p,r) = \log I_{1/p}(r-1,\mu) - \log I_{1/p}(r,\mu)$. By \cref{claim:hybrid1} this is an upper bound to \cref{eq:maximization} which in turn by \cref{thm:csz} gives a bound on $\log_2 t(\lambda,\mu)$.
	\end{proof}

\if 0 
\begin{claim} \label{claim:monotonic} 
For $\mu \geq 2$ and $\epsilon \geq \log_\mu(2)$, the density function $\dens_{\mu,\epsilon}$ (Definition~\ref{def:density}) 
is monotonic non-increasing in $r \in \NN$ for any fixed $p \in \textsf{Primes}$, and non-increasing in $p \in \textsf{Primes}$ for $r = 1$. 
\end{claim} 
\begin{proof} 

\textbf{Case 1 ($p \geq \mu^{1+\epsilon}$):}
In this case, $p \geq 2\mu$ and thus 
$$ \dens_{\mu,\epsilon}(p, r) = \frac{\margval(p,r)}{\margweight_{\mu,\epsilon}(p, r)}  = \frac{\log_2 p}{\log_2(p) - \log_2(\mu)} = \frac{1}{1 - \log_p(\mu)}$$

which is independent of $r$. Furthermore, since $0 \leq \log_p(\mu) \leq \frac{1}{1+\epsilon} < 1$, the density is finite and \emph{decreasing} towards $1$ as $p$ increases. 

\medskip 

\textbf{Case 2 ($p  < \mu^{1+\epsilon}$, $r >2(1+\epsilon) \cdot \frac{\mu \cdot  \ln \mu}{\ln p}$):} Since $p <\mu^{1+\epsilon}$ and $r >2(1+\epsilon) \cdot \frac{\mu \cdot  \ln \mu}{\ln p}$:

 $$\dens_{\mu,\epsilon}(p, r) = \frac{\margval(p,r)}{\margweight_{\mu,\epsilon}(p, r)} = \frac{\ln p}{\ln p + \mu \ln \frac{r-1}{r}} = \frac{1}{1 - \frac{\mu}{\ln p}\ln\frac{r}{r-1}}$$
 
Since $0 \leq \frac{\mu}{\ln p} \ln \frac{r}{r-1} < 1$ the density is finite and decreasing towards $1$ as $r$ increases. 

\medskip 

\textbf{Case 3 (otherwise)} The weight is 0, so density is constant (infinity). 

\medskip 

It remains to examine the boundaries between cases. Specifically, for fixed $p < \mu^{1+\epsilon}$, as $r$ increases from $1$ the density goes from infinity in Case 3 to finite in Case 2. Similarly, for fixed $r = 1$, as $p$ increases from $p = 2$ to $p \geq \mu^{1+\epsilon}$ the density goes from infinity in Case 3 to finite in Case 1.  
\end{proof} 
 
\begin{algorithm}[htb]
\small
	\begin{enumerate}
	
\item Let $\margdens_\mu^*(p,r) := \frac{\margval(p,r)}{\margweight_\mu^*(p,r)}=\frac{\log_2(p)}{\log_2( I_{\frac{1}{p}}(r,\mu))-\log_2({ I_{\frac{1}{p}}(r-1,\mu)})}$
\item Let $\dens_\mu^*(p,r):= \frac{\val(p,r)}{\weight_\mu^*(p,r)}=\frac{r\log_2(p)}{I_{\frac{1}{p}}(r,\mu)}$
\item Let $\densbound(p,r):=\begin{cases}
 		\frac{r \log_2(p)}{r \log_2(p)-\mu \log_2(r)} \text{ if } r \geq 2\mu\\
	\frac{ \log_2(p)}{ \log_2(p)-\log_2(\mu)} \text{ if } p\geq 2\mu \text{ and } r < 2\mu\\
		\infty \text{ otherwise}
	\end{cases}$ 
\item Run \cref{alg:greedyalg2} using density function $\margdens_\mu^*(p,r)$
\item Let $v$ the output of  \cref{alg:greedyalg2} and let $N:= 2^v$
\item Let $d^*=\min_{(p,r) \in T(N)} \margdens_\mu^*(p,r)$, $T(N)$ is the set of prime powers divisors of $N$.
	\item Iterate through all primes starting from $p=2$:
	\begin{enumerate}
	\item If $d^*\geq \densbound(p,1)$ break

	\item Iterate through $r$ starting from $r=k$ such that $p^k \in S(N)$; $S(N)$ is the set of prime factors and their multiplicity for $N$
		\begin{enumerate}
		\item If $ \densbound(p, r)\leq d^*$ break
	\item If $\dens_\mu(p,r)^*>d^* $ output fail
	\end{enumerate}
	\end{enumerate}
	\item Output $v$
\end{enumerate} 
\caption{Adaptation of \cref{alg:greedyalg2} to  $\dens^*_\mu$. The algorithm uses $\dens_{\mu,\log_\mu(2)}$ to ensure the correctness of \cref{alg:greedyalg2}}
\label{alg:ensurecorrectness2}

\end{algorithm}
\begin{claim}\label{claim:alg2}
If \cref{alg:ensurecorrectness2} does either outputs fail or an upper bound to \cref{eq:maximization}. 

\end{claim}
\begin{proof}

For this proof we will abuse notation a bit and define the value $\val$, weight $\weight_\mu^*$ and density $\dens$ for a set $T$ in the following way:
\begin{enumerate}
	\item  $\val(T):=\sum_{ e \in T} \val(e)$
	\item $\weight(T):=\sum_{e \in T} \weight_\mu^*(e)$
	\item $\dens(T):=\frac{\val(T)}{\weight_\mu^*(T)}$
\end{enumerate}
We also define $T(N)=\{(p,r)\in \PP\times \ZZ | (p,r)|N\}$, i.e. the set of prime power divisors of $N$, whereas $S(N)$ is the set of prime factors and their multiplicity as defined in \cref{sec:proofinverselcsz}.
Note that $\weight_\mu^*(S(N))=\margweight_\mu^*(T(N))$ and similarly for $\val$ and $\margval$ as well as $\dens_\mu^*$ and $\margdens_\mu^*$.


	We prove the claim by contradiction: Let $v$ be the output of \cref{alg:greedyalg2} using density function $\margdens_\mu^*$ and $N=2^v$. Assume that \cref{alg:ensurecorrectness2} does not output fail but $v\in \RR$ such that $v$ is not an upper bound to \cref{eq:maximization}. That is there exists a $v'$ and $N'=2^{v'}$ such that $v'>v$ and $\margweight_\mu^*(T(N')) \leq \lambda$. \cref{alg:greedyalg2} outputs $v$ such that $\margweight_\mu^*(T(N)) \geq \lambda$. 
	
	We have that 
\begin{align*}
\margweight_\mu^*(T(N')/T(N))&=\margweight_\mu^*(T(N'))-\margweight_\mu^*(T(N')\cap T(N))\\
&\leq \lambda -\margweight_\mu^*(T(N')\cap T(N))\leq \margweight^*(T(N))-\margweight_\mu^*(T(N')\cap T(N))\\
&=\margweight_\mu(T(N)/T(N'))
\end{align*}
and 
\begin{align*}
\margval(T(N')/T(N))&=\margval(T(N'))-\margval(T(N')\cap T(N))\\
&= v' -\margval(T(N')\cap T(N))> v-\margval(T(N')\cap T(N))\\
&=\margval(T(N)/T(N'))
\end{align*}
	This implies that 
	$$\margdens_\mu^*(T(N')/T(N)) = \frac{\margval(T(N')/T(N)) }{\margweight_\mu^* (T(N')/T(N))}> \frac{\margval(T(N)/T(N')) }{\margweight_\mu^* (T(N)/T(N'))} \geq \margdens_\mu^* (T(N)/T(N'))$$
	
	Now further, we show that this implies that there exists $(p^*,r^*)\in \{(p,r) \in  S(N') |p^r \nmid N \}$ such that $\dens_\mu^*(p',r')> d^*= \min_{(p,r) \in T(N)} \margdens_\mu^*(p,r) $.
	We show this by contradiction: Assume that $\forall (p^*,r^*)\in \{(p,r) \in  S(N')| p^r \nmid N  \},\;  \dens_\mu^*(p^*,r^*)\leq d^*=\min_{(p,r)\in T(N)} \margdens_\mu^*(p,r)$.
	Note that in this case: 
	\begin{align*}
	\margval(T(N')/T(N))&= \sum_{(p,r)\in S(N')} \margval(p,r)\\
	&= \sum_{(p,r',r) | (p,r) \in S(N') \wedge (p,r')\in S(N) \wedge r'>r}  \val(p,r')-\val(p,r)\\
	&= \sum_{(p,r',r) | (p,r) \in S(N') \wedge (p,r')\in S(N) \wedge r'>r} \left( \dens(p,r')\cdot \weight(p,r')  -\sum_{i=0}^r \margdens(p,i)\cdot \margweight(p,i)\right)\\
	&\leq d^* \cdot  \margweight_\mu^*(T(N')/T(N)) 
	\end{align*}
	This implies that 
	$$\margdens_\mu^*(T(N')/T(N))=\frac{\margval(T(N')/T(N))}{\margweight_\mu^*(T(N')/T(N))}\leq \frac{ d^* \margweight_\mu^*(T(N')/T(N))}{ \margweight_\mu^*(T(N')/T(N))}=d^*$$
	
	We also have that 
	$$\margdens_\mu^*(T(N)/T(N'))=\sum_{(p,r) \in T(N)/T(N')}   \frac{\margdens_\mu^*(p,r) \cdot \margweight_\mu^*(p,r)}{\margweight_\mu^*(T(N)/T(N'))}\geq d^* \cdot  \frac{\sum_{(p,r) \in T(N)/T(N')} \margweight_\mu^*(p,r)}{\margweight_\mu^*(T(N)/T(N'))}= d^*$$
This is a contradiction as it implies that $\margdens_\mu^*(T(N)/T(N'))\geq \margdens_\mu^*(T(N')/T(N))$. 
We therefore have that $\exists (p^*,r^*)\in \{(p,r) \in  S(N') |p^r \nmid N  \}$ such that $\dens_\mu^*(p^*,r^*)> d^*= \min_{(p,r) \in T(N)} \margdens_\mu^*(p,r) $.

	 In order for ${p^*}^{r^*}$ to not divide $N$ it must not have been considered by the greedy \cref{alg:greedyalg2} and also not by \cref{alg:ensurecorrectness2}. Otherwise it would have either been selected by \cref{alg:greedyalg2} or \cref{alg:ensurecorrectness2} had output fail. Further, it implies that there exists $(p',r')$ such that either $r'=1$ and $p'<p^*$ or $p'=p^*$ and $r'<r^*$ and such that $\densbound(p',r')\leq d^*= \min_{(p,r) \in T(N)} \margdens^*(p,r)$.  This, however, is a contradiction as by \cref{claim:monotonic} $\densbound$ is monotonically non-increasing in $r$ for any $p$ and in $p$ for $r=1$ and because by \cref{claim:hybrid1} for all $(p,r)$ $\densbound(p,r)\geq \dens^*_\mu(p,r)$. Therefore, 
	   $$\dens_\mu^*(p^*,r^*) \leq  \densbound(p^*,r^*) \leq \densbound(p',r') \leq d^* $$ which contradicts the hypothesis. 	   
	 	If \cref{alg:ensurecorrectness2} does not output fail it will, therefore, output an upper bound to \cref{eq:knapsackproblem2}.
	   \end{proof}

\begin{theorem}[Computational Cryptographers Lemma]
Let $v$ be the output of \cref{alg:ensurecorrectness2} on input $(\mu,\lambda) \in \mathbb{Z}\times \mathbb{Z}$ for any $\mu $ and $\lambda \in \ZZ$.  If $v$ is not fail then we have that any integer $N\geq 2^v$ and for any $\mu$-linear polynomial $f$ that is coprime with $f$
	$$\mathbb{P}_{x\gets [0,m)^\mu} [f(x)\equiv 0 \bmod N]\leq 2^{-\lambda}+\frac{\mu}{m}$$.
\end{theorem}
\begin{proof}
	The proof directly follows from \cref{claim:alg2},\cref{claim:hybrid1} and \cref{thm:csz}
	\end{proof}
	
	\fi

\subsection{Computational Results}
Using \cref{alg:greedyalg2} we computed analytical bounds for all $\mu\in(1,50)$ for different values of $\lambda$. 
The precise bound for $\mu=20$ and $\lambda=120$ is 
\begin{dmath*}2^v=2^{36} \cdot 3^{20} \cdot 5^{11} \cdot 7^{8} \cdot 11^{5} \cdot 13^{5} \cdot 17^{4} \cdot 19^{3} \cdot 23^{3} \cdot 29^{2} \cdot 31^{2} \cdot 37^{2} \cdot 41^{2} \cdot 43^{2} \cdot 47^{2} \cdot 53^{2} \cdot 59 \cdot 61 \cdot 67 \cdot 71 \cdot 73 \cdot 79 \cdot 83 \cdot 89 \cdot 97 \cdot 101 \cdot 103 \cdot 107 \cdot 109 \cdot 113 \cdot 127 \cdot 131 \cdot 137 \cdot 139 \cdot 149 \cdot 151 \cdot 157 \cdot 163\end{dmath*}
Other results are presented in their logarithmic form in \cref{tab:expresults}. The results are significantly tighter than the analytical \cref{thm:cryptothm}. For $n=20$ and $\lambda=120$ the analytical theorem gives a value for $\log_2(N)$ of $3839$ vs the computational which is $416$.  
We also provide the open-source Python implementation of the algorithm on Github\footnote{\url{https://github.com/bbuenz/Composite-Schwartz-Zippel}}.  
\begin{table}[h!t]

\begin{mdframed}

\centering
\begin{tabular}{l||l|l|l|l}

$\mu$ & $\lambda=40$ & $\lambda=100$ & $\lambda=120$ & $\lambda=240$\\
\hline
\hline
1 & 40 & 100 & 120 & 240\\
2 & 57 & 130 & 156 & 290\\
3 & 67 & 148 & 175 & 328\\
4 & 79 & 169 & 197 & 359\\
5 & 86 & 187 & 212 & 386\\
6 & 97 & 200 & 234 & 415\\
7 & 107 & 214 & 244 & 435\\
8 & 113 & 227 & 260 & 459\\
9 & 122 & 237 & 277 & 483\\
10 & 133 & 252 & 289 & 500\\
11 & 139 & 263 & 301 & 523\\
12 & 148 & 276 & 315 & 540\\
13 & 152 & 291 & 331 & 565\\
14 & 160 & 304 & 344 & 576\\
15 & 168 & 314 & 354 & 600\\
16 & 178 & 323 & 366 & 616\\
17 & 186 & 335 & 381 & 634\\
18 & 193 & 347 & 391 & 653\\
19 & 198 & 356 & 407 & 664\\
20 & 207 & 368 & 416 & 679\\
21 & 216 & 378 & 429 & 695\\
22 & 222 & 389 & 437 & 718\\
23 & 228 & 402 & 448 & 732\\
24 & 233 & 411 & 464 & 749\\
25 & 241 & 420 & 472 & 758\\
26 & 248 & 432 & 481 & 772\\
27 & 256 & 438 & 492 & 792\\
28 & 264 & 452 & 506 & 806\\
29 & 275 & 460 & 516 & 820\\
30 & 278 & 469 & 527 & 831\\
\hline
50 & 419 & 662 & 736 & 1105
\end{tabular}	
\end{mdframed}

\caption{Computationally determined values of $t(\lambda, \mu)$ such that for all $N\geq t(\lambda, \mu)$, $\mathbb{P}_{\mathbf{x} \gets [0,m)^\mu} [f(\mathbf{x})\equiv 0 \bmod N]\leq 2^{-\lambda}+\frac{\mu}{m}$ for different $\mu$ and different $\lambda$}
\label{tab:expresults}

\end{table}

\pagebreak
\printbibliography

\end{document}